\newtheoremstyle{ieeestyle}
  {.5em plus .2em minus .2em}
  {.5em plus .2em minus .2em}
  {\itshape}
  {}
  {\bfseries}
  {:}
  { }
  {\thmname{#1}\ \thmnumber{#2}\if\relax\detokenize{#3}\relax\else\ (#3)\fi}
\newtheoremstyle{ieeedef}
  {.5em plus .2em minus .2em}
  {.5em plus .2em minus .2em}
  {\normalfont}
  {}
  {\bfseries}
  {:}
  { }
  {\thmname{#1}\ \thmnumber{#2}\ (\thmnote{#3})}
\newtheoremstyle{ieeeremark}
  {.5em plus .2em minus .2em}
  {.5em plus .2em minus .2em}
  {\normalfont}
  {}
  {\bfseries}
  {:}
  { }
  {\thmname{#1}\ \thmnumber{#2}\if\relax\detokenize{#3}\relax\else\ (#3)\fi}
\theoremstyle{ieeestyle}
\newtheorem{theorem}{Theorem}
\newtheorem{lemma}{Lemma}
\theoremstyle{ieeedef}
\newtheorem{definition}{Definition}
\theoremstyle{ieeeremark}
\newtheorem{remark}{Remark}
\def\BibTeX{{\rm B\kern-.05em{\sc i\kern-.025em b}\kern-.08em
    T\kern-.1667em\lower.7ex\hbox{E}\kern-.125emX}}
\begin{document}
\title{CT-ESKF: A General Framework of Covariance Transformation-Based Error-State Kalman Filter}
\author{Jiale Han, Wei Ouyang, \IEEEmembership{Member, IEEE}, Maoran Zhu, \IEEEmembership{Member, IEEE}, and Yuanxin Wu, \IEEEmembership{Senior Member, IEEE}
\thanks{This work was supported in part by National Key R\&D Program under Grant 2022YFB3903802 and in part by National Natural Science Foundation under Grant 62303310, Grant 62273228, and Grant 62403315. (Corresponding author: Yuanxin Wu.)}
\thanks{Jiale Han, Maoran Zhu, and Yuanxin Wu are with the Shanghai Key Lab of Navigation and Location Service, School of Automatic and Intelligent Sensing, Shanghai Jiao Tong University, Shanghai 200240, China (e-mails: jhan04573@gmail.com, zhumaoran@sjtu.edu.cn, yuanx\_wu@hotmail.com). }
\thanks{Wei Ouyang is with the College of Surveying and Geo-Informatics, Tongji University, Shanghai 200092, China (e-mail: ywoulife@tongji.edu.cn).}
}

\maketitle

\begin{abstract}
Invariant extended Kalman filter (InEKF) possesses excellent trajectory-independent property and better consistency compared to conventional extended Kalman filter (EKF). However, when applied to scenarios involving both global-frame and body-frame observations, InEKF may fail to preserve its trajectory-independent property. This work introduces the concept of equivalence between error states and covariance matrices among different error-state Kalman filters, and shows that although InEKF exhibits trajectory independence, its covariance propagation is actually equivalent to EKF. A covariance transformation-based error-state Kalman filter (CT-ESKF) framework is proposed that unifies various error-state Kalman filtering algorithms. The framework gives birth to novel filtering algorithms that demonstrate improved performance in integrated navigation systems that incorporate both global and body-frame observations. Experimental results show that the EKF with covariance transformation outperforms both InEKF and original EKF in a representative INS/GNSS/Odometer integrated navigation system.
\end{abstract}

\begin{IEEEkeywords}
Covariance transformation, covariance switch, error-state Kalman filter, integrated navigation.
\end{IEEEkeywords}

\section{Introduction}
\label{sec:introduction}
\IEEEPARstart{T}{he} extended Kalman filter (EKF) has been widely applied to various nonlinear state estimation problems, including integrated navigation systems and robotic platforms, and becomes a standard algorithm in engineering applications~\cite{kalman1960new,schmidt1981kalman, niu2019data, zhu2021f2imu, liu2020simple}. When applied to attitude estimation problems, the EKF or the more appropriately termed error-state Kalman filter (ESKF) typically represents attitude on the special orthogonal group ($\mathbb{SO}(3)$) or quaternion, while other states such as velocity and position are defined in vector spaces~\cite{markley2003attitude, groves2015principles}. However, the error representation does not fully account for the coupling between attitude, velocity, and position errors, nor does it adequately reflect the intrinsic geometric structure of the estimation problem, which hinders to achieve optimal estimation performance~\cite{bonnabel2008symmetry, andrle2015attitude, wang2019consistent, chang2019strapdown}. As a result, the EKF often suffers from inconsistency and apparent drift in applications without absolute observations, such as visual-inertial odometry (VIO)~\cite{sun2018robust, clement2015battle}, LiDAR-inertial odometry (LIO)~\cite{he2023point} and simultaneous localization and mapping (SLAM)~\cite{barrau2015ekf}.

In contrast, the invariant extended Kalman filter (InEKF) represents the attitude, velocity, and position on the group of double direct isometries ($\mathbb{SE}_2(3)$), and replaces the additive errors of velocity and position used in traditional EKF with multiplicative errors defined on matrix Lie groups, thereby fully accounting for the geometric structure of the estimation problem~\cite{barrau2016invariant}. Due to its group-affine property, InEKF exhibits strong consistency and local convergence~\cite{barrau2016invariant, chang2022log} and has been widely applied in various fields, including VIO~\cite{wu2017invariant, heo2018consistent, hua2023piekf, du2024sp}, LIO~\cite{zhang2024si, brossard2021associating}, initial alignment~\cite{chang2021strapdown, cui2024enhanced}, and integrated navigation~\cite{hwang2022novel, cui2021lie, tang2022invariant, luo2023filter, luo2023matrix, du2024novel}. InEKF possesses the trajectory-independent property, i.e., its system and observation matrices are mostly independent of the estimated trajectory, whereas in EKF the system matrix is trajectory-dependent. It is commonly believed that InEKF diminishes linearization errors caused by erroneous state and thus provides superior consistency in covariance propagation ~\cite{zhang2024si, luo2023filter}, particularly under large initialization errors.

InEKF can be categorized as left-invariant (L-InEKF) or right-invariant (R-InEKF) depending on how the group error is defined~\cite{barrau2016invariant, hartley2019contact}. Experiences indicate that observations in the global frame are more suitable for left-invariant error, while observations in the body frame prefer the right-invariant error~\cite{barrau2016invariant, hartley2019contact, chang2021inertial}. When L-InEKF is used in systems containing body-frame observations, or when R-InEKF is applied to systems with global-frame observations, inconsistency issues would typically arise and lead to degraded estimation accuracy in heterogeneous sensor configurations~\cite{barrau2016invariant, han2024covariance}, consisting of global/body-hybrid frame sensors, such as a sensor suite containing inertial measurement unit (IMU), LiDAR, odometer (ODO), camera, and global navigation satellite system (GNSS) receiver~\cite{cui2021lie, chiang2019assessment, zhang2021gnss}. To address this problem in InEKF, the concept of covariance switch has been employed~\cite{hartley2019contact, hwang2022novel, van2020invariant, Han2024}. Specifically, during the Kalman update, the current error state was transformed into an alternative error state, so as to ensure consistency between the process and observation models~\cite{hartley2019contact, Han2024}. In~\cite{hwang2022novel}, a federated InEKF was proposed using covariance switch to fuse state updates from two distinct observation models via a least-squares solution. Reference~\cite{van2020invariant} handled heterogeneous scenarios by using a similar approach in smoothing.

Theoretical analyses in our previous works~\cite{han2024covariance, Han2024} showed that covariance switch in InEKF significantly improves convergence speed in integrated navigation systems. Specifically, the work~\cite{Han2024} indicated that the idea of covariance switch can be directly extended to other error-state Kalman filters. Quite recently, Chen~\cite{chen2024visual} demonstrated the effect of covariance switch on consistency improvement in visual-inertial navigation systems. The work~\cite{hao2024consistency} used a similar idea in multi-robot cooperative localization and achieved promising localization performance. However, these studies lack a systematic investigation of the underlying mechanism by which covariance switch contributes to performance improvement.

Deepening our previous work~\cite{Han2024}, the current paper throws lights on an interesting interaction between geometryic error representation and covariance, namely, all ESKFs could be transformed to each other by appropriate covariance transformations. In summary, the main contributions of this paper are as follows:
\begin{enumerate}
    \item A novel covariance transformation-based error-state Kalman filter framework is proposed. The differences between covariance transformation and covariance switch are systematically investigated.
    
    \item The concept of equivalence between error states and covariance matrices of various ESKFs is introduced, and applied to study their covariance propagation.

    \item Theoretical analyses and real-world experiments show that incorporating covariance transformation into the EKF significantly enhances convergence speed under large initial attitude errors in multi-sensor fusion systems.
\end{enumerate}
\section{Fundamentals of ESKF}\label{sec:filter_Preliminary}
In practical state estimation problems, the system dynamics are typically continuous, whereas observations are obtained at discrete time instants. As a result, the estimation problem is generally modeled as a continuous-discrete nonlinear system~\cite{crassidis2004optimal}
\begin{equation}
    \dot{\mathbf{x}}(t) = \mathbf{f}(\mathbf{x}(t), \mathbf{u}(t), t) + \mathbf{G}(t)\mathbf{w}(t), \quad t \in [t_0, t_{end}],
\end{equation}
\begin{equation}
    \tilde{\mathbf{y}}(t_k) = \mathbf{h}(\mathbf{x}(t_k), t_k) + \mathbf{v}(t_k), \quad t_k \in [t_0, t_{end}],
\end{equation}
where $t_0$ and $t_{end}$ denote the initial and terminal time, respectively, and $t_k$ represents the time instants at which observations are acquired. In this paper, $(\hat{\cdot})$ and $(\tilde{\cdot})$ denote the estimated and measured values, respectively. The variable $\mathbf{x}(t)$ denotes the time-varying system state, $\mathbf{u}(t)$ is the system input, and $\mathbf{G}(t)$ is the noise driving matrix. The function $\mathbf{f}(\cdot)$ defines the system dynamics. The variable $\tilde{\mathbf{y}}(t_k)$ represents the observation, and $\mathbf{h}(\cdot)$ denotes the nonlinear observation function. The continuous process noise $\mathbf{w}(t)$ is defined with covariance $\mathbf{E}\{\mathbf{w}(t) \, \mathbf{w}(\tau)\} = \mathbf{Q}(t) \delta(t - \tau)$, where $\delta(\cdot)$ is the Dirac’s delta function. The observation noise $\mathbf{v}(t_k)$ is assumed to be zero-mean Gaussian noise with covariance $\mathbf{R}(t_k)$, where $\mathbf{R}(t_k) = \mathbf{E}\{\mathbf{v}(t_k) \, \mathbf{v}^T(t_k)\}$.

State estimation methods based on the Kalman filtering typically consist of two stages: the propagation step and the update step~\cite{kalman1960new, schmidt1981kalman}. The propagation step describes the evolution of the estimated system state and its associated uncertainty over time. The propagation model of the estimated state is given by~\cite{crassidis2004optimal}
\begin{equation}
    \dot{\hat{\mathbf{x}}}(t) = \mathbf{f}(\hat{\mathbf{x}}(t), \mathbf{u}(t), t), \quad t \in [t_0, t_{end}].
\end{equation}

EKF is a commonly used linearization-based approach, which assumes that the current estimated state is close to the true system state. The difference between the estimated state and the true state is defined as the error state,
\begin{equation}
    \label{eq:delta_x_definition}
    \delta \mathbf{x} = \hat{\mathbf{x}} - \mathbf{x},
\end{equation}
\begin{equation}
\begin{aligned}
    \delta \dot{\mathbf{x}}(t) &= \mathbf{f}(\hat{\mathbf{x}}(t), \mathbf{u}(t), t) - \mathbf{f}(\mathbf{x}(t), \mathbf{u}(t), t) - \mathbf{G}(t) \mathbf{w}(t)\\
                               &\approx \mathbf{F}(t) \delta \mathbf{x}(t) - \mathbf{G}(t) \mathbf{w}(t),
\end{aligned}
\end{equation}
where $\mathbf{F}(t)$ is the Jacobian of the system function $\mathbf{f}(\cdot)$ with respect to $\mathbf{x}$, evaluated at the estimated state $\hat{\mathbf{x}}(t)$
\begin{equation}
    \left. \mathbf{F}(t) = \frac{\partial \mathbf{f}}{\partial \mathbf{x}} \right |_{\hat{\mathbf{x}}(t), \mathbf{u}(t)}.
\end{equation}
The system uncertainty $\mathbf{P}$ is defined as
\begin{equation}
    \mathbf{P} = \mathbf{E}[\delta \mathbf{x}(t) \; \delta \mathbf{x}^T(t)],
\end{equation}
and its evolution is governed by
\begin{equation}
    \dot{\mathbf{P}}(t) = \mathbf{F}(t) \mathbf{P}(t) + \mathbf{P}(t) \mathbf{F}^T(t) + \mathbf{G}(t) \mathbf{Q}(t) \mathbf{G}^T(t), \; t \in [t_0, t_{end}].
\end{equation}
 When observations are available at time $t_k$, the Kalman update is performed using the innovation $\delta \mathbf{z}$ and observation matrix $\mathbf{H}$ is computed as follows
\begin{equation}
    \label{eq:inov_all}
    \delta \mathbf{z} = \hat{\mathbf{y}}(t_k) - \tilde{\mathbf{y}}(t_k),
\end{equation}
\begin{equation}
    \left. \mathbf{H}(t_k) = \frac{\partial \mathbf{h}}{\partial \mathbf{x}} \right|_{\hat{\mathbf{x}}(t_k), \mathbf{u}(t_k)}.
\end{equation}

The Kalman gain $\mathbf{K}$ for the continuous-discrete Kalman filtering algorithm is computed by
\begin{equation}
    \mathbf{K}(t_k) = \mathbf{P}^-(t_k) \mathbf{H}^T(t_k) \left[\mathbf{H}(t_k) \mathbf{P}^-(t_k) \mathbf{H}^T(t_k) + \mathbf{R}(t_k) \right]^{-1}.
\end{equation}
Based on $\mathbf{K}$, the error state and covariance are updated as
\begin{equation}
    \delta \mathbf{x}^+(t_k) = \delta \mathbf{x}^-(t_k) + \mathbf{K}(t_k) \delta \mathbf{z},
\end{equation}
\begin{equation}
    \mathbf{P}^+(t_k) = \mathbf{P}^-(t_k) - \mathbf{K}(t_k) \mathbf{H}(t_k) \mathbf{P}^-(t_k).
\end{equation}
In this paper, $(\cdot)^-$ and $(\cdot)^+$ denote the predicted and updated values, respectively.

The propagation of the system state $\mathbf{x}$, the error state $\delta \mathbf{x}$, and the covariance matrix $\mathbf{P}$ in an arbitrary interval $[t_s, t_f]$ is computed as follows~\cite{groves2015principles}
\begin{equation}
    \label{eq:x_prop}
    \hat{\mathbf{x}}^-(t_f) = \hat{\mathbf{x}}^+(t_s) + \int_{t_s}^{t_f} \mathbf{f}(\hat{\mathbf{x}}(t'), \mathbf{u}(t'), t') \, d t',
\end{equation}
\begin{equation}
    \delta \mathbf{x}^-(t_f) = \delta \mathbf{x}^+(t_s) + \int_{t_s}^{t_f} \mathbf{F}(t') \delta \mathbf{x}(t') - \mathbf{G}(t') \mathbf{w}(t')\, d t',
\end{equation}
\begin{equation}
\begin{aligned}
    \mathbf{P}^-(t_f) = &\mathbf{P}^+(t_s) + \int_{t_s}^{t_f} \mathbf{F}(t') \mathbf{P}(t') + \mathbf{P}(t') \mathbf{F}^T(t') \\
                        &+ \mathbf{G}(t') \mathbf{Q}(t') \mathbf{G}^T(t') \, d t'.
\end{aligned}
\end{equation}
\section{Relationship Between Error-State Representations} \label{sec:error_state_relation}
Consider the system state $\mathbf{x}$ associated with two different definitions of error states, denoted as $\boldsymbol{\xi}_a$ and $\boldsymbol{\xi}_b$. The corresponding linearized differential equations are given by
\begin{equation}
    \label{eq:diff_state_a}
    \dot{\boldsymbol{\xi}}_a = \mathbf{F}_a \boldsymbol{\xi}_a + \mathbf{G}_a \mathbf{w},
\end{equation}
\begin{equation}
    \label{eq:diff_state_b}
    \dot{\boldsymbol{\xi}}_b = \mathbf{F}_b \boldsymbol{\xi}_b + \mathbf{G}_b \mathbf{w},
\end{equation}
where $\mathbf{F}_a$ and $\mathbf{F}_b$ are the system matrices and $\mathbf{G}_a$ and $\mathbf{G}_b$ are the corresponding noise driving matrices. It is important to note that the process noise $\mathbf{w}$ remains the same up to first-order approximation, and only the noise driving matrices differ. Hereafter, ESKF(a) and ESKF(b) denote Kalman filtering algorithms with error states $\boldsymbol{\xi}_a$ and $\boldsymbol{\xi}_b$, respectively.

If an invertible matrix $\mathbf{A}(\hat{\mathbf{x}})$ exists between $\boldsymbol{\xi}_a$ and $\boldsymbol{\xi}_b$, i.e.,
\begin{equation}
    \label{eq:err_relation}
    \boldsymbol{\xi}_a = \mathbf{A}(\hat{\mathbf{x}}) \boldsymbol{\xi}_b,
\end{equation}
\begin{equation}
        \mathbf{A}(\hat{\mathbf{x}}) \boldsymbol{\xi}_b = \mathbf{A}(\mathbf{x} + \delta \mathbf{x}) \boldsymbol{\xi}_b \approx \mathbf{A}(\mathbf{x}) \boldsymbol{\xi}_b.
\end{equation}
\begin{remark}
    The matrix $\mathbf{A}$ may depend on either the estimated state $\hat{\mathbf{x}}$ or the true state $\mathbf{x}$, since higher-order error terms are typically neglected during first-order linearization.
\end{remark}

Taking the time derivative of~\eqref{eq:err_relation} yields
\begin{equation}
\begin{aligned}
    \label{eq:diff_relation1}
    \dot{\boldsymbol{\xi}}_a &= \dot{\mathbf{A}}(\hat{\mathbf{x}}) \boldsymbol{\xi}_b + \mathbf{A}(\hat{\mathbf{x}}) \dot{\boldsymbol{\xi}}_b \\
                             &= \dot{\mathbf{A}}(\hat{\mathbf{x}}) \boldsymbol{\xi}_b + \mathbf{A}(\hat{\mathbf{x}}) \left(\mathbf{F}_b \boldsymbol{\xi}_b + \mathbf{G}_b \mathbf{w}\right) \\
                             &= \left(\dot{\mathbf{A}}(\hat{\mathbf{x}}) + \mathbf{A}(\hat{\mathbf{x}}) \mathbf{F}_b \right) \boldsymbol{\xi}_b + \mathbf{A}(\hat{\mathbf{x}}) \mathbf{G}_b \mathbf{w}.
\end{aligned}
\end{equation}
Substituting \eqref{eq:err_relation} into \eqref{eq:diff_state_a} yields
\begin{equation}
    \label{eq:diff_relation2}
    \dot{\boldsymbol{\xi}}_a = \mathbf{F}_a \mathbf{A}(\hat{\mathbf{x}}) \boldsymbol{\xi}_b + \mathbf{G}_a \mathbf{w}.
\end{equation}
Combining \eqref{eq:diff_relation1} and \eqref{eq:diff_relation2} gives the relationships between the system matrices and noise driving matrices corresponding to different error states
\begin{equation}
    \label{eq:F_relation}
    \mathbf{F}_a \mathbf{A}(\hat{\mathbf{x}}) = \dot{\mathbf{A}}(\hat{\mathbf{x}}) + \mathbf{A}(\hat{\mathbf{x}}) \mathbf{F}_b,
\end{equation}
\begin{equation}
    \label{eq:G_relation}
    \mathbf{G}_a = \mathbf{A}(\hat{\mathbf{x}}) \mathbf{G}_b.
\end{equation}
The observation matrices $\mathbf{H}_a$ and $\mathbf{H}_b$, associated respectively with $\boldsymbol{\xi}_a$ and $\boldsymbol{\xi}_b$, satisfy
\begin{equation}
    \delta \mathbf{z} = \mathbf{H}_a \boldsymbol{\xi}_a^- + \mathbf{v}= \mathbf{H}_b \boldsymbol{\xi}_b^- + \mathbf{v},
\end{equation}
and the transformation between $\mathbf{H}_a$ and $\mathbf{H}_b$ is given by
\begin{equation}
    \label{eq:H_relation}
    \mathbf{H}_b = \mathbf{H}_a \mathbf{A}(\hat{\mathbf{x}}).
\end{equation}
\begin{remark}
    It is noteworthy that in both ESKF(a) and ESKF(b), if the predicted system states at time $t_k$ are identical, then for the same observation, the innovation $\delta \mathbf{z}$ is computed according to~\eqref{eq:inov_all} using the same predicted $\hat{\mathbf{y}}(t_k)$ and the actual observation $\tilde{\mathbf{y}}(t_k)$. This guarantees that the corresponding observation matrices satisfy the relation in~\eqref{eq:H_relation}. With this observation, the observation noise covariance $\mathbf{R}$ is the same across different error state definitions, representing only the uncertainty of the observations and being independent of the chosen error state.
\end{remark}
We will see in the sequel that the error states $\boldsymbol{\xi}_a$ and $\boldsymbol{\xi}_b$ discussed above are not restricted to specific forms; they can represent the traditional EKF error $\delta \mathbf{x}$ or any other linearized error forms, such as the left-invariant and right-invariant errors in InEKF.

For $\boldsymbol{\xi}_a$ and $\boldsymbol{\xi}_b$ and their associated covariance matrices $\mathbf{P}_a$ and $\mathbf{P}_b$, we next introduce the concept of equivalence of error states and covariance matrices, which was partly discussed in our previous work~\cite{Han2024, han2024covariance}.

\begin{definition}[Equivalent Error States]
    \label{def:error_equ}
    Two arbitrary error states $\boldsymbol{\xi}_a$ and $\boldsymbol{\xi}_b$ are said to be equivalent if they satisfy
    \begin{equation}
        \label{eq:error_state_switch}
        \boldsymbol{\xi}_a = \mathbf{A}(\hat{\mathbf{x}}) \boldsymbol{\xi}_b,
    \end{equation}
    which is called the equivalent error state switch.

    Specifically, if the current system state is the predicted state $\hat{\mathbf{x}}^-$, then the equivalence between two predicted error states $\boldsymbol{\xi}_a^-$ and $\boldsymbol{\xi}_b^-$ satisfy
    \begin{equation}
        \boldsymbol{\xi}_a^- = \mathbf{A}(\hat{\mathbf{x}}^-) \boldsymbol{\xi}_b^-.
    \end{equation}
    Similarly, if the current system state is the updated state $\hat{\mathbf{x}}^+$, then the equivalent two updated error states $\boldsymbol{\xi}_a^+$ and $\boldsymbol{\xi}_b^+$ satisfy
    \begin{equation}
        \boldsymbol{\xi}_a^+ = \mathbf{A}(\hat{\mathbf{x}}^+) \boldsymbol{\xi}_b^+.
    \end{equation}
\end{definition}

\begin{definition}[Equivalent Covariance Matrices]
    \label{def:covariance_equ}
    The covariance matrices $\mathbf{P}_a$ and $\mathbf{P}_b$ corresponding to the $\boldsymbol{\xi}_a$ and $\boldsymbol{\xi}_b$ are said to be equivalent if
    \begin{equation}
        \label{eq:equ_switch}
        \mathbf{P}_a = \mathbf{A}(\hat{\mathbf{x}}) \mathbf{P}_b \mathbf{A}^T(\hat{\mathbf{x}}),
    \end{equation}
    which is referred to as the equivalent covariance switch.

    Specifically, if the current system state is the predicted state $\hat{\mathbf{x}}^-$, then the equivalent covariances $\mathbf{P}_a^-$ and $\mathbf{P}_b^-$ satisfy
    \begin{equation}
        \mathbf{P}_a^- = \mathbf{A}(\hat{\mathbf{x}}^-) \mathbf{P}_b^- \mathbf{A}^T(\hat{\mathbf{x}}^-).
    \end{equation}
    Similarly, if the current system state is the updated state $\hat{\mathbf{x}}^+$, the equivalent covariances $\mathbf{P}_a^+$ and $\mathbf{P}_b^+$ satisfy
    \begin{equation}
        \mathbf{P}_a^+ = \mathbf{A}(\hat{\mathbf{x}}^+) \mathbf{P}_b^+ \mathbf{A}^T(\hat{\mathbf{x}}^+).
    \end{equation}
\end{definition}
\section{Covariance Switch and its Effect in Kalman Filters}
When the observation model mismatches with the process model in terms of error definitions, the covariance switch can be introduced~\cite{Han2024,Han2025master}.

Figure~\ref{fig:flowchat_sc_a_b} presents a demo of covariance switch in Kalman filters, in which ESKF(b) serves as the original filter~\cite{han2024covariance}. As shown in Fig.~\ref{fig:flowchat_sc_a_b}, $\mathbf{P}_{b,s}^-$ can be converted into the objective $\mathbf{P}_{a,s}^-$ using the forward covariance switch \eqref{eq:sc_forward}, and the subscript $s$ denotes the application of covariance switch. Then, the covariance and state update operation of ESKF(a) are performed. After updating the covariance matrix $\mathbf{P}_{a,s}$, it is crucial to transform it back to the corresponding matrix $\mathbf{P}_{b,s}$ using the updated states in \eqref{eq:sc_backward}. This backward covariance switch is necessary for the next covariance propagation.
\begin{equation}
    \label{eq:sc_forward}
    \mathbf{P}_{a,s}^- = \mathbf{A}(\hat{\mathbf{x}}^-) \mathbf{P}_{b,s}^- \mathbf{A}^{T}(\hat{\mathbf{x}}^-),
\end{equation}
\begin{equation}
    \label{eq:sc_backward}
    \mathbf{P}_{b,s}^+ = \mathbf{A}^{-1}(\hat{\mathbf{x}}^+) \mathbf{P}_{a,s}^+ \mathbf{A}^{-T}(\hat{\mathbf{x}}^+).
\end{equation}
According to Definition~\ref{def:covariance_equ}, both \eqref{eq:sc_forward} and \eqref{eq:sc_backward} represent equivalent covariance switches.
\begin{figure}[htb!]
    \centering
    \includegraphics[width=0.45\textwidth]{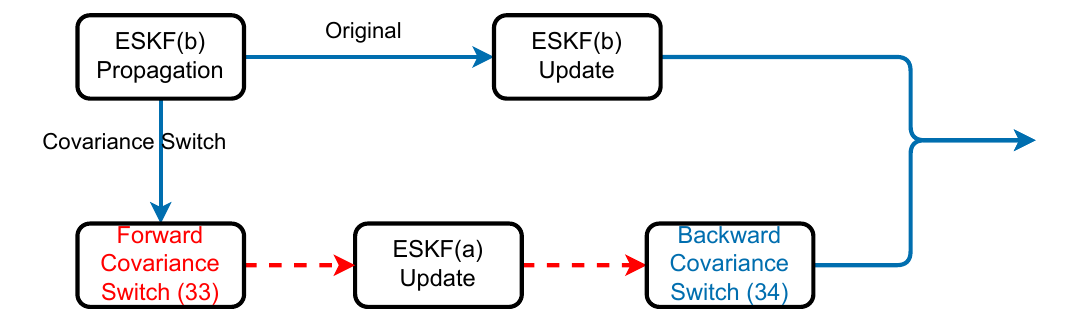}
    \caption{An example of covariance switch.}
    \label{fig:flowchat_sc_a_b}
\end{figure}

The general procedure for the covariance switch in a ESKF filtering algorithm is detailed in Algorithm~\ref{algo:covariance_swicht_a_b}, where $\mathbf{K}_a$ denotes the Kalman gain corresponding to the error definition $\boldsymbol{\xi}_a$, $\mathbf{R}$ is the observation noise covariance matrix independent of the error states, and $t_{k-1}$ denotes the previous update instant. Hereafter, ESKF(b)-$\mathrm{S}_{b\rightarrow a}$ refers to the covariance switch from $b$ to $a$, and similarly, ESKF(a)-$\mathrm{S}_{a\rightarrow b}$ refers to the switch from $a$ to $b$.

\begin{theorem}[{
Equivalence of Error State and Covariance Kept through State Propagation
}]
    \label{thm:prop_equivalence}
    Suppose at time $t_s$, the following equalities hold
    \begin{equation}
        \label{eq:xi_init}
        \boldsymbol{\xi}_a(t_s) = \mathbf{A}(\hat{\mathbf{x}}(t_s)) \boldsymbol{\xi}_b(t_s),
    \end{equation}
    \begin{equation}
        \label{eq:switch_init}
        \mathbf{P}_a(t_s) = \mathbf{A}(\hat{\mathbf{x}}(t_s)) \mathbf{P}_b(t_s) \mathbf{A}^T(\hat{\mathbf{x}}(t_s)),
    \end{equation}
    and the state propagation satisfies
    \begin{equation}
        \label{eq:prop_equivalence}
        \hat{\mathbf{x}}(t_f) = \hat{\mathbf{x}}(t_s) + \int_{t_s}^{t_f} \mathbf{f}(\hat{\mathbf{x}}(t'), \mathbf{u}(t'), t') \, dt'.
    \end{equation}
    Then, for $t \in [t_s, t_f]$, the following equivalences hold
    \begin{equation*}
        \boldsymbol{\xi}_a(t) = \mathbf{A}(\hat{\mathbf{x}}(t)) \boldsymbol{\xi}_b(t), \mathbf{P}_a(t) = \mathbf{A}(\hat{\mathbf{x}}(t)) \mathbf{P}_b(t) \mathbf{A}^T(\hat{\mathbf{x}}(t)).
    \end{equation*}
    That is, the error states and covariance matrices of ESKF(a) and ESKF(b) remain equivalent throughout the propagation interval.
\end{theorem}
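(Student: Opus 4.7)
The plan is to prove both equivalences by the standard strategy for Lyapunov/linear ODEs: construct residuals that measure the failure of equivalence, show they satisfy homogeneous linear ODEs driven by $\mathbf{F}_a$, and invoke the hypotheses \eqref{eq:xi_init}--\eqref{eq:switch_init} as zero initial conditions so that uniqueness forces the residuals to vanish throughout $[t_s,t_f]$. Concretely, I introduce
\[
\boldsymbol{\eta}(t) := \boldsymbol{\xi}_a(t) - \mathbf{A}(\hat{\mathbf{x}}(t))\,\boldsymbol{\xi}_b(t),\qquad
\boldsymbol{\Pi}(t) := \mathbf{P}_a(t) - \mathbf{A}(\hat{\mathbf{x}}(t))\,\mathbf{P}_b(t)\,\mathbf{A}^T(\hat{\mathbf{x}}(t)),
\]
where throughout, $\dot{\mathbf{A}}$ is interpreted as the derivative of $\mathbf{A}(\hat{\mathbf{x}}(t))$ along the common propagated trajectory dictated by \eqref{eq:prop_equivalence}.

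For the error-state part, I would apply the product rule to $\mathbf{A}(\hat{\mathbf{x}}(t))\boldsymbol{\xi}_b(t)$, use \eqref{eq:diff_state_b} to substitute $\dot{\boldsymbol{\xi}}_b$, and then subtract from \eqref{eq:diff_state_a}. The driving-noise term collapses via the structural relation $\mathbf{G}_a=\mathbf{A}\mathbf{G}_b$ in \eqref{eq:G_relation}, and the remaining coefficients of $\boldsymbol{\xi}_b$ regroup via $\mathbf{F}_a\mathbf{A}=\dot{\mathbf{A}}+\mathbf{A}\mathbf{F}_b$ from \eqref{eq:F_relation}, so that the $\boldsymbol{\xi}_b$-dependence rewrites into $-\mathbf{F}_a\mathbf{A}\boldsymbol{\xi}_b$. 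The result is $\dot{\boldsymbol{\eta}}=\mathbf{F}_a\boldsymbol{\eta}$, a linear homogeneous ODE. Combined with $\boldsymbol{\eta}(t_s)=\mathbf{0}$ from \eqref{eq:xi_init}, uniqueness gives $\boldsymbol{\eta}(t)\equiv\mathbf{0}$.

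The covariance part proceeds in exactly the same spirit, using the continuous Lyapunov propagation of $\mathbf{P}_a$ and the product-rule expansion of $\mathbf{A}\mathbf{P}_b\mathbf{A}^T$ together with the Lyapunov ODE of $\mathbf{P}_b$. The process-noise injection $\mathbf{G}_a\mathbf{Q}\mathbf{G}_a^T$ cancels against $\mathbf{A}\mathbf{G}_b\mathbf{Q}\mathbf{G}_b^T\mathbf{A}^T$ by \eqref{eq:G_relation}, and pairing the cross-terms $(\dot{\mathbf{A}}+\mathbf{A}\mathbf{F}_b)\mathbf{P}_b\mathbf{A}^T+\mathbf{A}\mathbf{P}_b(\dot{\mathbf{A}}+\mathbf{A}\mathbf{F}_b)^T$ and applying \eqref{eq:F_relation} on each side yields the symmetric Lyapunov equation $\dot{\boldsymbol{\Pi}}=\mathbf{F}_a\boldsymbol{\Pi}+\boldsymbol{\Pi}\mathbf{F}_a^T$. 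With $\boldsymbol{\Pi}(t_s)=\mathbf{0}$ from \eqref{eq:switch_init}, the homogeneous Lyapunov ODE forces $\boldsymbol{\Pi}(t)\equiv\mathbf{0}$, giving the covariance equivalence.

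The only subtle point--and the one I expect to occupy most of the write-up--is justifying the pointwise use of \eqref{eq:F_relation} and \eqref{eq:G_relation} inside the interval, since those identities were originally derived by \emph{assuming} the error-state equivalence. The resolution is that \eqref{eq:F_relation}--\eqref{eq:G_relation} are structural identities between the Jacobians of the two error-state dynamics and the transformation matrix $\mathbf{A}$ evaluated at the common estimate $\hat{\mathbf{x}}(t)$; they depend only on how the two error states are defined, not on the numerical values of $\boldsymbol{\xi}_a$ and $\boldsymbol{\xi}_b$ at any particular instant. Hence they hold at every $t\in[t_s,t_f]$ along the shared propagated trajectory, which closes the potential circularity and legitimizes the ODE reductions above. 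The update step is explicitly outside the scope of this theorem and would be treated in a separate statement.
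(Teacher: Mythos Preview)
Your argument is correct and is a genuinely different route from the paper's. The paper proves the theorem by an explicit Riemann-sum limit: it discretizes the propagation interval into steps of length $\tau=(t_i-t_s)/n$, shows directly (using \eqref{eq:F_relation} and \eqref{eq:G_relation}) that if the equivalence holds at $t_s$ then it holds at $t_s+\tau$ up to terms $O(\tau^2)$, and then iterates and passes to the limit $n\to\infty$. Your residual-ODE approach packages the same algebra more cleanly: rather than tracking the first-order Taylor step, you differentiate the residuals once, use \eqref{eq:F_relation}--\eqref{eq:G_relation} to obtain the homogeneous linear systems $\dot{\boldsymbol{\eta}}=\mathbf{F}_a\boldsymbol{\eta}$ and $\dot{\boldsymbol{\Pi}}=\mathbf{F}_a\boldsymbol{\Pi}+\boldsymbol{\Pi}\mathbf{F}_a^T$, and then invoke uniqueness. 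This is shorter and conceptually sharper, and it also makes transparent (as you note) that the conclusion does not depend on the sign of $t_f-t_s$, something the paper remarks on separately. The paper's incremental computation, on the other hand, has the virtue of dovetailing naturally with its subsequent discrete-time analysis (Theorem~\ref{thm:prop_equivalence_disc}), since the same one-step calculation with finite $\tau$ reappears there verbatim. Your observation that \eqref{eq:F_relation}--\eqref{eq:G_relation} are structural identities in $\hat{\mathbf{x}}(t)$, independent of the instantaneous values of $\boldsymbol{\xi}_a,\boldsymbol{\xi}_b$, is the right way to close the apparent circularity and is implicitly used in the paper as well.
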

\begin{proof}
    The full proof is provided in Appendix~\ref{appx:proof_of_lemma2}.
\end{proof}

Note that Eqs.~\eqref{eq:xi_init}-\eqref{eq:prop_equivalence} do not stipulate whether the current estimated state is the predicted state or the updated state. Although the propagation of the error state is also discussed here, in practical applications based on error-state Kalman filtering, the error state is usually reset to zero after the state update.

Moreover, there is no restriction on the relationship between $t_s$ and $t_f$ during the proof, so the conclusions also apply to the forward propagation and backward propagation of the covariance matrix over time.

It is worth noting that the proof of the equivalence between the error state and covariance matrices in Theorem~\ref{thm:prop_equivalence} is established under the assumption of continuous-time propagation. However, in most practical applications, such as inertial navigation systems (INS), both the error state and covariance are typically propagated in discrete time. To bridge this gap, we now provide a proof of the equivalence of the error state and covariance matrices under discrete-time propagation in appendix~\ref{appx:proof_of_theorem6}.

Note that covariance switch would be unworkable as given by the following Lemma~\ref{thm:ineffectiveness_covariance_switch}.

\begin{algorithm}[hb!]
    \caption{Covariance Switch Embedded in Kalman Filters}
    \label{algo:covariance_swicht_a_b}
    \begin{algorithmic}[1]
        \Require $\hat{\mathbf{x}}_{b,s}^+(t_{k-1})$, $\boldsymbol{\xi}_{b,s}^+(t_{k-1})$, $\mathbf{P}_{b,s}^+(t_{k-1})$
        \Ensure $\hat{\mathbf{x}}_{b,s}^+(t_{k})$, $\mathbf{P}_{b,s}^+(t_{k})$
        \State \textbf{EKSF(b) Propagation:} 
            \[
            \begin{aligned}
                \hat{\mathbf{x}}_{b,s}^+(t_{k-1}) &\rightarrow \hat{\mathbf{x}}_{b,s}^-(t_{k}), \\
                \boldsymbol{\xi}_{b,s}^+(t_{k-1}) &\rightarrow \boldsymbol{\xi}_{b,s}^-(t_{k}), \\
                \mathbf{P}_{b,s}^+(t_{k-1}) &\rightarrow \mathbf{P}_{b,s}^-(t_{k})
            \end{aligned}
            \]
        \State \textbf{Forward Covariance Switch~\eqref{eq:sc_forward}:} 
            \[
            \mathbf{P}_{a,s}^-(t_{k}) = \mathbf{A}(\hat{\mathbf{x}}_{b,s}^-(t_{k})) \mathbf{P}_{b,s}^-(t_{k}) \mathbf{A}^T(\hat{\mathbf{x}}_{b,s}^-(t_{k}))
            \]
        \State \textbf{$\mathbf{P}_{a,s}$ Update:} Obtain $\mathbf{H}_a$ and $\mathbf{R}$ from observation; compute Kalman gain $\mathbf{K}_a$, then
            \[
            \mathbf{P}_{a,s}^+(t_{k}) = \mathbf{P}_{a,s}^-(t_{k}) - \mathbf{K}_a \mathbf{H}_a \mathbf{P}_{a,s}^-(t_{k})
            \]
        \State \textbf{$\boldsymbol{\xi}_{a,s}$ Update:} 
            \[
            \boldsymbol{\xi}_{a,s}^+(t_{k}) = \boldsymbol{\xi}_a^-(t_{k}) + \mathbf{K}_a \delta \mathbf{z}
            \]
            $\boldsymbol{\xi}_a^-(t_{k})$ is typically set as zero in indirect Kalman filter. 
        \State \textbf{State Update:} 
            \[ \hat{\mathbf{x}}_{b,s}^-(t_{k}) \boxplus \boldsymbol{\xi}_{a,s}^+(t_{k}) = \hat{\mathbf{x}}_{a,s}^+(t_{k}) \] in which $\boxplus$ denote the step of absorbing error state in EKSF(a), 
            \[ \hat{\mathbf{x}}_{b,s}^+(t_{k}) = \hat{\mathbf{x}}_{a,s}^+(t_{k}) \]
        \State \textbf{Backward Covariance Switch~\eqref{eq:sc_backward}:} 
            \[
            \begin{aligned}
                \mathbf{P}_{b,s}^+(t_{k}) 
                = \mathbf{A}^{-1}(\hat{\mathbf{x}}_{b,s}^+(t_{k})) \mathbf{P}_{a,s}^+(t_{k}) \mathbf{A}^{-T}(\hat{\mathbf{x}}_{b,s}^+(t_{k}))
            \end{aligned}
            \]
        \State \Return $\hat{\mathbf{x}}_{b,s}^+(t_{k}), \quad \mathbf{P}_{b,s}^+(t_{k})$
    \end{algorithmic}
\end{algorithm}
\begin{lemma}[Ineffectiveness of Covariance Switch~\cite{Han2024}]
    \label{thm:ineffectiveness_covariance_switch}
    If the predicted error states and covariance matrices of ESKF(a) and ESKF(b) are equivalent, namely,
    \begin{equation}
        \label{eq:xi_relation_pred}
        \boldsymbol{\xi}_a^-(t_k) = \mathbf{A}(\hat{\mathbf{x}}^-(t_k)) \boldsymbol{\xi}_b^-(t_k),
    \end{equation}
    \begin{equation}
        \label{eq:P_relation_pred}
        \mathbf{P}_a^-(t_k) = \mathbf{A}(\hat{\mathbf{x}}^-(t_k)) \mathbf{P}_b^-(t_k) \mathbf{A}^T(\hat{\mathbf{x}}^-(t_k)).
    \end{equation}
    Then, after a single-step update, the updated error states and covariance matrices satisfy
    \begin{equation}
        \label{eq:xi_relation_update}
        \boldsymbol{\xi}_a^+(t_k) = \mathbf{A}(\hat{\mathbf{x}}^-(t_k)) \boldsymbol{\xi}_b^+(t_k),
    \end{equation}
    \begin{equation}
        \label{eq:p_update_relation}
        \mathbf{P}_a^+(t_k) = \mathbf{A}(\hat{\mathbf{x}}^-(t_k)) \mathbf{P}_b^+(t_k) \mathbf{A}^T(\hat{\mathbf{x}}^-(t_k)),
    \end{equation}
    which indicates that the updated error states and covariance matrices are not equivalent according to Definitions~\ref{def:error_equ} and~\ref{def:covariance_equ}.
    
    \begin{proof}
    For simplicity, the time index $t_k$ is omitted. Based on~\eqref{eq:H_relation}, the observation matrices satisfy
    \begin{equation}
        \label{eq:H_relation_pred}
        \mathbf{H}_b = \mathbf{H}_a \mathbf{A}(\hat{\mathbf{x}}^-).
    \end{equation}

    The Kalman gains corresponding to the error states $\boldsymbol{\xi}_a$ and $\boldsymbol{\xi}_b$ can be computed as
    \begin{equation}
        \mathbf{K}_a = \mathbf{P}_a^- \mathbf{H}_a^T \left( \mathbf{H}_a \mathbf{P}_a^- \mathbf{H}_a^T + \mathbf{R} \right)^{-1},
    \end{equation}
    \begin{equation}
        \mathbf{K}_b = \mathbf{P}_b^- \mathbf{H}_b^T \left( \mathbf{H}_b \mathbf{P}_b^- \mathbf{H}_b^T + \mathbf{R} \right)^{-1}.
    \end{equation}
    Based on~\eqref{eq:P_relation_pred} and~\eqref{eq:H_relation_pred}, and following the derivation in~\cite{Han2024}, the relationship between $\mathbf{K}_a$ and $\mathbf{K}_b$ can be obtained as
    \begin{equation}
        \label{eq:K_relation_pred}
        \mathbf{K}_a = \mathbf{A}(\hat{\mathbf{x}}^-) \mathbf{K}_b.
    \end{equation}
    Therefore, the updated error states $\boldsymbol{\xi}_a^+$ and $\boldsymbol{\xi}_b^+$ satisfy
    \begin{equation}
        \begin{aligned}
            \boldsymbol{\xi}_a^+ &= \boldsymbol{\xi}_a^- + \mathbf{K}_a \delta \mathbf{z} \\
            &= \mathbf{A}(\hat{\mathbf{x}}^-) \boldsymbol{\xi}_b^- + \mathbf{A}(\hat{\mathbf{x}}^-) \mathbf{K}_b \delta \mathbf{z} \\
            &= \mathbf{A}(\hat{\mathbf{x}}^-)(\boldsymbol{\xi}_b^- + \mathbf{K}_b \delta \mathbf{z}) \\
            &= \mathbf{A}(\hat{\mathbf{x}}^-) \boldsymbol{\xi}_b^+.
        \end{aligned}
    \end{equation}
    From the covariance update step in the Kalman filter~\cite{crassidis2004optimal}, combined with~\eqref{eq:P_relation_pred}, \eqref{eq:H_relation_pred}, and \eqref{eq:K_relation_pred}, the relationship between $\mathbf{P}_a^+$ and $\mathbf{P}_b^+$ is derived as
    \begin{equation}
        \label{eq:P_relation_update}
        \begin{aligned}
            \mathbf{P}_a^+ & = \mathbf{P}_a^- - \mathbf{K}_a \mathbf{H}_a \mathbf{P}_a^- \\
            & = \mathbf{A}(\hat{\mathbf{x}}^-) \mathbf{P}_b^- \mathbf{A}^T(\hat{\mathbf{x}}^-) - \mathbf{A}(\hat{\mathbf{x}}^-) \mathbf{K}_b \mathbf{H}_b \mathbf{P}_b^- \mathbf{A}^T(\hat{\mathbf{x}}^-) \\
            & = \mathbf{A}(\hat{\mathbf{x}}^-) \left( \mathbf{P}_b^- - \mathbf{K}_b \mathbf{H}_b \mathbf{P}_b^- \right) \mathbf{A}^T(\hat{\mathbf{x}}^-) \\
            & = \mathbf{A}(\hat{\mathbf{x}}^-) \mathbf{P}_b^+ \mathbf{A}^T(\hat{\mathbf{x}}^-),
        \end{aligned}
    \end{equation}
which indicates that although the covariances $\mathbf{P}_a^-$ and $\mathbf{P}_b^-$ of ESKF(a) and ESKF(b) before the update are equivalent, the updated covariances $\mathbf{P}_a^+$ and $\mathbf{P}_b^+$ do not satisfy the equivalence condition. Based on forward covariance switch~\eqref{eq:sc_forward}, $\mathbf{P}_{a,s}^-$ and $\mathbf{P}_{b}^-$ satisfy the condition given in~\eqref{eq:P_relation_pred}. Consequently, $\mathbf{P}_{a,s}^+$ and $\mathbf{P}_{b}^+$ satisfy a same relation as in~\eqref{eq:P_relation_update}, i.e.,
\begin{equation}
    \label{eq:relation_a_as}
    \mathbf{P}_{a,s}^+ = \mathbf{P}_{a}^+ = \mathbf{A}(\hat{\mathbf{x}}^-) \mathbf{P}_b^+ \mathbf{A}^T(\hat{\mathbf{x}}^-).
\end{equation}
In contrast to step $6$ in algorithm~\ref{algo:covariance_swicht_a_b}, if the $\mathbf{A}(\hat{\mathbf{x}}^-(t_k))$ is used for backward covariance switch~\eqref{eq:sc_backward}, then 
\begin{equation}
\begin{aligned}
    \mathbf{P}_{b,s}^+ &= \mathbf{A}^{-1}(\hat{\mathbf{x}}^-) \mathbf{P}_{a,s}^+ \mathbf{A}^{-T}(\hat{\mathbf{x}}^-) \\
     &= \mathbf{A}^{-1}(\hat{\mathbf{x}}^-) \mathbf{P}_{a}^+ \mathbf{A}^{-T}(\hat{\mathbf{x}}^-) \\
    &= \mathbf{P}_b^+.
\end{aligned}
\end{equation}
In this case, the introduction of covariance switch is unworkable.
\end{proof}
\end{lemma}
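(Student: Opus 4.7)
The plan is to track how the fixed linear map $\mathbf{A}(\hat{\mathbf{x}}^-(t_k))$ that relates the predicted quantities of ESKF(a) and ESKF(b) propagates through a single Kalman update, and then compare the resulting relation with the equivalence condition of Definitions~\ref{def:error_equ}--\ref{def:covariance_equ}. I first observe that because both filters share the same predicted full state $\hat{\mathbf{x}}^-(t_k)$, the innovation $\delta\mathbf{z}$ computed from~\eqref{eq:inov_all} is identical in the two filters, and $\mathbf{R}$ is state-independent by the remark following~\eqref{eq:H_relation}. Thus only the Kalman gains can differ, and everything reduces to relating $\mathbf{K}_a$ to $\mathbf{K}_b$.

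Next I would establish the gain relation $\mathbf{K}_a=\mathbf{A}(\hat{\mathbf{x}}^-)\mathbf{K}_b$. Starting from the gain definitions and substituting $\mathbf{P}_a^-=\mathbf{A}\mathbf{P}_b^-\mathbf{A}^{T}$ (from~\eqref{eq:P_relation_pred}) together with $\mathbf{H}_a=\mathbf{H}_b\mathbf{A}^{-1}$ (from~\eqref{eq:H_relation}), the factor $\mathbf{A}\mathbf{A}^{-1}=\mathbf{I}$ inside $\mathbf{H}_a\mathbf{P}_a^-\mathbf{H}_a^T$ cancels, leaving the inverse term equal to $(\mathbf{H}_b\mathbf{P}_b^-\mathbf{H}_b^T+\mathbf{R})^{-1}$, while the outer factor $\mathbf{P}_a^-\mathbf{H}_a^T$ collapses to $\mathbf{A}\mathbf{P}_b^-\mathbf{H}_b^T$. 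This produces $\mathbf{K}_a=\mathbf{A}(\hat{\mathbf{x}}^-)\mathbf{K}_b$ by pure algebraic manipulation and is essentially the step already cited from~\cite{Han2024}.

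With the gain relation in hand, the error-state and covariance update formulas become immediate. For the error state, $\boldsymbol{\xi}_a^+=\boldsymbol{\xi}_a^-+\mathbf{K}_a\delta\mathbf{z}$ factors as $\mathbf{A}(\hat{\mathbf{x}}^-)(\boldsymbol{\xi}_b^-+\mathbf{K}_b\delta\mathbf{z})=\mathbf{A}(\hat{\mathbf{x}}^-)\boldsymbol{\xi}_b^+$, which is~\eqref{eq:xi_relation_update}. For the covariance, $\mathbf{P}_a^+=\mathbf{P}_a^--\mathbf{K}_a\mathbf{H}_a\mathbf{P}_a^-$ factors in the same way into $\mathbf{A}(\hat{\mathbf{x}}^-)(\mathbf{P}_b^--\mathbf{K}_b\mathbf{H}_b\mathbf{P}_b^-)\mathbf{A}^{T}(\hat{\mathbf{x}}^-)=\mathbf{A}(\hat{\mathbf{x}}^-)\mathbf{P}_b^+\mathbf{A}^{T}(\hat{\mathbf{x}}^-)$, giving~\eqref{eq:p_update_relation}.

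The crucial observation, and the main interpretive hurdle rather than a computational one, is that the transformation appearing after the update is evaluated at the predicted state $\hat{\mathbf{x}}^-$, whereas equivalence in the sense of Definitions~\ref{def:error_equ}--\ref{def:covariance_equ} would require the same relation evaluated at the updated state $\hat{\mathbf{x}}^+$. Since the Kalman correction generally moves the state so that $\hat{\mathbf{x}}^+\neq\hat{\mathbf{x}}^-$ and hence $\mathbf{A}(\hat{\mathbf{x}}^+)\neq\mathbf{A}(\hat{\mathbf{x}}^-)$ in nonlinear systems, the updated error states and covariances are not equivalent. From here the ineffectiveness conclusion follows in a single line: if step~6 of Algorithm~\ref{algo:covariance_swicht_a_b} were performed with $\mathbf{A}(\hat{\mathbf{x}}^-)$ instead of $\mathbf{A}(\hat{\mathbf{x}}^+)$, then combining~\eqref{eq:sc_backward} with~\eqref{eq:relation_a_as} yields $\mathbf{P}_{b,s}^+=\mathbf{A}^{-1}(\hat{\mathbf{x}}^-)\mathbf{A}(\hat{\mathbf{x}}^-)\mathbf{P}_b^+\mathbf{A}^{T}(\hat{\mathbf{x}}^-)\mathbf{A}^{-T}(\hat{\mathbf{x}}^-)=\mathbf{P}_b^+$, so the switch reduces to the identity and provides no benefit. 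The only subtle point I would flag is being explicit that $\mathbf{A}$ must be evaluated at a definite state throughout each equation so the reader can see why swapping $\hat{\mathbf{x}}^-$ for $\hat{\mathbf{x}}^+$ is precisely what enables a nontrivial switch.
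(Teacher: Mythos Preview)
Your proposal is correct and follows essentially the same approach as the paper's own proof: both use $\mathbf{H}_b=\mathbf{H}_a\mathbf{A}(\hat{\mathbf{x}}^-)$ and the predicted covariance relation to obtain $\mathbf{K}_a=\mathbf{A}(\hat{\mathbf{x}}^-)\mathbf{K}_b$, then factor the update equations to reach~\eqref{eq:xi_relation_update}--\eqref{eq:p_update_relation} and conclude the ineffectiveness argument by collapsing the backward switch with $\mathbf{A}(\hat{\mathbf{x}}^-)$. The only cosmetic difference is that you spell out the algebraic cancellation yielding $\mathbf{K}_a=\mathbf{A}(\hat{\mathbf{x}}^-)\mathbf{K}_b$, whereas the paper simply cites~\cite{Han2024} for that step.
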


From Theorem~\ref{thm:prop_equivalence}, it follows that under the condition of equivalent initial error states and covariance matrices, the propagation results of different ESKF algorithms preserve the equivalence of error states and covariances. Consequently, the assumptions of Lemma~\ref{thm:ineffectiveness_covariance_switch} are satisfied. Furthermore, Lemma~\ref{thm:ineffectiveness_covariance_switch} implies that the updated error states of different ESKF algorithms satisfy the relationship given by equation~\eqref{eq:xi_relation_update}.

Reference~\cite{han2024covariance} presented the equivalence between the left- and right-invariant system state updates. Here, we provide a more general form applicable to all ESKF variants.
\begin{lemma}[Relationship of State Update Under Different Error State Definitions]
    \label{thm:x_correct_equivalence}
    If two error states $\boldsymbol{\xi}_a$ and $\boldsymbol{\xi}_b$ satisfy the relationship in~\eqref{eq:xi_relation_update}, then the updated system states are identical, i.e.,
    \begin{equation}
        \label{eq:a_b_update_relation}
        \hat{\mathbf{x}}_a^+ = \hat{\mathbf{x}}_b^+.
    \end{equation}
    \begin{proof}
    Consider the error state $\boldsymbol{\xi}_a$ and the additive error state $\delta \mathbf{x}$ which satisfy the following transformation
    \begin{equation}
        \boldsymbol{\xi}_a = \mathbf{A}_{ekf \rightarrow a}(\hat{\mathbf{x}}) \delta \mathbf{x}.
    \end{equation}
    In EKF, the error state is defined as $\delta \mathbf{x} = \hat{\mathbf{x}} - \mathbf{x}$. Thus, it follows that
    \begin{equation}
        \begin{aligned}
            \boldsymbol{\xi}_a = \mathbf{A}_{ekf \rightarrow a}(\hat{\mathbf{x}}) \delta \mathbf{x} = \mathbf{A}_{ekf \rightarrow a}(\hat{\mathbf{x}}) (\hat{\mathbf{x}} - \mathbf{x}).
        \end{aligned}
    \end{equation}
    Consequently, the state update for $\boldsymbol{\xi}_a$ can be expressed as
    \begin{equation}
        \label{eq:xi_update}
        \mathbf{x} = \hat{\mathbf{x}} - \mathbf{A}_{ekf \rightarrow a}^{-1}(\hat{\mathbf{x}}) \boldsymbol{\xi}_a.
    \end{equation}
    According to Lemma~\ref{thm:ineffectiveness_covariance_switch}, after one-step update, $\boldsymbol{\xi}_a^+$ and $\delta \mathbf{x}^+$ satisfy
        \begin{equation}
            \label{eq:xi_x_relation}
            \boldsymbol{\xi}_a^+ = \mathbf{A}_{ekf \rightarrow a}(\hat{\mathbf{x}}^-) \delta \mathbf{x}^+.
        \end{equation}
    The relationship between the updated system states $\hat{\mathbf{x}}^+$ and $\hat{\mathbf{x}}_a^+$, updated respectively by the error states $\delta \mathbf{x}^+$ and $\boldsymbol{\xi}_a^+$ based on \eqref{eq:xi_update}, is given by
    \begin{equation}
        \begin{aligned}
            \hat{\mathbf{x}}_a^+ &= \hat{\mathbf{x}}^- - \mathbf{A}_{ekf \rightarrow a}^{-1}(\hat{\mathbf{x}}^-) \boldsymbol{\xi}_a^+ \\
            &= \hat{\mathbf{x}}^- - \delta \mathbf{x}^+ \\
            &= \hat{\mathbf{x}}_{ekf}^+.
        \end{aligned}
    \end{equation}
    Similarly, it can be shown that
    \begin{equation}
        \hat{\mathbf{x}}_b^+ = \hat{\mathbf{x}}_{ekf}^+ = \hat{\mathbf{x}}_a^+.
    \end{equation}
\end{proof}
\end{lemma}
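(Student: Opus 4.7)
The plan is to show that every error-state parameterization can be pulled back, at first order, to the standard additive EKF error $\delta\mathbf{x} = \hat{\mathbf{x}} - \mathbf{x}$, and that the state-injection rule induced by any such parameterization is consistent with this pull-back. Once this common reference is in place, both $\hat{\mathbf{x}}_a^+$ and $\hat{\mathbf{x}}_b^+$ will collapse to the same quantity $\hat{\mathbf{x}}_{\text{ekf}}^+$, so \eqref{eq:a_b_update_relation} follows immediately.

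First I would introduce transformation matrices $\mathbf{A}_{e\to a}(\hat{\mathbf{x}})$ and $\mathbf{A}_{e\to b}(\hat{\mathbf{x}})$ that linearly relate the two error states of interest to the EKF error, namely $\boldsymbol{\xi}_a = \mathbf{A}_{e\to a}(\hat{\mathbf{x}})\,\delta\mathbf{x}$ and $\boldsymbol{\xi}_b = \mathbf{A}_{e\to b}(\hat{\mathbf{x}})\,\delta\mathbf{x}$; their existence is guaranteed by the framework of Section~\ref{sec:error_state_relation}, and in fact the hypothesis \eqref{eq:xi_relation_update} implies $\mathbf{A}_{e\to a}(\hat{\mathbf{x}}^-) = \mathbf{A}(\hat{\mathbf{x}}^-)\,\mathbf{A}_{e\to b}(\hat{\mathbf{x}}^-)$. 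Inverting the first relation and using $\delta\mathbf{x} = \hat{\mathbf{x}} - \mathbf{x}$ gives the natural state-injection formula for ESKF(a), $\hat{\mathbf{x}}_a^+ = \hat{\mathbf{x}}^- - \mathbf{A}_{e\to a}^{-1}(\hat{\mathbf{x}}^-)\,\boldsymbol{\xi}_a^+$, and analogously for ESKF(b). Next I would invoke Lemma~\ref{thm:ineffectiveness_covariance_switch}, applied once with the EKF error playing the role of $\boldsymbol{\xi}_b$ there, to conclude that a single update step preserves the identities $\boldsymbol{\xi}_a^+ = \mathbf{A}_{e\to a}(\hat{\mathbf{x}}^-)\,\delta\mathbf{x}^+$ and $\boldsymbol{\xi}_b^+ = \mathbf{A}_{e\to b}(\hat{\mathbf{x}}^-)\,\delta\mathbf{x}^+$. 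Substituting into the two injection formulas above collapses both updates to $\hat{\mathbf{x}}^- - \delta\mathbf{x}^+$, yielding the desired equality.

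The main obstacle is conceptual rather than computational: one must justify that the state-injection operation really reduces, for \emph{arbitrary} error definitions including group-valued ones such as those on $\mathbb{SE}_2(3)$, to the linear formula $\hat{\mathbf{x}}^+ = \hat{\mathbf{x}}^- - \mathbf{A}_{e\to a}^{-1}(\hat{\mathbf{x}}^-)\,\boldsymbol{\xi}_a^+$. For an InEKF the geometric injection uses an exponential map, and the linear rule above is merely its linearization about $\hat{\mathbf{x}}^-$. Since the entire framework of the paper is first-order (as acknowledged in the remark following~\eqref{eq:err_relation}), dropping higher-order terms is internally consistent, but the proof should declare this assumption explicitly. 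With that caveat in hand, the argument reduces to a clean two-line substitution using Lemma~\ref{thm:ineffectiveness_covariance_switch}, and symmetry between $a$ and $b$ finishes the claim.
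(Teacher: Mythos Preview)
Your proposal is correct and follows essentially the same route as the paper: both arguments pull each error state back to the additive EKF error via an invertible transformation, invoke Lemma~\ref{thm:ineffectiveness_covariance_switch} to obtain $\boldsymbol{\xi}_a^+ = \mathbf{A}_{e\to a}(\hat{\mathbf{x}}^-)\,\delta\mathbf{x}^+$, substitute into the linearized injection formula to collapse both updates to $\hat{\mathbf{x}}^- - \delta\mathbf{x}^+$, and finish by symmetry. Your explicit acknowledgment that the injection rule for group-valued errors is being linearized is a useful clarification that the paper's proof leaves implicit.
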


\begin{theorem}\label{pr:result_equ}
    Under given initial conditions, i.e., identical system states and equivalent covariance matrices, the estimated system states $\hat{\mathbf{x}}_a^+$ and $\hat{\mathbf{x}}_b^+$ after the first update step of different ESKF algorithms are identical, regardless of the propagation duration. 
    However, $\mathbf{P}_a^+$ and $\mathbf{P}_b^+$ represent the updated system uncertainties, which are inherently mismatched because of the predicted state in $\mathbf{A}(\hat{\mathbf{x}})$.
    
    \begin{proof}
        According to Theorem~\ref{thm:prop_equivalence}, given that the system states are identical and covariance matrices are equivalent at the initial time, their corresponding covariance matrices and error states remain equivalent through arbitrary propagation time.

        By Lemma~\ref{thm:ineffectiveness_covariance_switch}, when the error states and covariance matrices of different ESKF algorithms are equivalent prior to update, the single-step updated error states and covariance matrices satisfy relations \eqref{eq:xi_relation_update} and \eqref{eq:p_update_relation}.

        Furthermore, from Lemma~\ref{thm:x_correct_equivalence}, their updated system states coincide.
    \end{proof}
\end{theorem}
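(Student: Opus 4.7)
The plan is to chain the three preceding results at the first update instant $t_k$. First, I would apply Theorem~\ref{thm:prop_equivalence} to the arbitrary propagation interval $[t_s, t_k]$. Because the hypothesis supplies identical initial system states and equivalent initial covariances, and because both filters integrate the common dynamics~\eqref{eq:x_prop}, the theorem guarantees that at $t_k$ the predicted states remain identical and the predicted covariances satisfy $\mathbf{P}_a^-(t_k) = \mathbf{A}(\hat{\mathbf{x}}^-(t_k))\mathbf{P}_b^-(t_k)\mathbf{A}^T(\hat{\mathbf{x}}^-(t_k))$, irrespective of the propagation duration. The corresponding predicted error states are also equivalent (and in the indirect formulation both zero), so all assumptions needed downstream are in place.

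With the predicted quantities matched up, I would invoke Lemma~\ref{thm:ineffectiveness_covariance_switch}, whose hypotheses~\eqref{eq:xi_relation_pred} and~\eqref{eq:P_relation_pred} coincide exactly with what the previous step produced. The lemma then delivers $\boldsymbol{\xi}_a^+(t_k) = \mathbf{A}(\hat{\mathbf{x}}^-(t_k))\,\boldsymbol{\xi}_b^+(t_k)$ after the single-step update. Feeding this relation into Lemma~\ref{thm:x_correct_equivalence} immediately yields $\hat{\mathbf{x}}_a^+ = \hat{\mathbf{x}}_b^+$, which is the first assertion of the theorem.

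For the second assertion, I would contrast the relation~\eqref{eq:p_update_relation} produced by Lemma~\ref{thm:ineffectiveness_covariance_switch}, namely $\mathbf{P}_a^+ = \mathbf{A}(\hat{\mathbf{x}}^-)\,\mathbf{P}_b^+\,\mathbf{A}^T(\hat{\mathbf{x}}^-)$, with the formal equivalence demanded by Definition~\ref{def:covariance_equ}, which instead requires $\mathbf{A}$ to be evaluated at the \emph{updated} state $\hat{\mathbf{x}}^+$. Whenever the Kalman correction is nonzero one has $\hat{\mathbf{x}}^+ \neq \hat{\mathbf{x}}^-$, so the two relations cannot simultaneously hold; consequently $\mathbf{P}_a^+$ and $\mathbf{P}_b^+$ fail to be equivalent in the sense of Definition~\ref{def:covariance_equ}, and they describe the same prior uncertainty expressed through the wrong linearization point.

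The main obstacle, as I anticipate, is conceptual rather than algebraic: one must clearly distinguish the relation~\eqref{eq:p_update_relation} from the equivalence condition of Definition~\ref{def:covariance_equ}, and explain why the mismatch is intrinsic to the Kalman update itself. This observation is what motivates the backward covariance switch in Algorithm~\ref{algo:covariance_swicht_a_b}. Beyond this subtlety, the proof reduces to a careful invocation of Theorem~\ref{thm:prop_equivalence}, Lemma~\ref{thm:ineffectiveness_covariance_switch}, and Lemma~\ref{thm:x_correct_equivalence} in that order.
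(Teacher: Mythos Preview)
Your proposal is correct and follows essentially the same approach as the paper: chain Theorem~\ref{thm:prop_equivalence}, Lemma~\ref{thm:ineffectiveness_covariance_switch}, and Lemma~\ref{thm:x_correct_equivalence} in that order, then observe that~\eqref{eq:p_update_relation} uses $\mathbf{A}(\hat{\mathbf{x}}^-)$ rather than $\mathbf{A}(\hat{\mathbf{x}}^+)$ and hence fails the equivalence condition of Definition~\ref{def:covariance_equ}. Your treatment of the second assertion is in fact more explicit than the paper's, which merely states the mismatch without spelling out the contrast with Definition~\ref{def:covariance_equ}.
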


\begin{lemma}[Effectiveness of Covariance Switch~\cite{Han2024}]
    \label{thm:effectiveness_covariance_switch}
    Under the condition of identical system states and equivalent covariance matrices, by introducing the forward covariance switch \eqref{eq:sc_forward} and the backward covariance switch \eqref{eq:sc_backward}, the estimated system states will be identical and covariance will be equivalent between ESKF(b)-$\mathrm{S}_{b\rightarrow a}$ and ESKF(a).
    
    \begin{proof}
    First, consider the single-step propagation followed by a single-step update. Under given initial conditions, according to Theorem~\ref{thm:prop_equivalence}, right before the update at $t_k$, we have
    \begin{equation}
    \label{eq:equ}
    \begin{aligned}
        \mathbf{x}_a^-(t_k) &= \mathbf{x}_{b,s}^-(t_k),  \\
        \boldsymbol{\xi}_a^-(t_k) &= \mathbf{A}(\hat{\mathbf{x}}^-(t_k)) \boldsymbol{\xi}_{b,s}^-(t_k), \\
        \mathbf{P}_a^-(t_k) &= \mathbf{A}(\hat{\mathbf{x}}^-(t_k)) \mathbf{P}_{b,s}^-(t_k) \mathbf{A}^T(\hat{\mathbf{x}}^-(t_k)).
    \end{aligned}
    \end{equation}
    From the definition of the forward covariance switch \eqref{eq:sc_forward}, it follows that
    \begin{equation*}
        \mathbf{P}_a^-(t_k) = \mathbf{P}_{a,s}^-(t_k).
    \end{equation*}
    Therefore, based on Lemmas~\ref{thm:ineffectiveness_covariance_switch} and~\ref{thm:x_correct_equivalence}, it is straightforward to obtain
    \begin{equation*}
    \begin{aligned}
        \mathbf{x}_a^+(t_k) &= \mathbf{x}_{b,s}^+(t_k).
    \end{aligned}
    \end{equation*}
    Based on~\eqref{eq:relation_a_as} and backward covariance switch \eqref{eq:sc_backward} in step $6$ of Algorithm~\ref{algo:covariance_swicht_a_b},
    \begin{equation}
        \label{eq:relation_a_bs}
        \begin{aligned}
            \mathbf{P}_{b,s}^+ &= \mathbf{A}^{-1}(\hat{\mathbf{x}}^+(t_k)) \mathbf{P}_{a,s}^+ \mathbf{A}^{-T}(\hat{\mathbf{x}}^+(t_k)) \\
            &= \mathbf{A}^{-1}(\hat{\mathbf{x}}^+(t_k)) \mathbf{P}_{a}^+ \mathbf{A}^{-T}(\hat{\mathbf{x}}^+(t_k)).
        \end{aligned}   
    \end{equation}
    Based on Theorem~\ref{thm:prop_equivalence}, after a single-step propagation, the relation in \eqref{eq:equ} still holds at time $t_{k+1}$. Therefore, by extending the analysis to multiple steps of propagation and update, it follows that ESKF(b)-$\mathrm{S}_{b\rightarrow a}$ yields identical state estimation to ESKF(a).

    If the backward covariance switch uses the predicted system state for the transformation, then after the update at time $t_k$, ESKF(a) and ESKF(b)-$\mathrm{S}_{b\rightarrow a}$ satisfy the following relations
    \begin{equation}
        \begin{aligned}
            \mathbf{x}_a^+(t_k) &= \mathbf{x}_{b,s}^+(t_k), \\
            \mathbf{P}_a^+(t_k) &= \mathbf{A}(\hat{\mathbf{x}}^-(t_k)) \mathbf{P}_{b,s}^+(t_k) \mathbf{A}^T(\hat{\mathbf{x}}^-(t_k)).
        \end{aligned}
    \end{equation}
    Based on the derivations in Theorem~\ref{thm:prop_equivalence}, the predicted system states at $t_{k+1}$ remain identical, but the covariance matrices no longer satisfy the equivalence relation as follows
    \begin{equation}
        \hat{\mathbf{x}}_a^-(t_{k+1}) = \hat{\mathbf{x}}_{b}^-(t_{k+1}),
    \end{equation}
    \begin{equation}
        \mathbf{P}_a^-(t_{k+1}) \neq \mathbf{A}(\hat{\mathbf{x}}^-(t_{k+1})) \mathbf{P}_{b,s}^-(t_{k+1}) \mathbf{A}^T(\hat{\mathbf{x}}^-(t_{k+1})).
    \end{equation}
    \end{proof}
\end{lemma}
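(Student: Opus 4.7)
The plan is to prove the lemma by induction on the update instants $t_k$, chaining together the three ingredients developed earlier. For the inductive step, I assume that immediately after the update at $t_{k-1}$ the two filters share identical system states and satisfy $\mathbf{P}_a^+(t_{k-1}) = \mathbf{A}(\hat{\mathbf{x}}^+(t_{k-1})) \mathbf{P}_{b,s}^+(t_{k-1}) \mathbf{A}^T(\hat{\mathbf{x}}^+(t_{k-1}))$. Invoking Theorem~\ref{thm:prop_equivalence} on the interval $[t_{k-1}, t_k]$, this equivalence is preserved by the state propagation, so right before the update at $t_k$ we have $\hat{\mathbf{x}}_a^-(t_k) = \hat{\mathbf{x}}_{b,s}^-(t_k)$ and $\mathbf{P}_a^-(t_k) = \mathbf{A}(\hat{\mathbf{x}}^-(t_k)) \mathbf{P}_{b,s}^-(t_k) \mathbf{A}^T(\hat{\mathbf{x}}^-(t_k))$. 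The forward covariance switch~\eqref{eq:sc_forward} is exactly this identity, so $\mathbf{P}_{a,s}^-(t_k) = \mathbf{P}_a^-(t_k)$.

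With $\mathbf{P}_{a,s}^-(t_k) = \mathbf{P}_a^-(t_k)$ and identical predicted states, the update step of ESKF(b)-$\mathrm{S}_{b\to a}$ executes the same Kalman equations as ESKF(a) and therefore yields $\mathbf{P}_{a,s}^+(t_k) = \mathbf{P}_a^+(t_k)$ and $\boldsymbol{\xi}_{a,s}^+(t_k) = \boldsymbol{\xi}_a^+(t_k)$. Lemma~\ref{thm:x_correct_equivalence} then guarantees that the corresponding state corrections coincide, $\hat{\mathbf{x}}_{a,s}^+(t_k) = \hat{\mathbf{x}}_a^+(t_k)$, establishing the ``identical system states'' portion of the claim. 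For the covariance, Lemma~\ref{thm:ineffectiveness_covariance_switch} supplies the relation $\mathbf{P}_a^+(t_k) = \mathbf{A}(\hat{\mathbf{x}}^-(t_k)) \mathbf{P}_b^+(t_k) \mathbf{A}^T(\hat{\mathbf{x}}^-(t_k))$ between ESKF(a) and the unswitched ESKF(b); this is not yet the equivalence demanded by Definition~\ref{def:covariance_equ} at the updated state, which is precisely where the backward switch~\eqref{eq:sc_backward} has to intervene. Evaluating \eqref{eq:sc_backward} at $\hat{\mathbf{x}}^+(t_k)$ gives $\mathbf{P}_{b,s}^+(t_k) = \mathbf{A}^{-1}(\hat{\mathbf{x}}^+(t_k)) \mathbf{P}_a^+(t_k) \mathbf{A}^{-T}(\hat{\mathbf{x}}^+(t_k))$, equivalently $\mathbf{P}_a^+(t_k) = \mathbf{A}(\hat{\mathbf{x}}^+(t_k)) \mathbf{P}_{b,s}^+(t_k) \mathbf{A}^T(\hat{\mathbf{x}}^+(t_k))$. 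This is exactly the form appearing in Definition~\ref{def:covariance_equ} at the new state, and closes the induction.

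The main obstacle is the second half of the lemma, which concerns what happens if the backward switch is evaluated at the predicted state $\hat{\mathbf{x}}^-(t_k)$ instead of $\hat{\mathbf{x}}^+(t_k)$. The relation $\mathbf{P}_a^+ = \mathbf{A}(\hat{\mathbf{x}}^-) \mathbf{P}_b^+ \mathbf{A}^T(\hat{\mathbf{x}}^-)$ from Lemma~\ref{thm:ineffectiveness_covariance_switch} superficially resembles the desired equivalence; the delicate point is that the arguments of $\mathbf{A}$ only line up at $\hat{\mathbf{x}}^-$, whereas Definition~\ref{def:covariance_equ} requires the argument to be the state that parametrises the next propagation step, namely $\hat{\mathbf{x}}^+$. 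My plan for this part is to show that, although $\hat{\mathbf{x}}_a^-(t_{k+1}) = \hat{\mathbf{x}}_{b}^-(t_{k+1})$ by a second application of Theorem~\ref{thm:prop_equivalence} (the state propagation is indifferent to which $\mathbf{A}$-argument was used in the covariance switch), the covariance hypothesis of that theorem is violated: the initial condition at $t_k^+$ is $\mathbf{A}(\hat{\mathbf{x}}^-(t_k))$-equivalent rather than $\mathbf{A}(\hat{\mathbf{x}}^+(t_k))$-equivalent. Propagating this mismatch across $[t_k, t_{k+1}]$ and unfolding the covariance ODE produces the inequality $\mathbf{P}_a^-(t_{k+1}) \neq \mathbf{A}(\hat{\mathbf{x}}^-(t_{k+1})) \mathbf{P}_{b,s}^-(t_{k+1}) \mathbf{A}^T(\hat{\mathbf{x}}^-(t_{k+1}))$ asserted in the statement, confirming that the predicted-state variant of the backward switch is unworkable beyond a single step.
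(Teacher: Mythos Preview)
Your proof is correct and follows essentially the same approach as the paper: you proceed by induction on the update instants, use Theorem~\ref{thm:prop_equivalence} to carry the equivalence across each propagation interval, identify that the forward switch makes $\mathbf{P}_{a,s}^-=\mathbf{P}_a^-$ so the ESKF(a) update is literally reproduced, and then use the backward switch at $\hat{\mathbf{x}}^+$ to restore the Definition~\ref{def:covariance_equ} equivalence that Theorem~\ref{thm:prop_equivalence} needs as input for the next step. Your treatment of the ``wrong'' backward switch (evaluated at $\hat{\mathbf{x}}^-$) is also the same: the states still match after propagation, but the covariance hypothesis of Theorem~\ref{thm:prop_equivalence} is broken at $t_k^+$, so the equivalence at $t_{k+1}^-$ fails.
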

According to Lemma~\ref{thm:effectiveness_covariance_switch}, the backward covariance switch must be performed using the updated system state as specified in \eqref{eq:sc_backward}.
\section{Covariance Transformation}
Based on the Lemma~\ref{thm:effectiveness_covariance_switch}, it is established that ESKF(b)-$\mathrm{S}_{b\rightarrow a}$ and ESKF(a) yield identical estimated system states. We now consider the difference between ESKF(b)-$\mathrm{S}_{b\rightarrow a}$ and ESKF(b). Under given initial conditions, after the first update, theorem~\ref{pr:result_equ} implies that $\hat{\mathbf{x}}_b^+ = \hat{\mathbf{x}}_a^+ = \hat{\mathbf{x}}_{b,s}^+$; however, the updated covariance matrices $\mathbf{P}_{b,s}^+$ and $\mathbf{P}_{b}^+$ differ. From equations \eqref{eq:p_update_relation} and~\eqref{eq:relation_a_bs}, it follows that
\begin{equation}
    \begin{aligned}
        &\mathbf{P}_{b,s}^+ \\
        &= \mathbf{A}^{-1}(\hat{\mathbf{x}}^+) \mathbf{P}_a^+ \mathbf{A}^{-T}(\hat{\mathbf{x}}^+) \\
        &= \mathbf{A}^{-1}(\hat{\mathbf{x}}^+) (\mathbf{A}(\hat{\mathbf{x}}^-) \mathbf{P}_b^+ \mathbf{A}^{T}(\hat{\mathbf{x}}^-)) \mathbf{A}^{-T}(\hat{\mathbf{x}}^+) \\
        &= T_{b \to a}(\hat{\mathbf{x}}^+, \hat{\mathbf{x}}^-) \mathbf{P}_b^+ T_{b \to a}^T(\hat{\mathbf{x}}^+, \hat{\mathbf{x}}^-),
    \end{aligned}
\end{equation}
\begin{equation}
    T_{b \to a}(\hat{\mathbf{x}}^+, \hat{\mathbf{x}}^-) = \mathbf{A}^{-1}(\hat{\mathbf{x}}^+) \mathbf{A}(\hat{\mathbf{x}}^-).
\end{equation}

Therefore, the only difference between ESKF(b)-$\mathrm{S}_{b\rightarrow a}$ and ESKF(b) lies in the covariance matrix after each update~\cite{Han2024,Han2025master}. As stated in~\cite{Han2024}, the covariance switch does not alter the updated state; it only modifies the updated covariance matrix. By applying a single covariance transformation $T_{b \to a}$ to the covariance matrix of ESKF(b) after each update, one can obtain system state estimates identical to those of ESKF(b)-$\mathrm{S}_{b\rightarrow a}$ and ESKF(a). This single-step transformation is referred to as the covariance transformation. We next use CT-ESKF to denote the ESKF algorithm incorporating the covariance transformation and let $\mathbf{P}_{b,CT}$ denote the covariance matrix of $\mathrm{CT}_{b \rightarrow a}$-ESKF(b), 
\begin{equation}
    \label{eq:transform_b_a}
        \mathbf{P}_{b,CT}^+ = \mathbf{P}_{b,s}^+ = T_{b \to a}(\hat{\mathbf{x}}^+, \hat{\mathbf{x}}^-) \mathbf{P}_b^+ T_{b \to a}^T(\hat{\mathbf{x}}^+, \hat{\mathbf{x}}^-).
\end{equation}
where the subscript $CT$ indicates the covariance transformation. The detailed algorithmic procedure is given in Algorithm~\ref{algo:covariance_transform_a_b}. Without causing ambiguity, $\mathrm{CT}_{b \rightarrow a}$-ESKF(b) refers to the covariance transformation from $b$ to $a$, and similarly, $\mathrm{CT}_{a \rightarrow b}$-ESKF(a) refers to the covariance transformation from $a$ to $b$.

We note that a similar idea also appeared in~\cite{song2024affine}, but its is restricted to the update steps of the specific VIO problem. The CT-ESKF is more general and can be applied to any state estimation problems.

\begin{algorithm}[hb!]
    \caption{Covariance Transformation Procedure} 
    \label{algo:covariance_transform_a_b}
    \begin{algorithmic}[1] 
        \Require $\hat{\mathbf{x}}_b^+(t_{k-1})$, $\boldsymbol{\xi}_{b}^+(t_{k-1})$, $\mathbf{P}_{b}^+(t_{k-1})$
        \Ensure updated $\hat{\mathbf{x}}_{b,CT}^+(t_k)$, $\mathbf{P}_{b,CT}^+(t_k)$
        \State \textbf{Propagation:}
        \[
        \begin{aligned}
            \hat{\mathbf{x}}_b^+(t_{k-1}) &\rightarrow  \hat{\mathbf{x}}_{b}^-(t_k),\\
            \boldsymbol{\xi}_{b}^+(t_{k-1}) &\rightarrow \boldsymbol{\xi}_{b}^-(t_k),\\ \mathbf{P}_{b}^+(t_{k-1}) &\rightarrow \mathbf{P}_{b}^-(t_k)
        \end{aligned}
        \]
        \State \textbf{$\mathbf{P}_b$ Update:} From the observation, obtain $\mathbf{H}_b$ and $\mathbf{R}$, then compute $\mathbf{K}_b$. Update as
        \[
        \mathbf{P}_{b}^+(t_k) = \mathbf{P}_{b}^-(t_k) - \mathbf{K}_b \mathbf{H}_b \mathbf{P}_{b}^-(t_k)
        \]
        \State \textbf{$\boldsymbol{\xi}_b$ Update:}
        \[
        \boldsymbol{\xi}_{b}^+(t_k) = \boldsymbol{\xi}_{b}^-(t_k) + \mathbf{K}_b \delta \mathbf{z}
        \]
        \State \textbf{State Update:}
        \[ \hat{\mathbf{x}}_{b}^-(t_{k}) \boxplus \boldsymbol{\xi}_{b}^+(t_{k}) = \hat{\mathbf{x}}_{b}^+(t_{k}) \] in which $\boxplus$ denote the step of absorbing error state in EKSF(b), 
        \State \textbf{Covariance Transformation~\eqref{eq:transform_b_a}:} 
        \[
        \begin{aligned}
            &\mathbf{P}_{b,CT}^+(t_k) \\
            &= T_{b \to a}(\hat{\mathbf{x}}^+(t_{k}), \hat{\mathbf{x}}^-(t_{k})) \mathbf{P}_b^+(t_{k}) T_{b \to a}^T(\hat{\mathbf{x}}^+(t_{k}), \hat{\mathbf{x}}^-(t_{k}))
        \end{aligned}
        \]
        \State \Return $\hat{\mathbf{x}}_{b,CT}^+(t_k)$, $\mathbf{P}_{b,CT}^+(t_k)$
    \end{algorithmic}
\end{algorithm}

\begin{theorem}[Covariance Transformation Relationships]\label{thm:P_relation_ct_s}
    The covariances of ESKF(a), ESKF(b), $\mathrm{CT}_{a \rightarrow b}$-ESKF(a), and $\mathrm{CT}_{b \rightarrow a}$-ESKF(b), denoted as $\mathbf{P}_a$, $\mathbf{P}_b$, $\mathbf{P}_{a,CT}$, and $\mathbf{P}_{b,CT}$, respectively, are related by invertible transformations. The covariance transformation matrices in CT-ESKF are given by
    \begin{equation}
        \label{eq:CT_first}
        T_{b \to a}(\hat{\mathbf{x}}^+, \hat{\mathbf{x}}^-) = \mathbf{A}^{-1}(\hat{\mathbf{x}}^+) \mathbf{A}(\hat{\mathbf{x}}^-),
    \end{equation}
    \begin{equation}
        \label{eq:CT_second}
        T_{a \to b}(\hat{\mathbf{x}}^+, \hat{\mathbf{x}}^-) = \mathbf{A}(\hat{\mathbf{x}}^+) \mathbf{A}^{-1}(\hat{\mathbf{x}}^-).
    \end{equation}
    \begin{proof}
    As illustrated in Fig.~\ref{fig:P_transform_a_b}, the bidirectional arrows indicate that the covariances can be transformed forth and back between each other.
    \begin{figure}[ht!]
        \centering
        \includegraphics[width=0.5\textwidth]{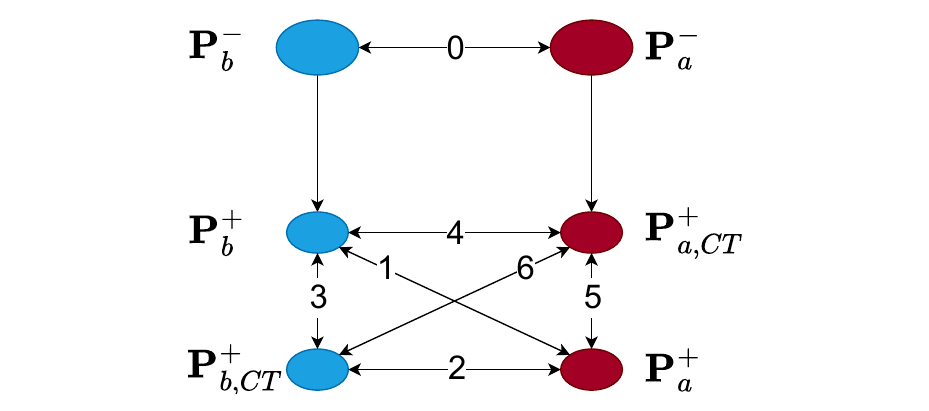}
        \caption{Transformation relationships (0-6) among different ESKF algorithms.}
        \label{fig:P_transform_a_b}
    \end{figure}
    The transformation relationship labeled as $0$ refers to the correspondence between the predicted covariances $\mathbf{P}_b^-$ and $\mathbf{P}_a^-$, as expressed in~\eqref{eq:P_relation_pred}.

Transformation relationship $1$ denotes the relationship between the updated covariances $\mathbf{P}_b^+$ and $\mathbf{P}_a^+$ after a single update step. According to Lemma~\ref{thm:ineffectiveness_covariance_switch}, this relationship satisfies~\eqref{eq:p_update_relation}.

Based on~\eqref{eq:relation_a_as}, $\mathbf{P}_{a}^+$ is equal to $\mathbf{P}_{a,s}^+$, and according to~\eqref{eq:transform_b_a}, $\mathbf{P}_{b,CT}^+$ equals $\mathbf{P}_{b,s}^+$. Therefore, transformation relationship~2 also describes the connection between $\mathbf{P}_{b,s}^+$ and $\mathbf{P}_{a,s}^+$, corresponding to the backward covariance switch in ESKF(b)-$\mathrm{S}_{b \to a}$, which satisfies~\eqref{eq:sc_backward}:
\begin{equation}
    \mathbf{P}_{b,CT}^+ = \mathbf{A}^{-1}(\hat{\mathbf{x}}^+) \mathbf{P}_{a}^+ \mathbf{A}^{-T}(\hat{\mathbf{x}}^+).
\end{equation}

Transformation relationship $3$ represents the covariance transformation from ESKF(b) to ESKF(a), expressed in~\eqref{eq:transform_b_a}. Based on the directional flow indicated by the arrows in the diagram, transformation $3$ can be derived by combining transformation relationships $1$ and $2$.

Similarly, transformation relationship $4$ corresponds to the backward covariance switch in ESKF(a)-$\mathrm{S}_{a\rightarrow b}$, which can be expressed as
\begin{equation}
    \mathbf{P}_{a,CT}^+ = \mathbf{P}_{a,s}^+ = \mathbf{A}(\hat{\mathbf{x}}^+) \mathbf{P}_b^+ \mathbf{A}^{T}(\hat{\mathbf{x}}^+).
\end{equation}

Transformation relationship $5$ represents the covariance transformation in $\mathrm{CT}_{a \rightarrow b}$-ESKF(a). By combining transformation relationships $1$ and $4$, we obtain
\begin{equation}
    \begin{aligned}
        \mathbf{P}_{a,CT}^+ &= \mathbf{A}(\hat{\mathbf{x}}^+) \mathbf{P}_b^+ \mathbf{A}^{T}(\hat{\mathbf{x}}^+) \\
        &= \mathbf{A}(\hat{\mathbf{x}}^+) \left(\mathbf{A}^{-1}(\hat{\mathbf{x}}^-) \, \mathbf{P}_a^+ \, \mathbf{A}^{-T}(\hat{\mathbf{x}}^-)\right) \mathbf{A}^{T}(\hat{\mathbf{x}}^+) \\
        &= T_{a \to b}(\hat{\mathbf{x}}^+,\hat{\mathbf{x}}^-) \mathbf{P}_a^+  T_{a \to b}^T(\hat{\mathbf{x}}^+,\hat{\mathbf{x}}^-).
    \end{aligned}
\end{equation}
Transformation relationship $6$ describes the connection between $\mathbf{P}_{a,CT}^+$ and $\mathbf{P}_{b,CT}^+$. Based on transformation relationships $2$ and $5$, it follows that
\begin{equation}
    \begin{aligned}
        &\mathbf{P}_{a,CT}^+ \\
        &= \mathbf{A}(\hat{\mathbf{x}}^+) \left( \mathbf{A}^{-1}(\hat{\mathbf{x}}^-) \, \mathbf{P}_a^+ \, \mathbf{A}^{-T}(\hat{\mathbf{x}}^-) \right) \mathbf{A}^{T}(\hat{\mathbf{x}}^+) \\
        &= \mathbf{A}(\hat{\mathbf{x}}^+) \mathbf{A}^{-1}(\hat{\mathbf{x}}^-) \, \mathbf{A}(\hat{\mathbf{x}}^+) \, \mathbf{P}_{b,CT}^+ \, \mathbf{A}^{T}(\hat{\mathbf{x}}^+) \, \mathbf{A}^{-T}(\hat{\mathbf{x}}^-) \mathbf{A}^{T}(\hat{\mathbf{x}}^+).
    \end{aligned}
\end{equation}
\end{proof}
\end{theorem}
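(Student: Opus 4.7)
The plan is to organize the proof around the six arrows in Fig.~\ref{fig:P_transform_a_b}, showing that each is an invertible map built out of $\mathbf{A}(\hat{\mathbf{x}}^-)$ and $\mathbf{A}(\hat{\mathbf{x}}^+)$, and then composing the appropriate pairs to recover \eqref{eq:CT_first} and \eqref{eq:CT_second}. First I would fix the two ``base'' arrows. Arrow 0, between the predicted covariances, is simply the hypothesis of equivalence \eqref{eq:P_relation_pred} carried from the previous step (by Theorem~\ref{thm:prop_equivalence} this is preserved through propagation). Arrow 1, between the post-update covariances $\mathbf{P}_a^+$ and $\mathbf{P}_b^+$ of the plain ESKFs, is handed to me by Lemma~\ref{thm:ineffectiveness_covariance_switch}: both are conjugated by $\mathbf{A}(\hat{\mathbf{x}}^-)$, cf.~\eqref{eq:p_update_relation}. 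These two arrows involve only $\hat{\mathbf{x}}^-$.

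Next I would bring in the switched/transformed covariances. Using \eqref{eq:relation_a_as} I identify $\mathbf{P}_{a,s}^+$ with $\mathbf{P}_a^+$, and by the definition \eqref{eq:transform_b_a} I identify $\mathbf{P}_{b,CT}^+$ with $\mathbf{P}_{b,s}^+$. Arrow 2 is then just the backward covariance switch \eqref{eq:sc_backward} evaluated at $\hat{\mathbf{x}}^+$, giving $\mathbf{P}_{b,CT}^+=\mathbf{A}^{-1}(\hat{\mathbf{x}}^+)\mathbf{P}_a^+\mathbf{A}^{-T}(\hat{\mathbf{x}}^+)$. Symmetrically, arrow 4 is $\mathbf{P}_{a,CT}^+=\mathbf{A}(\hat{\mathbf{x}}^+)\mathbf{P}_b^+\mathbf{A}^T(\hat{\mathbf{x}}^+)$, obtained from the $a\!\to\!b$ version of the same switch. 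These two arrows involve only $\hat{\mathbf{x}}^+$, and the interplay between them and arrow 1 is what produces the $\hat{\mathbf{x}}^-\!/\hat{\mathbf{x}}^+$ mismatch that forces a nontrivial $T$.

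With the four ``edge'' arrows in hand, the remaining two, arrows 3 and 5, are just compositions. For arrow 3 I substitute arrow 1 into arrow 2:
\begin{equation*}
\mathbf{P}_{b,CT}^+=\mathbf{A}^{-1}(\hat{\mathbf{x}}^+)\bigl(\mathbf{A}(\hat{\mathbf{x}}^-)\mathbf{P}_b^+\mathbf{A}^T(\hat{\mathbf{x}}^-)\bigr)\mathbf{A}^{-T}(\hat{\mathbf{x}}^+),
\end{equation*}
which matches \eqref{eq:CT_first}. For arrow 5 I substitute arrow 1 into arrow 4 in the same way to obtain the $T_{a\to b}$ form in \eqref{eq:CT_second}. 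Finally arrow 6, between $\mathbf{P}_{a,CT}^+$ and $\mathbf{P}_{b,CT}^+$, is obtained by chaining arrows 2 and 5 (or equivalently 4 and 3); invertibility is automatic since each factor $\mathbf{A}(\hat{\mathbf{x}}^\pm)$ is invertible by assumption in \eqref{eq:err_relation}.

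The only real subtlety, and the step I would be most careful about, is the bookkeeping of which state argument ($\hat{\mathbf{x}}^-$ vs.\ $\hat{\mathbf{x}}^+$) appears in each $\mathbf{A}$: arrows 0 and 1 use the predicted state, arrows 2 and 4 use the updated state, and it is precisely the failure of $\hat{\mathbf{x}}^+=\hat{\mathbf{x}}^-$ that keeps $T$ from collapsing to the identity. Once that is tracked cleanly, the diagram commutes and the two displayed formulas follow by straightforward substitution, so no new analytical ingredient is required beyond Lemma~\ref{thm:ineffectiveness_covariance_switch}, the switch equations \eqref{eq:sc_forward}--\eqref{eq:sc_backward}, and the identifications \eqref{eq:relation_a_as} and \eqref{eq:transform_b_a}.
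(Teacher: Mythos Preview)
Your proposal is correct and follows essentially the same route as the paper's proof: you organize the argument around the same six arrows, invoke Lemma~\ref{thm:ineffectiveness_covariance_switch} for arrow~1, the backward switch \eqref{eq:sc_backward} and the identifications \eqref{eq:relation_a_as}, \eqref{eq:transform_b_a} for arrows~2 and~4, and then compose to get arrows~3, 5, and~6 exactly as the paper does. Your explicit remark about tracking $\hat{\mathbf{x}}^-$ versus $\hat{\mathbf{x}}^+$ in each factor is the key bookkeeping point, and it matches the paper's derivation.
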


\begin{remark}
    It is important to note that covariance switch explicitly changes the current error state to another one, whereas the covariance transformation only need to adjust the distribution of the current covariance matrix. As shown in Theorem~\ref{thm:P_relation_ct_s}, when the update is performed by the conventional linearization-based Kalman filter, the updated covariance matrices of different error states satisfy the transformation relationship~1. Under this condition, covariance switch (transformation relationship 2) and covariance transformation (transformation relationship 3) yield identical results. However, when the ESKF incorporates iterative observation updates, transformation relationship~1 no longer holds, and two approaches lead to different estimation results.
\end{remark}

\begin{remark}
    The transformation matrix $\mathbf{A}(\hat{\mathbf{x}}^-)$ between $\boldsymbol{\xi}_a$ and $\boldsymbol{\xi}_b$ is not required to have a specific form; it only needs to be invertible. Consequently, the aforementioned results are applicable to various types of ESKF implementations with different error state definitions, such as the standard EKF, InEKF, state transformation EKF (ST-EKF)~\cite{wang2018state}, and Equivariant Filter (EqF)~\cite{van2022equivariant}, without any restriction on the state dimension. Therefore, the proposed CT-ESKF serves as a general framework that can incorporate any error-state Kalman filter as the original estimator, yielding different estimation results by adjusting the transformation matrix $\mathbf{T}$.

    Furthermore, if $\mathbf{T}$ is orthogonal, the transformation preserves the trace of the covariance matrix. Similarly, if $\det(\mathbf{T}) = \pm 1$, the determinant of the covariance matrix remains unchanged after the transformation.
\end{remark}

This implies that one can designate an ESKF with an arbitrary error state definition as the original filter and obtain potentially improved estimation performance by applying a single-step covariance transformation to adjust the updated covariance matrix. The matrix $\mathbf{T}$ depends on both the predicted and updated system states. When the Kalman gain is small, $\mathbf{T}$ will be close to be the identity matrix, resulting in little impact on the covariance matrix. Conversely, when the Kalman gain is large, the effect will become significant.

\begin{remark}
    
    If the invertible transformation matrix $\mathbf{A}$ between error states is independent of the system state, $\mathbf{T}$ in \eqref{eq:CT_first} and \eqref{eq:CT_second} become the identity matrix. Then, the updated covariance matrices of ESKF-(a) and ESKF-(b) are fully equivalent in terms of information and differ only in their distributional representations.
\end{remark}

So far, we have established the relationships among different ESKF formulations. While the connections between the L-InEKF and R-InEKF have been discussed in~\cite{ge2025difference,han2024covariance}, the analyses in this work are applicable to arbitrary ESKF variants.

Theorem~\ref{thm:P_relation_ct_s} shows that the covariance transformation matrix depends solely on the mapping between different error state definitions. This implies that the updated covariance reflects the system's uncertainty more accurately when the error state is appropriately designed.

Therefore, the error state formulation should be carefully constructed to better capture the geometric characteristics of the estimation problem, thereby enabling the updated covariance to more faithfully represent the system's actual uncertainty. In particular, by properly designing the transformation matrix $\mathbf{A}(\hat{\mathbf{x}})$ based on \eqref{eq:F_relation}, \eqref{eq:G_relation}, and \eqref{eq:H_relation}, one can ensure that the associated system matrices $\mathbf{F}_a$, $\mathbf{G}_a$, and $\mathbf{H}_a$ are independent of the trajectory. This leads to improved robustness and performance of the estimator.
\section{Preliminaries of EKF and InEKF}\label{sec:error_group}
The previous sections have addressed the general filtering methods with covariance switch and transformation. In this section, the theoretical analyses are validated by considering the EKF and InEKF as specific examples in integrated navigation.

\subsection{Linearization of the EKF Model}
This paper adopts the Earth-Centered Earth-Fixed (ECEF) frame, denoted as the $e$ frame, as the reference coordinate system. The navigation state to be estimated is given by \cite{groves2015principles}
\begin{equation}
    \mathbf{x}_{all} : \mathbf{C}_b^e, \quad \mathbf{v}_{eb}^e, \quad \mathbf{r}_{eb}^e, \quad \mathbf{b}_g, \quad \mathbf{b}_a,
\end{equation}
where $b$ denotes the body frame, $\mathbf{v}_{eb}^e$ and $\mathbf{r}_{eb}^e$ represent the velocity and position of the vehicle with respect to the $e$ frame, expressed in the $e$ frame. The matrix $\mathbf{C}_{\alpha}^{\beta}$ denotes the rotation from $\alpha$ frame to $\beta$ frame. $\mathbf{b}_g$ and $\mathbf{b}_a$ represent gyroscope and accelerometer biases, respectively. 

The IMU measurement errors are modeled as
\begin{align}
    \tilde{\boldsymbol{\omega}}_{ib}^b - \boldsymbol{\omega}_{ib}^b &= \mathbf{b}_g + \mathbf{w}_g, \\
    \tilde{\mathbf{f}}_{ib}^b - \mathbf{f}_{ib}^b &= \mathbf{b}_a + \mathbf{w}_a,
\end{align}
where $\tilde{\boldsymbol{\omega}}_{ib}^b$ and $\tilde{\mathbf{f}}_{ib}^b$ are the measured angular velocity and specific force by the IMU, respectively. The subscript $i$ denotes the inertial frame (Earth-Centered Inertial frame). The noise terms $\mathbf{w}_g$ and $\mathbf{w}_a$ correspond to the measurement noise of the gyroscope and accelerometer, respectively.

The INS kinematic model expressed in the $e$ frame is given by \cite{groves2015principles}
\begin{equation}
\begin{aligned}
    \dot{\mathbf{C}}_b^e &= \mathbf{C}_b^e (\boldsymbol{\omega}_{ib}^b \times) - (\boldsymbol{\omega}_{ie}^e \times) \mathbf{C}_b^e, \\
    \dot{\mathbf{v}}_{eb}^e &= \mathbf{C}_b^e \mathbf{f}_{ib}^b - 2(\boldsymbol{\omega}_{ie}^e \times)\mathbf{v}_{eb}^e + \mathbf{g}^e, \\
    \dot{\mathbf{r}}_{eb}^e &= \mathbf{v}_{eb}^e,  \\
    \dot{\mathbf{b}}_g &= \mathbf{w}_{bg}, \\
    \dot{\mathbf{b}}_a &= \mathbf{w}_{ba},
\end{aligned}
\end{equation}
where $(\cdot \times)$ denotes the skew-symmetric matrix associated with a vector. $\mathbf{w}_{bg}$ and $\mathbf{w}_{ba}$ represent the noise of the gyroscope bias and accelerometer bias, respectively. $\mathbf{g}^e$ denotes the gravity vector in the $e$ frame, and $\boldsymbol{\omega}_{ie}^e$ is the earth rotation rate expressed in the $e$ frame.

The EKF error state $\delta \mathbf{x}$ includes errors in attitude, velocity, and position. The augmented error state including the gyroscope and accelerometer biases is denoted as $\delta \mathbf{x}_{all}$, given by
\begin{equation}
    \delta \mathbf{x} = \left[\mathbf{\varphi}^T \quad \delta \mathbf{v}^{eT} \quad \delta \mathbf{r}^{eT} \right]^T,
\end{equation}
\begin{equation}
    \delta \mathbf{x}_{all} = \left[\delta \mathbf{x}^T \quad \delta\mathbf{b}_g^T \quad \delta\mathbf{b}_a^T \right]^T,
\end{equation}
where the error states are specifically defined as
\begin{equation}
\label{eq:delta_x_definition_E}
\begin{aligned}
    \hat{\mathbf{C}}_b^e \mathbf{C}_e^b &\approx \mathbf{I}_3 + \mathbf{\varphi} \times, \\
    \delta \mathbf{v}^e &= \hat{\mathbf{v}}_{eb}^e - \mathbf{v}_{eb}^e, \\
    \delta \mathbf{r}^e &= \hat{\mathbf{r}}_{eb}^e - \mathbf{r}_{eb}^e, \\
    \delta \mathbf{b}_g &= \hat{\mathbf{b}}_g - \mathbf{b}_g, \\
    \delta \mathbf{b}_a &= \hat{\mathbf{b}}_a - \mathbf{b}_a.
\end{aligned}
\end{equation}
The linearized system model of the EKF is given by~\cite{groves2015principles}
\begin{equation}
    \delta \dot{\mathbf{x}}_{all} = \mathbf{F}_{ekf} \, \delta \mathbf{x}_{all} + \mathbf{G}_{ekf} \, \mathbf{w},
\end{equation}
where
\begin{equation}
\mathbf{F}_{ekf} = \begin{bmatrix}
-(\boldsymbol{\omega}_{ie}^e \times) & \mathbf{0}_{3} & \mathbf{0}_{3} & \hat{\mathbf{C}}_b^e & \mathbf{0}_{3} \\
-(\hat{\mathbf{C}}_b^e \hat{\mathbf{f}}_{ib}^b) \times & -2(\boldsymbol{\omega}_{ie}^e \times) & \mathbf{0}_{3} & \mathbf{0}_3 & \hat{\mathbf{C}}_b^e \\
\mathbf{0}_{3} & \mathbf{I}_{3} & \mathbf{0}_{3} & \mathbf{0}_{3} & \mathbf{0}_{3} \\
\mathbf{0}_{3} & \mathbf{0}_{3} & \mathbf{0}_{3} & \mathbf{0}_{3} & \mathbf{0}_{3} \\
\mathbf{0}_{3} & \mathbf{0}_{3} & \mathbf{0}_{3} & \mathbf{0}_{3} & \mathbf{0}_{3}
\end{bmatrix},
\end{equation}
\begin{equation}
\mathbf{G}_{ekf} =  \left[
\begin{array}{ccccc}
\hat{\mathbf{C}}_b^e & \mathbf{0}_{3} & \mathbf{0}_{3} & \mathbf{0}_{3} \\
\mathbf{0}_3 & \hat{\mathbf{C}}_b^e & \mathbf{0}_{3} & \mathbf{0}_{3} \\
\mathbf{0}_3 & \mathbf{0}_{3} & \mathbf{0}_{3} & \mathbf{0}_{3} \\
\mathbf{0}_{3} & \mathbf{0}_{3} & \mathbf{I}_{3} & \mathbf{0}_{3} \\
\mathbf{0}_{3} & \mathbf{0}_{3} & \mathbf{0}_{3} & \mathbf{I}_{3} \\
\end{array} \right],
\end{equation}

\begin{equation}
    \mathbf{w} =  \left[
\begin{array}{ccc}
\mathbf{w}_g^T \quad \mathbf{w}_a^T \quad \mathbf{w}_{bg}^T \quad \mathbf{w}_{ba}^T
\end{array}\right]^T. \\ 
\end{equation}
Here, $\mathbf{0}_n$ denotes the $n \times n$ zero matrix, and $\mathbf{I}_n$ denotes the $n \times n$ identity matrix.
\subsection{Lie Group and State Representation in InEKF}
A matrix Lie group $G$ is a smooth manifold consisting of invertible $N \times N$ matrices. Its associated Lie algebra $\mathfrak{g}$ is the tangent space of $G$ at the identity element and has the same dimension as $G$~\cite{barrau2016invariant}. The mappings between the Lie algebra and its vector space representation are given by
\begin{equation}
    (\cdot)^{\wedge} : \mathbb{R}^{\dim(\mathfrak{g})} \rightarrow \mathfrak{g}, \,
    (\cdot)^{\vee} : \mathfrak{g} \rightarrow \mathbb{R}^{\dim(\mathfrak{g})}.
\end{equation}

where $\dim(\mathfrak{g})$ denotes the dimension of the $\mathfrak{g}$. The relationship among matrix Lie groups, Lie algebras, and vector spaces can be expressed as~\cite{sola2018micro}
\begin{equation}
\begin{aligned}
    &\forall \, \boldsymbol{\chi} \in G, \exists \, \boldsymbol{\xi} \in \mathbb{R}^{\dim(\mathfrak{g})}, \boldsymbol{\xi}^{\wedge} \in \mathfrak{g}, \\
    &\text{such that } \boldsymbol{\chi} = \exp_m(\boldsymbol{\xi}^{\wedge}) = \operatorname{Exp}(\boldsymbol{\xi}),
\end{aligned}
\end{equation}
where $\exp_m(\cdot)$ denotes the matrix exponential, and $\operatorname{Exp}(\cdot)$ represents the mapping from the Euclidean space \(\mathbb{R}^{\dim(\mathfrak{g})}\) to the Lie group \(G\).

For all $\boldsymbol{\chi} \in G$ and $\boldsymbol{\xi}^{\wedge} \in \mathfrak{g}$, there exists an associated adjoint matrix $\mathbf{Ad}_{\boldsymbol{\chi}}$~\cite{sola2018micro}, such that
\begin{align}
    \label{eq:Ad}
    (\mathbf{Ad}_{\boldsymbol{\chi}} \boldsymbol{\xi})^{\wedge} &= \boldsymbol{\chi} \boldsymbol{\xi}^{\wedge} \boldsymbol{\chi}^{-1}, \\
    \mathbf{Ad}_{\boldsymbol{\chi}} \mathbf{Ad}_{\boldsymbol{\chi}}^{-1} &= \mathbf{I}_N,
\end{align}
where the adjoint representation $\mathbf{Ad}_{\boldsymbol{\chi}}$ maps elements of the Lie algebra $\mathfrak{g}$ under the group action of $\boldsymbol{\chi} \in G$.

To preserve the group-affine property of the dynamic process model~\cite{barrau2020mathematical, luo2021se_2}, Barrau \textit{et al.} introduced the modified velocity $\mathbf{v}_{ib}^e$ and position $\mathbf{r}_{ib}^e$, which are actually the velocity and position of the body frame with respect to the $i$ frame, expressed in the $e$ frame. The corresponding INS kinematic model in the $e$ frame is given by~\cite{chang2022log}
\begin{equation}
\begin{aligned}
    \label{eq:diff_affine_group}
    \dot{\mathbf{C}}_b^e &= \mathbf{C}_b^e (\boldsymbol{\omega}_{ib}^b \times) - (\boldsymbol{\omega}_{ie}^e \times) \mathbf{C}_b^e, \\
    \dot{\mathbf{v}}_{ib}^e &= \mathbf{C}_b^e \mathbf{f}_{ib}^b - (\boldsymbol{\omega}_{ie}^e \times) \mathbf{v}_{ib}^e + \mathbf{G}_{ib}^e, \\
    \dot{\mathbf{r}}_{ib}^e &= \mathbf{v}_{ib}^e - (\boldsymbol{\omega}_{ie}^e \times) \mathbf{r}_{ib}^e, \\
    \dot{\mathbf{b}}_g &= \mathbf{w}_{bg}, \\
    \dot{\mathbf{b}}_a &= \mathbf{w}_{ba}.
\end{aligned}
\end{equation}
The gravitational acceleration $\mathbf{G}_{ib}^e$ is related to the local gravity vector $\mathbf{g}^e$ by
\begin{equation}
    \mathbf{G}_{ib}^e = \mathbf{g}^e + (\boldsymbol{\omega}_{ie}^e \times)^2 \mathbf{r}_{eb}^e.
\end{equation}
For simplicity, the velocities $\mathbf{v}_{eb}^e$ and $\mathbf{v}_{ib}^e$ are respectively denoted by $\mathbf{v}^e$ and $\overline{\mathbf{v}}^e$. Since $\mathbf{r}_{eb}^e$ and $\mathbf{r}_{ib}^e$ represent the same physical quantity, they are both denoted by $\mathbf{r}^e$~\cite{groves2015principles}. Furthermore, the skew-symmetric matrices $(\boldsymbol{\omega}_{ie}^e \times)$ and $(\boldsymbol{\omega}_{ib}^b \times)$ are abbreviated as $\boldsymbol{\Omega}_{ie}^e$ and $\boldsymbol{\Omega}_{ib}^b$, respectively.

Neglecting the augmented states associated with sensor biases, the rotation matrix $\mathbf{C}_b^e$, the modified velocity $\overline{\mathbf{v}}^e$, and the position $\mathbf{r}^e$ can be compactly represented on the Lie group $\mathbb{SE}_2(3)$~\cite{barrau2020mathematical, chang2022log} as
\begin{equation}
    \boldsymbol{\chi} = 
    \begin{bmatrix}
        \mathbf{C}_b^e & \overline{\mathbf{v}}^e & \mathbf{r}^e \\
        \mathbf{0}_{1 \times 3} & 1 & 0 \\
        \mathbf{0}_{1 \times 3} & 0 & 1
    \end{bmatrix},
\end{equation}
where $\boldsymbol{\chi} \in \mathbb{SE}_2(3)$ denotes the navigation state in the InEKF formulation. Based on the dynamics given in~\eqref{eq:diff_affine_group}, the time derivative of $\boldsymbol{\chi}$ can be expressed as~\cite{chang2022log, tang2022invariant}
\begin{equation}
    \label{eq:chi_diff}
    \begin{aligned}
        &\dot{\boldsymbol{\chi}} = f_u(\boldsymbol{\chi}) \\
        &= 
        \begin{bmatrix}
            \mathbf{C}_b^e \boldsymbol{\Omega}_{ib}^b - \boldsymbol{\Omega}_{ie}^e \mathbf{C}_b^e & \mathbf{C}_b^e \mathbf{f}_{ib}^b - \boldsymbol{\Omega}_{ie}^e \overline{\mathbf{v}}^e + \mathbf{G}_{ib}^e & \mathbf{v}_{ib}^e - \boldsymbol{\Omega}_{ie}^e \mathbf{r}^e \\
            \mathbf{0}_{1\times 3} & 0 & 0 \\
            \mathbf{0}_{1\times 3} & 0 & 0
        \end{bmatrix} \\
        &= \boldsymbol{\chi}
        \begin{bmatrix}
            \boldsymbol{\Omega}_{ib}^b & \mathbf{f}_{ib}^b & \mathbf{0}_{3\times1} \\
            \mathbf{0}_{1\times3} & 0 & 1 \\
            \mathbf{0}_{1\times3} & 0 & 0
        \end{bmatrix}
        +
        \begin{bmatrix}
            -\boldsymbol{\Omega}_{ie}^e & \mathbf{G}_{ib}^e & \mathbf{0}_{3\times1} \\
            \mathbf{0}_{1\times3} & 0 & -1 \\
            \mathbf{0}_{1\times3} & 0 & 0
        \end{bmatrix}
        \boldsymbol{\chi} \\
        &= \boldsymbol{\chi} \mathbf{W} + \mathbf{U} \boldsymbol{\chi},
    \end{aligned}
\end{equation}
where $\mathbf{W}$ and $\mathbf{U}$ are independent of the estimated system states under the assumption that the gravitational acceleration $\mathbf{G}_{ib}^e$ is invariant with respect to position, i.e., variations due to motion are neglected.

It has been shown in~\cite{barrau2016invariant, chang2022log} that for any $\boldsymbol{\chi}_1, \boldsymbol{\chi}_2 \in \mathbb{SE}_2(3)$, both of which are solutions to~\eqref{eq:chi_diff}, the system satisfies the group-affine property
\begin{equation}
    \label{eq:group_affine}
    f_u(\boldsymbol{\chi}_1 \boldsymbol{\chi}_2) = f_u(\boldsymbol{\chi}_1)\boldsymbol{\chi}_2 + \boldsymbol{\chi}_1 f_u(\boldsymbol{\chi}_2) - \boldsymbol{\chi}_1 f_u(\mathbf{I}_d)\boldsymbol{\chi}_2,
\end{equation}
where $\mathbf{I}_d \in \mathbb{SE}_2(3)$ denotes the identity element, with identity rotation and zero translation and velocity components. This condition assures that the system matrices corresponding to the left- and right-invariant errors are independent of the system trajectory~\cite{barrau2016invariant}.
\subsection{Linearization of the InEKF Model}
The left-invariant error of L-InEKF is defined as
\begin{equation}
    \label{eq:eta_l_definition_E}
    \boldsymbol{\eta}_l = \hat{\boldsymbol{\chi}}^{-1} \boldsymbol{\chi} = 
    \begin{bmatrix}
        \hat{\mathbf{C}}_e^b \mathbf{C}_b^e & \hat{\mathbf{C}}_e^b ( \overline{\mathbf{v}}^e - \hat{\overline{\mathbf{v}}}^e ) & \hat{\mathbf{C}}_e^b ( \mathbf{r}^e - \hat{\mathbf{r}}^e ) \\
        \mathbf{0}_{1\times3} & 1 & 0 \\
        \mathbf{0}_{1\times3} & 0 & 1
    \end{bmatrix},
\end{equation}
where $\hat{\boldsymbol{\chi}}$ and $\boldsymbol{\chi}$ denote the estimated and true states on the Lie group $\mathbb{SE}_2(3)$, respectively.

The attitude, velocity, and position errors in the left-invariant formulation are defined respectively as
\begin{align}
    \label{eq:xi_l_definition_E}
    \hat{\mathbf{C}}_e^b \mathbf{C}_b^e &\approx \mathbf{I}_3 + \boldsymbol{\varphi}_l\times, \\
    \mathbf{d}\overline{\mathbf{v}}_l &= \hat{\mathbf{C}}_e^b ( \overline{\mathbf{v}}^e - \hat{\overline{\mathbf{v}}}^e ) = -\hat{\mathbf{C}}_e^b \, \delta\overline{\mathbf{v}}^e, \\
    \mathbf{d}\mathbf{r}_l &= \hat{\mathbf{C}}_e^b ( \mathbf{r}^e - \hat{\mathbf{r}}^e ) = -\hat{\mathbf{C}}_e^b \, \delta\mathbf{r}^e.
\end{align}
Under the small-error assumption, the invariant error $\boldsymbol{\eta}_l$ can be approximated by
\begin{equation}
    \boldsymbol{\eta}_l \approx \mathbf{I}_5 + \boldsymbol{\xi}_l^{\wedge},
\end{equation}
where $\boldsymbol{\xi}_l^{\wedge} \in \mathfrak{se}_2(3)$ is the left-invariant error expressed in the Lie algebra,
\begin{equation}
\boldsymbol{\xi}_{l}=\left[\mathbf{\varphi}_l^T \quad \mathbf{d}\overline{\mathbf{v}}_l^T \quad \mathbf{dr}_l^T \right]^{\mathbf{T}}. \\
\end{equation}
To incorporate the IMU gyroscope and accelerometer biases into the left-invariant error, the augmented left-invariant error state is defined as
\begin{equation}
    \boldsymbol{\xi}_{l(\mathrm{all})} = 
    \begin{bmatrix}
        \boldsymbol{\xi}_l^T & \delta \mathbf{b}_g^T & \delta \mathbf{b}_a^T
    \end{bmatrix}^T.
\end{equation}
The corresponding linearized error dynamics of the augmented system are then given by
\begin{equation}
    \dot{\boldsymbol{\xi}}_{l(\mathrm{all})} = \mathbf{F}_l \boldsymbol{\xi}_{l(\mathrm{all})} + \mathbf{G}_l \mathbf{w},
\end{equation}
where $\mathbf{F}_l$ is the system Jacobian matrix, $\mathbf{G}_l$ is the driving matrix,
\begin{equation}
\mathbf{F}_l = \left[\begin{array}{ccccc}
-\hat{\boldsymbol{\Omega}}_{ib}^b & \mathbf{0}_{3} & \mathbf{0}_{3} & -\mathbf{I}_{3} & \mathbf{0}_{3} \\
-\hat{\mathbf{f}}_{ib}^b \times & -\hat{\boldsymbol{\Omega}}_{ib}^b & \mathbf{0}_{3} & \mathbf{0}_{3} & -\mathbf{I}_{3} \\
\mathbf{0}_{3} & \mathbf{I}_{3} & -\hat{\boldsymbol{\Omega}}_{ib}^b & \mathbf{0}_{3} & \mathbf{0}_{3} \\
\mathbf{0}_{3} & \mathbf{0}_{3} & \mathbf{0}_{3} & \mathbf{0}_{3} & \mathbf{0}_{3} \\
\mathbf{0}_{3} & \mathbf{0}_{3} & \mathbf{0}_{3} & \mathbf{0}_{3} & \mathbf{0}_{3} \\
\end{array}\right],
\end{equation}
\begin{equation}
\mathbf{G}_l = \left[\begin{array}{ccccc}
-\mathbf{I}_{3} & \mathbf{0}_{3} & \mathbf{0}_{3} & \mathbf{0}_{3} \\
\mathbf{0}_{3} & -\mathbf{I}_{3} & \mathbf{0}_{3} & \mathbf{0}_{3} \\
\mathbf{0}_{3} & \mathbf{0}_{3} & \mathbf{0}_{3} & \mathbf{0}_{3} \\
\mathbf{0}_{3} & \mathbf{0}_{3} & \mathbf{I}_{3} & \mathbf{0}_{3} \\
\mathbf{0}_{3} & \mathbf{0}_{3} & \mathbf{0}_{3} & \mathbf{I}_{3} \\
\end{array}\right].
\end{equation}

The right-invariant error of R-InEKF is defined as
\begin{equation}
    \label{eq:eta_r_definition_E}
    \boldsymbol{\eta}_r = \boldsymbol{\chi} \hat{\boldsymbol{\chi}}^{-1} = 
    \begin{bmatrix}
        \mathbf{C}_b^e \hat{\mathbf{C}}_e^b & \overline{\mathbf{v}}^e - \mathbf{C}_b^e \hat{\mathbf{C}}_e^b \hat{\overline{\mathbf{v}}}^e & \mathbf{r}^e - \mathbf{C}_b^e \hat{\mathbf{C}}_e^b \hat{\mathbf{r}}^e \\
        \mathbf{0}_{1 \times 3} & 1 & 0 \\
        \mathbf{0}_{1 \times 3} & 0 & 1
    \end{bmatrix}.
\end{equation}
Let $\boldsymbol{\varphi}_r$, $\mathbf{d}\overline{\mathbf{v}}_r$, and $\mathbf{d}\mathbf{r}_r$ denote the attitude, velocity, and position errors, respectively, in the R-InEKF. These errors are defined as
\begin{align}
    \label{eq:xi_r_definition_E}
    \mathbf{C}_b^e \hat{\mathbf{C}}_e^b &\approx \mathbf{I}_{3} + \boldsymbol{\varphi}_r\times, \\
    \mathbf{d}\overline{\mathbf{v}}_r &= \overline{\mathbf{v}}^e - \mathbf{C}_b^e \hat{\mathbf{C}}_e^b \hat{\overline{\mathbf{v}}}^e = -\delta \overline{\mathbf{v}}^e - (\boldsymbol{\varphi}_r)\times \hat{\overline{\mathbf{v}}}^e, \\
    \mathbf{d}\mathbf{r}_r &= \mathbf{r}^e - \mathbf{C}_b^e \hat{\mathbf{C}}_e^b \hat{\mathbf{r}}^e = -\delta \mathbf{r}^e - (\boldsymbol{\varphi}_r)\times \hat{\mathbf{r}}^e.
\end{align}
Assuming errors are small, and the group error $\boldsymbol{\eta}_r$ can be approximated as
\begin{equation}
    \boldsymbol{\eta}_r \approx \mathbf{I}_5 + \boldsymbol{\xi}_r^{\wedge},
\end{equation}
where the right-invariant error vector is given by
\begin{equation}
    \boldsymbol{\xi}_r =
    \begin{bmatrix}
        \boldsymbol{\varphi}_r^T & \mathbf{d}\overline{\mathbf{v}}_r^T & \mathbf{d}\mathbf{r}_r^T
    \end{bmatrix}^T.
\end{equation}
The augmented right-invariant error vector is defined as
\begin{equation}
    \boldsymbol{\xi}_{r(\mathrm{all})} =
    \begin{bmatrix}
        \boldsymbol{\xi}_r^T & \delta \mathbf{b}_g^T & \delta \mathbf{b}_a^T
    \end{bmatrix}^T.
\end{equation}
The corresponding linearized continuous-time error dynamics for the R-InEKF are given by
\begin{equation}
    \dot{\boldsymbol{\xi}}_{r(\mathrm{all})} = \mathbf{F}_r \boldsymbol{\xi}_{r(\mathrm{all})} + \mathbf{G}_r \mathbf{w},
\end{equation}
where the state transition matrix $\mathbf{F}_r$ and noise driving matrix $\mathbf{G}_r$ are given by
\begin{equation}
    \mathbf{F}_r =
    \begin{bmatrix}
        -\boldsymbol{\Omega}_{ie}^e & \mathbf{0}_{3} & \mathbf{0}_{3} & -\hat{\mathbf{C}}_b^e & \mathbf{0}_{3} \\
        \left(\mathbf{G}_{ib}^e\right)\times & -\boldsymbol{\Omega}_{ie}^e & \mathbf{0}_{3} & -\hat{\overline{\mathbf{v}}}^e\times \hat{\mathbf{C}}_b^e & -\hat{\mathbf{C}}_b^e \\
        \mathbf{0}_{3} & \mathbf{I}_{3} & -\boldsymbol{\Omega}_{ie}^e & -\hat{\mathbf{r}}^e\times \hat{\mathbf{C}}_b^e & \mathbf{0}_{3} \\
        \mathbf{0}_{3} & \mathbf{0}_{3} & \mathbf{0}_{3} & \mathbf{0}_{3} & \mathbf{0}_{3} \\
        \mathbf{0}_{3} & \mathbf{0}_{3} & \mathbf{0}_{3} & \mathbf{0}_{3} & \mathbf{0}_{3}
    \end{bmatrix},
\end{equation}
\begin{equation}
    \mathbf{G}_r =
    \begin{bmatrix}
        -\hat{\mathbf{C}}_b^e & \mathbf{0}_{3} & \mathbf{0}_{3} & \mathbf{0}_{3} \\
        -\hat{\overline{\mathbf{v}}}^e\times \hat{\mathbf{C}}_b^e & -\hat{\mathbf{C}}_b^e & \mathbf{0}_{3} & \mathbf{0}_{3} \\
        -\hat{\mathbf{r}}^e\times \hat{\mathbf{C}}_b^e & \mathbf{0}_{3} & \mathbf{0}_{3} & \mathbf{0}_{3} \\
        \mathbf{0}_{3} & \mathbf{0}_{3} & \mathbf{I}_{3} & \mathbf{0}_{3} \\
        \mathbf{0}_{3} & \mathbf{0}_{3} & \mathbf{0}_{3} & \mathbf{I}_{3}
    \end{bmatrix}.
\end{equation}

It is worth noting that when the bias terms are not included (i.e., without state augmentation), the Jacobian matrices $\mathbf{F}_l$ and $\mathbf{F}_r$ become independent of the current state estimate, and InEKF with additive biases are named as "imperfect" InEKF in~\cite{hartley2020contact}.
\subsection{Observation Model} \label{sec:observation_matrix}
To mitigate the error accumulation of INS, GNSS and ODO are commonly used as auxiliary sensors for integrated navigation. GNSS provides global observations of position and velocity, which are particularly compatible with left-invariant error models. When the GNSS velocity observation is employed, the observation model is formulated as follows
\begin{equation}
\begin{aligned}
    \tilde{\mathbf{y}}_{\text{GNSS(vel)}} &= \mathbf{v}_{\text{GNSS}}^e + \mathbf{n}_{\text{GNSS(vel)}}, \\
    \quad \mathbf{n}_{\text{GNSS(vel)}} &\sim \mathcal{N}(\mathbf{0}_{3 \times 1}, R_{\text{GNSS(vel)}}),
\end{aligned}
\end{equation}
where $\mathbf{v}_{\text{GNSS}}^e$ denotes the velocity measured by GNSS in the $e$ frame, and $\mathbf{n}_{\text{GNSS(vel)}}$ represents zero-mean Gaussian noise with covariance $R_{\text{GNSS(vel)}}$.

The innovation is defined as the difference between the state estimate predicted by the INS and the observation provided by the external sensor, i.e.,
\begin{equation}
    \delta \mathbf{z} = \hat{\mathbf{y}}_{\text{INS}} - \tilde{\mathbf{y}}_{\text{GNSS}}.
\end{equation}

The corresponding observation matrices for the standard EKF, L-InEKF, and R-InEKF are given by
\begin{equation}
\begin{aligned}
    \mathbf{H}_{\text{ekf(vel)}} &= 
    \begin{bmatrix}
        \mathbf{0}_3 & \mathbf{I}_3 & \mathbf{0}_3 & \mathbf{0}_3 & \mathbf{0}_3
    \end{bmatrix}, \\
    \mathbf{H}_{l(\text{vel})} &= 
    \begin{bmatrix}
        \mathbf{0}_3 & -\hat{\mathbf{C}}_b^e & \boldsymbol{\Omega}_{ie}^e \hat{\mathbf{C}}_b^e & \mathbf{0}_3 & \mathbf{0}_3
    \end{bmatrix}, \\
    \mathbf{H}_{r(\text{vel})} &= 
    \begin{bmatrix}
        (\hat{\overline{\mathbf{v}}}^e)\times - (\boldsymbol{\Omega}_{ie}^e \hat{\mathbf{r}}^e)\times & -\mathbf{I}_3 & \boldsymbol{\Omega}_{ie}^e & \mathbf{0}_3 & \mathbf{0}_3
    \end{bmatrix}.
\end{aligned}
\end{equation}

ODO observations are typically the velocity of non-steering wheels, which are aligned with the forward axis of the body frame of the vehicle. Assuming the lateral and vertical velocity of the vehicle are negligible, the ODO observation model can be expressed as
\begin{equation}
    \tilde{\mathbf{y}}_{\text{ODO}} = \mathbf{v}_{eb}^b + \mathbf{n}_{\text{ODO}}, \,
    \mathbf{n}_{\text{ODO}} \sim \mathcal{N}(\mathbf{0}_{3 \times 1}, R_{\text{ODO}}),
\end{equation}
where $\mathbf{n}_{\text{ODO}}$ represents zero-mean Gaussian noise.


The corresponding observation matrices for the EKF, L-InEKF, and R-InEKF are respectively given as
\begin{equation}
\begin{aligned}
    &\mathbf{H}_{\text{ekf(odo)}} = 
    \begin{bmatrix}
        \hat{\mathbf{C}}_e^b \hat{\mathbf{v}}^e \times & \hat{\mathbf{C}}_e^b & \mathbf{0}_3 & \mathbf{0}_3 & \mathbf{0}_3
    \end{bmatrix}, \\
    &\mathbf{H}_{l(\text{odo})} \\ 
    &= \begin{bmatrix}
        (\hat{\mathbf{C}}_e^b (-\hat{\overline{\mathbf{v}}}^e + \boldsymbol{\Omega}_{ie}^e \hat{\mathbf{r}}^e))\times & -\mathbf{I}_3 & \hat{\mathbf{C}}_e^b \boldsymbol{\Omega}_{ie}^e \hat{\mathbf{C}}_b^e & \mathbf{0}_3 & \mathbf{0}_3
    \end{bmatrix}, \\
    &\mathbf{H}_{r(\text{odo})} = 
    \begin{bmatrix}
        -\hat{\mathbf{C}}_e^b (\hat{\mathbf{r}}^e)\times \boldsymbol{\Omega}_{ie}^e & -\hat{\mathbf{C}}_e^b & \hat{\mathbf{C}}_e^b \boldsymbol{\Omega}_{ie}^e & \mathbf{0}_3 & \mathbf{0}_3
    \end{bmatrix}.
\end{aligned}
\end{equation}
\subsection{Relationship Between EKF and InEKF}
This section compares and analyzes the similarities and differences between EKF and InEKF from the perspectives of uncertainty propagation, state update, and covariance transformation.

For systems that satisfy the group-affine property \eqref{eq:group_affine}, the system matrix of the InEKF becomes independent of the trajectory~\cite{barrau2016invariant}. It is generally believed that the InEKF allows for uncertainty propagation without requiring linearization about the current estimated state, which is unavoidable for the EKF~\cite{zhang2024si, luo2023filter}. It is because of this trajectory-independent attribute that InEKF offers improved covariance propagation in such systems.

The relationship between the error states of EKF and those of InEKF, along with the corresponding transformation matrices, is detailed in Appendix~\ref{appx:transformation_relationship}. Specifically,
\begin{equation}
\begin{aligned}
    \label{eq:relation_three_error}
    \boldsymbol{\xi}_l &= \mathbf{J}_l \delta \mathbf{x}, \\
    \boldsymbol{\xi}_r &= \mathbf{J}_r \delta \mathbf{x}, \\
    \boldsymbol{\xi}_r &= \text{Ad}_{\hat{\boldsymbol{\chi}}} \boldsymbol{\xi}_l.
\end{aligned}
\end{equation}
According to Theorem~\ref{thm:prop_equivalence}, if the InEKF and EKF share the same initial system states and equivalent covariance matrices, their error state and covariance propagation results are equivalent. This equivalence holds regardless of whether the system matrices are trajectory-independent, indicating that the propagation processes of filters based on different error state parameterizations are fundamentally equivalent. The explicit forms of the covariance transformation matrices are provided in Appendix~\ref{appx:trans_mat}.



\begin{remark}
     It can be shown that the determinant of the covariance transformation matrix in Appendix~\ref{appx:trans_mat} is equal to one. Hence, the determinant of the covariance matrix in the InEKF is identical to that of the EKF. 
\end{remark}

%
\section{Experimental Results and Analysis}
To verify the equivalence of covariance propagation among different filtering methods and the feasibility of covariance transformation, this section conducts tests using EKF, L-InEKF, and R-InEKF on field test datasets~\cite{zhang2021gnss, wu2013velocity}.
\subsection{Test on Land Vehicle}
The dataset of~\cite{zhang2021gnss} was collected on a small Ackerman-steered vehicle operating on the playground of Wuhan University. The platform is equipped with a consumer-grade IMU, wheel speed odometer (ODO), and GNSS antenna\footnote{https://github.com/i2Nav-WHU/GIOW-release}. A 1000-second segment from $438{,}080\;s$ to $439{,}080\;s$ in the dataset is used for testing, and the vehicle trajectory is shown in Fig.~\ref{fig:traj_car}. The dataset includes raw IMU measurements 200 Hz, real-time RTK positioning results 1 Hz, non-steering wheel speed observations 10 Hz, and reference ground truth 200 Hz for carrier attitude, velocity, and position. Detailed IMU specifications are summarized in Table~\ref{tab:para_field}. A zero-mean Gaussian noise with standard deviation $\mathrm{0.2\; m/s}$ is added to the reference velocity as the velocity observation at 1 Hz. For the ODO observations, the observation noise covariance is set to $\mathrm{0.1\; m/s}$ along all three axes.
\begin{table}[ht!]
\centering
\caption{Technical Specifications of the IMU Used in the Wuhan University Dataset.}
\label{tab:para_field}
\begin{tabular}{ccc}
\hline
Parameter          & Gyroscope & Accelerometer \\ \hline
Random Walk        & $\mathrm{0.15\; deg/\sqrt{h}}$ & $\mathrm{20\; \mu g / \sqrt{Hz}}$ \\
Bias Instability   & $\mathrm{2\; deg/h}$            & $\mathrm{3.6\; \mu g}$             \\ \hline
\end{tabular}
\end{table}
\begin{figure}
    \centering
    \includegraphics[width=0.45\textwidth]{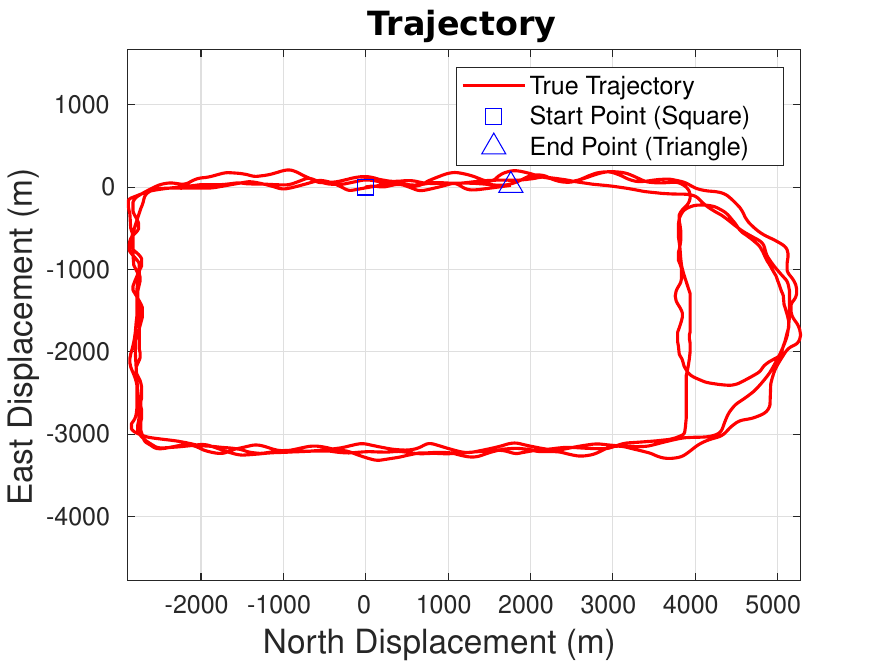}
    \caption{Trajectory of the vehicle during the test period in the Wuhan University Dataset.}
    \label{fig:traj_car}
\end{figure}
\subsubsection{Covariance Propagation Analysis}\label{sec:prop_varify}
This subsection validates the equivalence of covariance propagation in the discrete-time case for practical scenarios with large initial attitude errors: roll and pitch of $60^\circ$ and yaw of $120^\circ$. It is noteworthy that the L-InEKF expresses attitude errors in the body frame; therefore, the attitude covariance values do not correspond directly to the initial attitude error settings. The covariance propagation in the EKF, L-InEKF, and R-InEKF is performed based on their respective covariance definitions. According to Theorem~\ref{thm:prop_equivalence}, there exist transformation relationships among these covariance representations during the propagation process. Therefore, to conduct a meaningful comparison, the covariances must be transformed into a common representation~\cite{chen2024visual}. In this work, the transformation relations in ~\eqref{eq:relation_three_error} are utilized to convert all covariance matrices to the L-InEKF representation for a unified comparison, and results are shown in Fig.~\ref{fig:prop_compare_ct}.
\begin{figure}
    \centering
    \begin{subfigure}[t]{0.5\textwidth}
        \centering
        \includegraphics[width=\textwidth]{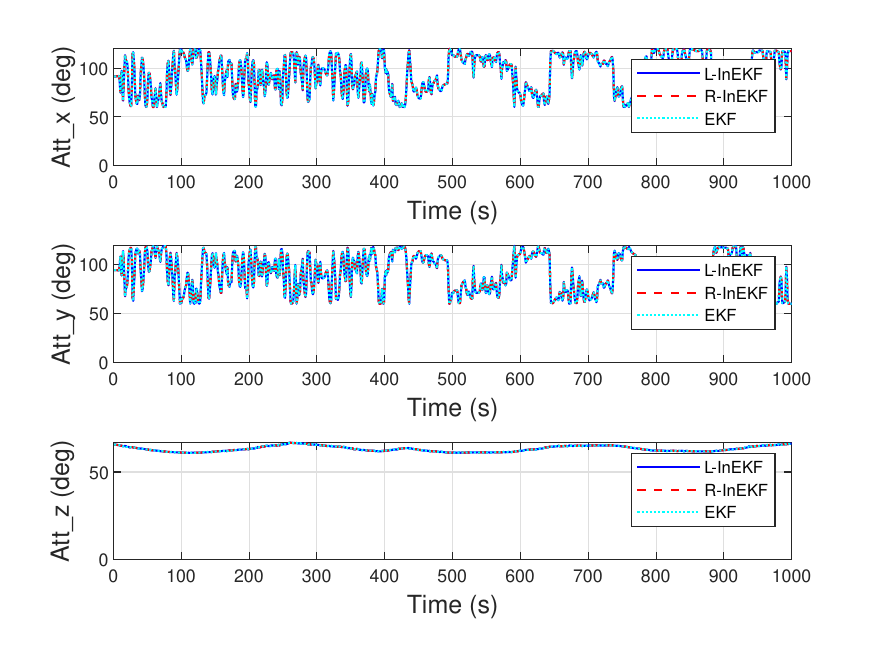}
        \caption{Attitude covariance propagation}
    \end{subfigure}
    \begin{subfigure}[t]{0.5\textwidth}
        \centering
        \includegraphics[width=\textwidth]{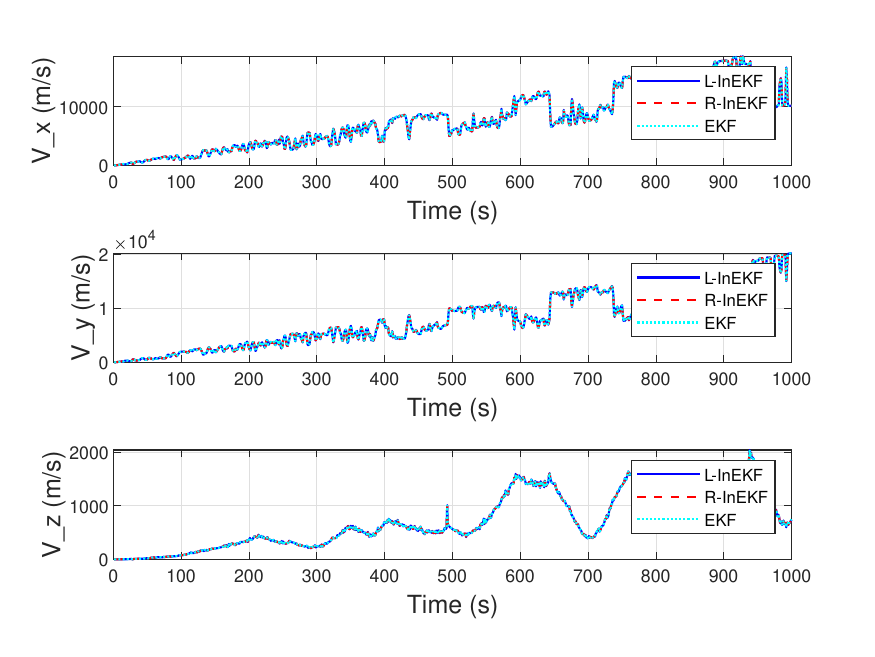}
        \caption{Velocity covariance propagation}
    \end{subfigure}
    
    \begin{subfigure}[t]{0.5\textwidth}
        \centering
        \includegraphics[width=\textwidth]{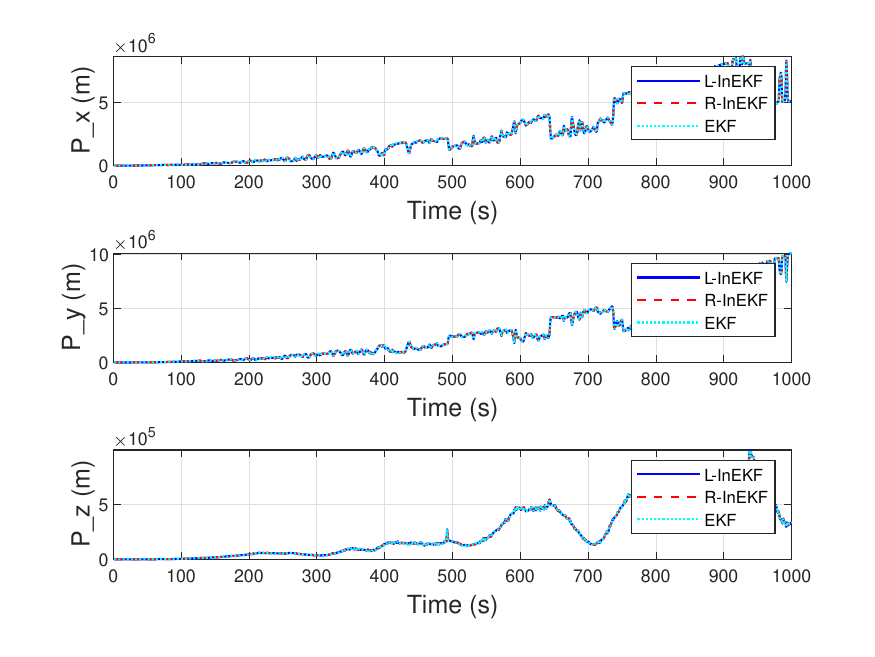}
        \caption{Position covariance propagation}
    \end{subfigure}
    \caption{Covariance propagation comparison among EKF, L-InEKF, and R-InEKF.}
    \label{fig:prop_compare_ct}
\end{figure}

The experimental results demonstrate that, in applications such as INS, where the propagation frequency is high, the discrepancies caused by discretization in different filtering algorithms are negligible. Under such conditions, the covariance propagation processes can be considered equivalent.
\subsubsection{Evaluation of Covariance Transformation}\label{sec:ct_eval}
Our previous studies~\cite{Han2024, han2024covariance} have demonstrated the effectiveness of introducing covariance switch. To further assess covariance transformation's feasibility within the standard EKF, this work adopts EKF as the original estimator and integrates covariance transformation—referred to as CT-EKF—using the coefficient matrices $\mathbf{T}_{\mathrm{ekf} \to \mathrm{l}}$ and $\mathbf{T}_{\mathrm{ekf} \to \mathrm{r}}$ during state estimation. When processing GNSS velocity observations, the transformation matrix $\mathbf{T}_{\mathrm{ekf} \to \mathrm{l}}$ is applied. In this case, the CT-ESKF is expected to yield estimated system states identical to those of the L-InEKF. Similarly, when fusing ODO observations, the transformation matrix $\mathbf{T}_{\mathrm{ekf} \to \mathrm{r}}$ is used.

To highlight the feasibility of covariance transformation under large initial errors, a set of experiments is conducted using GNSS as the sole observation. The initial roll and pitch errors are fixed at $60^\circ$, while the yaw error varies uniformly from $-120^\circ$ to $+120^\circ$ in increments of $5^\circ$. The initial covariance matrix is configured accordingly to reflect these errors. The results of the Monte Carlo experiments are illustrated in Fig.~\ref{fig:gnss_field_rmse_ct}. It can be observed that CT-EKF offers better consistency and estimation performance and coincides with results of the L-InEKF, confirming the feasibility of using covariance transformation matrix $\mathbf{T}_{\mathrm{ekf} \to \mathrm{l}}$ to improve the performance of EKF. These results provide strong evidence for the effectiveness of the proposed covariance transformation strategy as proved in Theorem~\ref{thm:P_relation_ct_s}.
\begin{figure}[h!]
    \centering
    \includegraphics[width=0.5\textwidth]{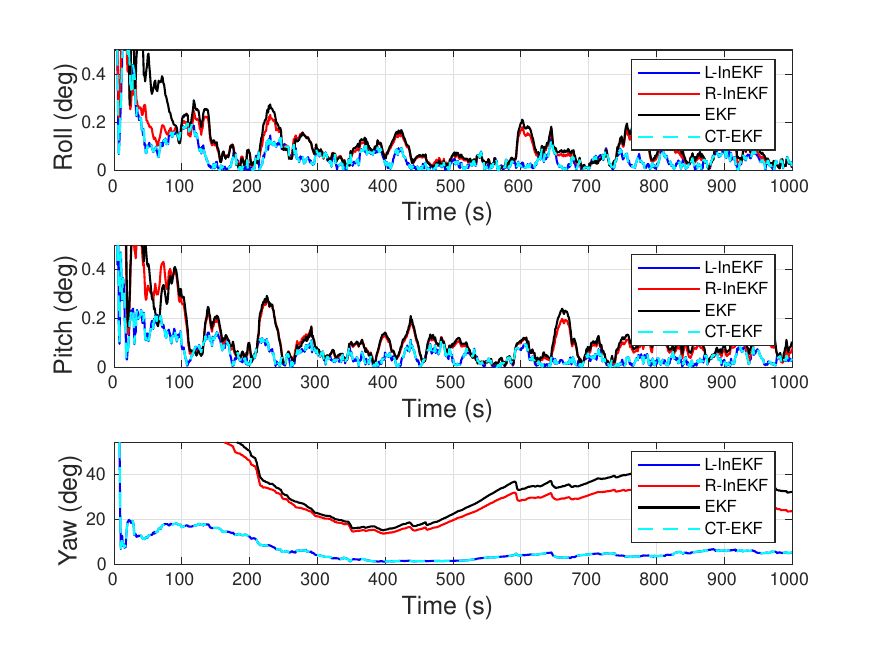}
    \caption{RMSE comparison of attitude estimation errors under GNSS-only observations in land vehicle experiments.}
    \label{fig:gnss_field_rmse_ct}
\end{figure}

Here, we further evaluate the feasibility of covariance transformation under slow covariance propagation. In this condition, slow discrete-time propagation may compromise the equivalence among the covariance matrices of different filtering algorithms, thereby rendering the covariance transformation infeasible. Specifically, the raw IMU data from the dataset are downsampled from 200~Hz to 2~Hz, resulting in a significantly reduced propagation frequency for both the system state and its associated uncertainty.
 GNSS observations are provided at 1 Hz as observations. The initial roll and pitch errors are fixed at $10^\circ$, while the yaw error is uniformly varied between $-120^\circ$ and $+120^\circ$. The RMSE of attitude estimation is shown in Fig.~\ref{fig:GNSS_field_rmse_att_slow}.
\begin{figure}[h!]
    \centering
    \includegraphics[width=0.45\textwidth]{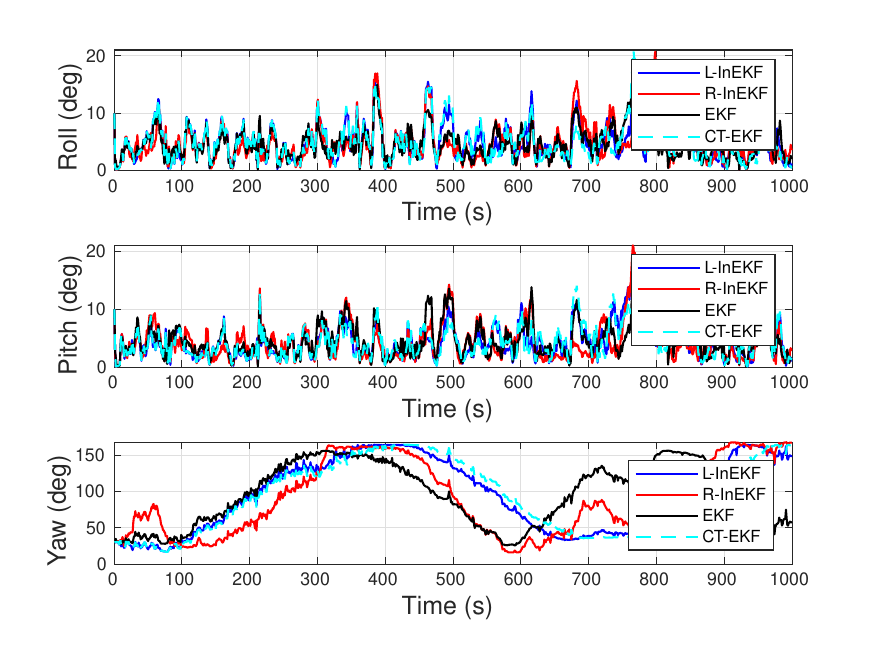}
    \caption{RMSE comparison of attitude estimation errors under GNSS observations when both state and uncertainty are propagated at a low rate.}
    \label{fig:GNSS_field_rmse_att_slow}
\end{figure}

Due to the downsampling of IMU data, the propagated system state no longer accurately reflects the true motion dynamics, resulting in erroneous state and covariance propagation and subsequently degraded estimation performance. However, as shown in the Fig.~\ref{fig:GNSS_field_rmse_att_slow}, the estimation curve of CT-EKF closely aligns with that of L-InEKF. This confirms the feasibility of covariance transformation and indicates that, even under low-rate propagation conditions, the covariance propagation behavior across different filtering algorithms remains nearly equivalent.

In the IMU/ODO integrated navigation scenario, roll and pitch initial errors are fixed at $10^\circ$, while the yaw error varies uniformly from $-60^\circ$ to $+60^\circ$ in increments of $5^\circ$. The initial covariance matrix is set accordingly and results are presented in Fig.~\ref{fig:ODO_field_rmse_ct}. Results indicate that the CT-EKF significantly outperforms the conventional EKF in both estimation accuracy and consistency. Furthermore, the estimation curves of CT-EKF and R-InEKF are perfectly overlapped, which demonstrates that EKF can yield identical results to R-InEKF with the covariance transformation opertation $\mathbf{T}_{\mathrm{ekf} \to \mathrm{r}}$. 

\begin{figure}[h!]
    \centering
    \includegraphics[width=0.45\textwidth]{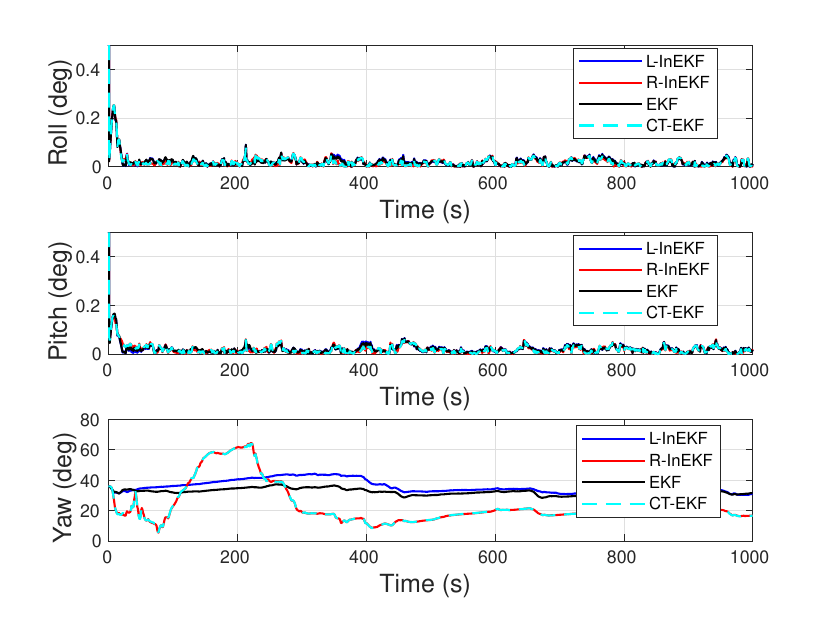}
    \caption{RMSE comparison of attitude estimation errors under ODO-only observations in field experiments.}
    \label{fig:ODO_field_rmse_ct}
\end{figure}

\begin{figure}[h!]
    \centering
    \includegraphics[width=0.45\textwidth]{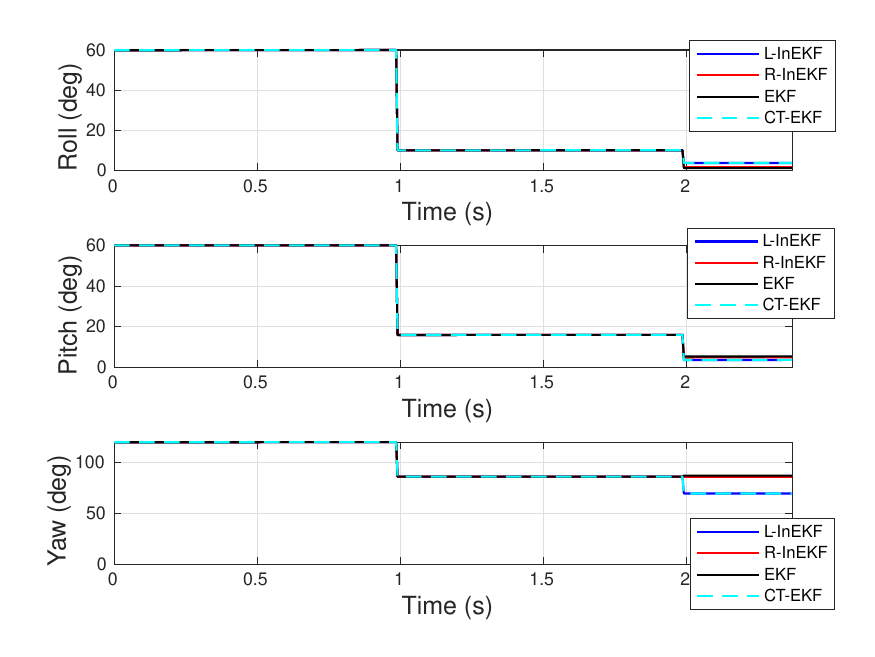}
    \caption{Comparison of first update results for different filtering algorithms.}
    \label{fig:att_snit}
\end{figure}
\begin{figure}[ht!]
    \centering
    \includegraphics[width=0.5\textwidth]{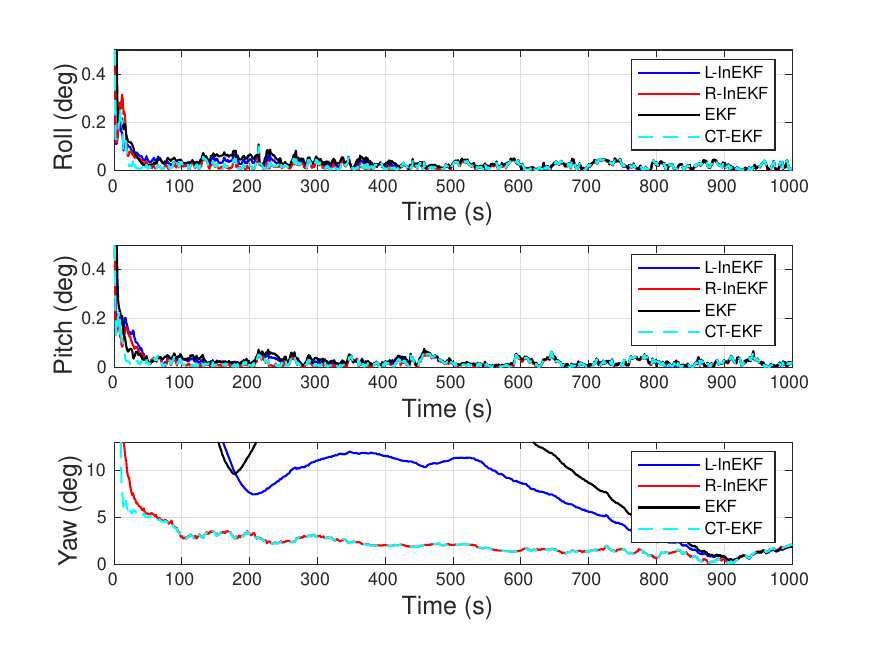}
    \caption{RMSE comparison of attitude estimation errors under GNSS and ODO observations in field experiments.}
    \label{fig:GNSS_ODO_field_rmse_ct}
\end{figure}
\begin{figure*}[htbp]
\centering
\begin{subfigure}[t]{0.45\textwidth}
    \centering
    \includegraphics[width=\textwidth]{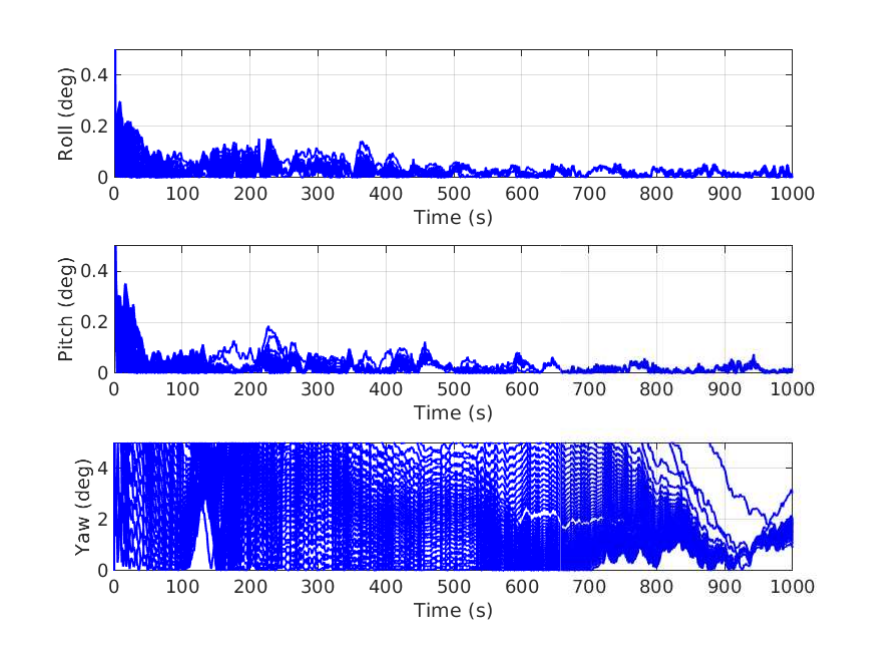}
    \caption{L-InEKF}
\end{subfigure}
\begin{subfigure}[t]{0.45\textwidth}
    \centering
    \includegraphics[width=\textwidth]{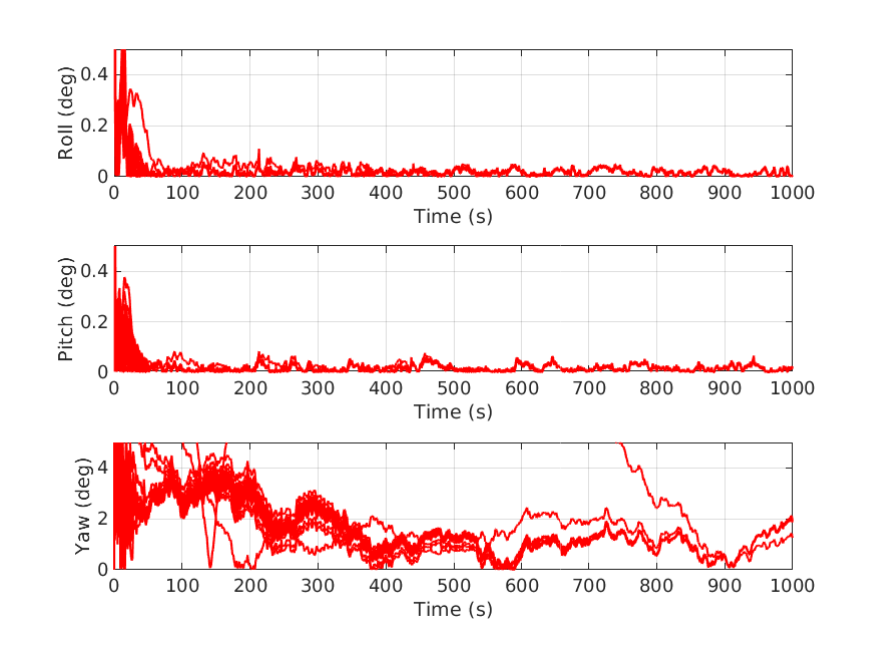}
    \caption{R-InEKF}
\end{subfigure}
\begin{subfigure}[t]{0.45\textwidth}
    \centering
    \includegraphics[width=\textwidth]{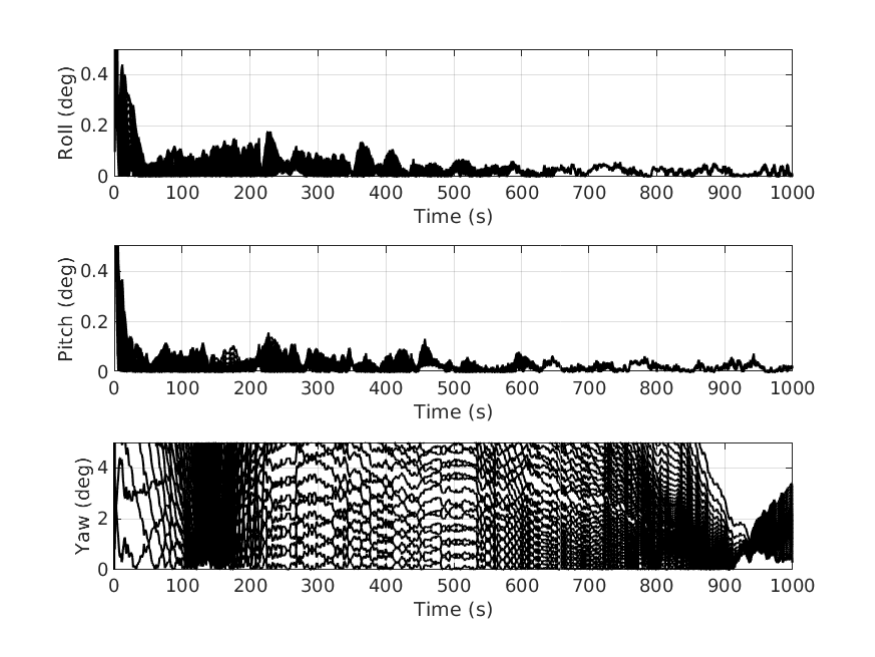}
    \caption{EKF}
\end{subfigure}
\begin{subfigure}[t]{0.45\textwidth}
    \centering
    \includegraphics[width=\textwidth]{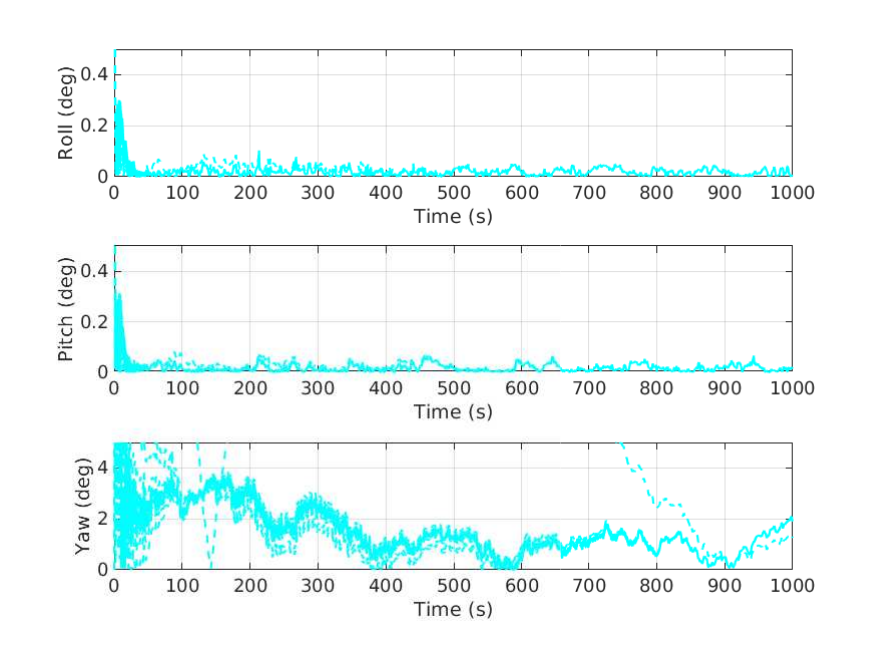}
    \caption{CT-EKF}
\end{subfigure}
\caption{Comparison of attitude estimation errors under GNSS and ODO observations using different methods.}
\label{fig:GNSS_ODO_field_ct}
\end{figure*}
We now focus on the behavior of different filters during their first update under large initial errors. When using the GNSS velocity as observations, the initial roll and pitch errors are fixed at $60^\circ$, and the yaw error is set to $120^\circ$. The estimation errors after the first update are shown in Fig.~\ref{fig:att_snit}, in which all filters perform their first update at $t = 1s$ and the updated system states are identical across all methods. However, as stated in theorem~\ref{pr:result_equ}, the covariance matrices after the first update are not equivalent among the different filters. Consequently, from the second update onward, estimation results begin to diverge among EKF, R-InEKF, L-InEKF, and CT-EKF.

%
\subsubsection{Covariance Transformation in Multi-sensor Navigation}
When there exists a mismatch between the process and observation models in InEKFs, the estimation would be suboptimal and inconsistent~\cite{hartley2019contact, Han2024}. Previous studies show that covariance switch can effectively mitigate the problem~\cite{Han2024}. Here we further investigate the effect of covariance transformation on standard EKF by employing different transformation matrices tailored to specific observation types. For the IMU/GNSS/ODO integrated navigation scenario, CT-EKF utilizes $\mathbf{T}_{\mathrm{ekf} \to \mathrm{l}}$ during GNSS updates and $\mathbf{T}_{\mathrm{ekf} \to \mathrm{r}}$ during ODO updates.

In the Monte Carlo experiments, the initial attitude errors are configured as follows: the roll and pitch errors are set to $60^\circ$, and the yaw error is uniformly distributed from $-150^\circ$ to $+150^\circ$ in increments of $5^\circ$. The initial covariance matrix is configured consistently with these attitude errors. The attitude estimation results are illustrated in Figs.~\ref{fig:GNSS_ODO_field_rmse_ct} and~\ref{fig:GNSS_ODO_field_ct}. It can be observed that CT-EKF significantly outperforms the standard EKF, highlighting the effectiveness of introducing covariance transformation in integrated navigation systems.

Compared to the L-InEKF, CT-EKF demonstrates faster convergence and higher estimation accuracy, indicating that the use of $\mathbf{T}_{\mathrm{ekf} \to \mathrm{r}}$ enables more effective handling of body-frame observations such as ODO. While the performance gain of CT-EKF over R-InEKF is relatively small, this is attributed to the high frequency and accuracy of ODO observations in the dataset, which diminishes the contribution of GNSS observations to overall estimation performance. Nevertheless, CT-EKF still shows improved accuracy in roll and pitch estimation compared to R-InEKF, demonstrating better handling of horizontal attitude components.
\subsection{Test on Aircraft}
To validate the theoretical analysis under high-dynamic conditions, a real-world dataset collected from an aircraft equipped with a high-precision IMU and GNSS receiver is employed~\cite{wu2013velocity}. The corresponding flight trajectory is shown in Fig.~\ref{fig:air_traj_field}. The sampling rates of the IMU and GNSS receiver are 100 Hz and 2 Hz, respectively. The fused result of IMU and RTK-GNSS is used as the reference ground truth. The hardware specifications of the IMU are summarized in Table~\ref{tab:para_field_air}. In this experiment, GNSS velocity observations are used as observations, with the observation noise standard deviation set to $0.1\; \mathrm{m/s}$.
\begin{table}[ht!]
\centering
\caption{Specifications of the IMU used in the aircraft dataset.}
\label{tab:para_field_air}
\begin{tabular}{ccc}
\hline
Parameter             & Gyroscope & Accelerometer \\ \hline
Random Walk           & $\mathrm{0.001\;deg/ \sqrt{h}}$     & $\mathrm{5\; \mu g / \sqrt{Hz}}$            \\
\hline
\end{tabular}
\end{table}

\begin{figure}[ht!]
    \centering
    \includegraphics[width=0.45\textwidth]{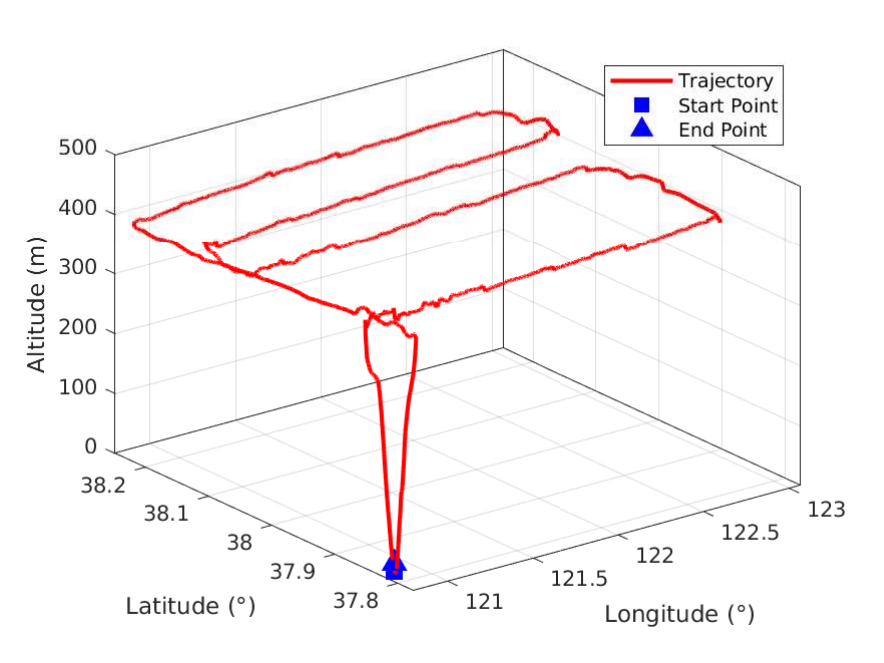}
    \caption{The airplane's trajectory in the flight experiment.}
    \label{fig:air_traj_field}
\end{figure}
To validate the theoretical analysis under large initial attitude errors, the initial roll and pitch errors are set to $10^\circ$, while the yaw error varies uniformly from $-120^\circ$ to $+120^\circ$ in $5^\circ$ increments. The initial covariance matrix is configured accordingly. The experimental results are presented in Fig.~\ref{fig:air_field_test}, and the estimation trajectory of CT-EKF perfectly overlaps with that of the L-InEKF, indicating the equivalence of covariance propagation among different filters even under high-dynamic conditions and large initial errors.
\begin{figure}[ht!]
    \centering
    \includegraphics[width=0.45\textwidth]{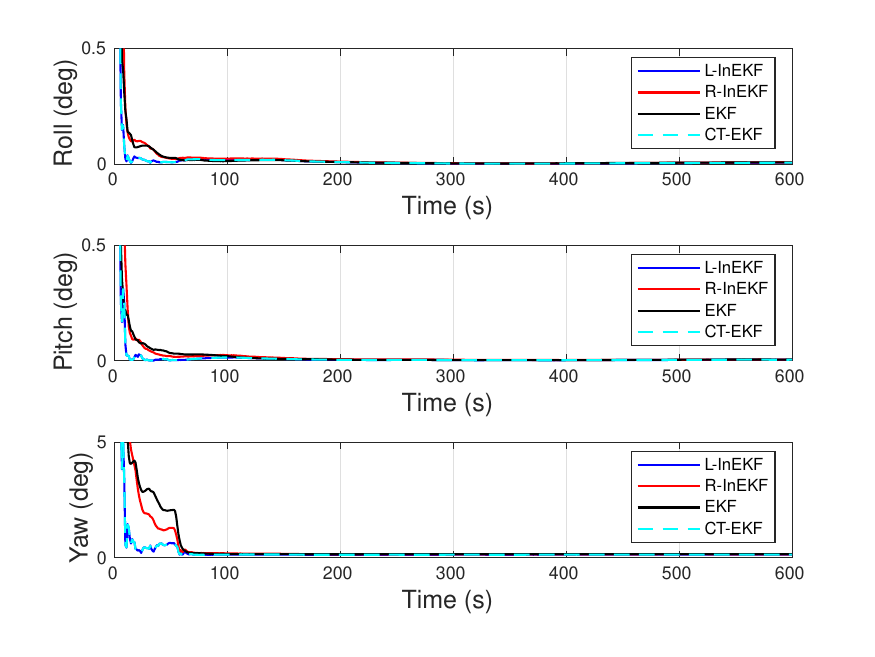}
    \caption{Comparison of attitude estimation RMSE under high-dynamic conditions with GNSS observations.}
    \label{fig:air_field_test}
\end{figure}
\section{Conclusion}
This work introduces a rigorous definition of equivalence between error states and covariance matrices. Based on this foundation, the equivalence of covariance propagation among different filtering algorithms is established, along with the conditions under which this equivalence holds in practical discrete implementations. Theoretical analyses show that the InEKF does not inherently provide superior system uncertainty propagation compared to the classical EKF. Furthermore, the concept of covariance transformation is systematically formulated and distinguished from covariance switch, offering clear theoretical guidance for future algorithm designs. A general theoretical framework termed the CT-ESKF is proposed, unifying various ESKF algorithms and providing a new perspective for designing improved estimators. This is achieved by adjusting the updated covariance to better align with the true system uncertainty. CT-ESKF is applied to fusion scenarios involving both global- and body-frame observations, and experimental results demonstrate that CT-EKF outperforms the L-InEKF, R-InEKF, and classical EKF in terms of estimation accuracy and consistency.

\appendices
\section{Proof of theorem~\ref{thm:prop_equivalence}}\label{appx:proof_of_lemma2}

The error covariance propagation models for $\boldsymbol{\xi}_a$ and $\boldsymbol{\xi}_b$ are given by
\begin{equation}
    \begin{aligned}
        \dot{\mathbf{P}}_a(t) &= \mathbf{F}_a(t) \mathbf{P}_a(t) + \mathbf{P}_a(t) \mathbf{F}_a^T(t) + \mathbf{G}_a(t) \mathbf{Q}(t) \mathbf{G}_a^T(t), \\
        \dot{\mathbf{P}}_b(t) &= \mathbf{F}_b(t) \mathbf{P}_b(t) + \mathbf{P}_b(t) \mathbf{F}_b^T(t) + \mathbf{G}_b(t) \mathbf{Q}(t) \mathbf{G}_b^T(t).
    \end{aligned}
\end{equation}
For any time $t_i \in [t_s, t_f]$, the system uncertainty propagation from $t_s$ to $t_i$ is given by
\begin{equation}
\begin{aligned}
    \mathbf{P}_a(t_i) =& \mathbf{P}_a(t_s) + \int_{t_s}^{t_i} \mathbf{F}_a(t') \mathbf{P}_a(t') \\
    &+ \mathbf{P}_a(t') \mathbf{F}_a^T(t') + \mathbf{G}_a(t') \mathbf{Q}(t') \mathbf{G}_a^T(t') dt',
\end{aligned}
\end{equation}
\begin{equation}
\begin{aligned}
    \mathbf{P}_b(t_i) =& \mathbf{P}_b(t_s) + \int_{t_s}^{t_i} \mathbf{F}_b(t') \mathbf{P}_b(t') \\
     &+ \mathbf{P}_b(t') \mathbf{F}_b^T(t') + \mathbf{G}_b(t') \mathbf{Q}(t') \mathbf{G}_b^T(t') dt'.
\end{aligned}
\end{equation}
By the precise definition of the definite integral, we have
\begin{equation}
    \mathbf{P}_a(t_i) = \mathbf{P}_a(t_s) + \lim_{n \to \infty} \sum_{i=0}^{n-1} \dot{\mathbf{P}}_a\left(t_s + \frac{t_i - t_s}{n} i\right) \frac{t_i - t_s}{n}.
\end{equation}
Let $\tau = \frac{t_i - t_s}{n}$, then within the interval $\tau$, $\mathbf{F}(t)$, $\mathbf{G}(t)$, and $\dot{\mathbf{P}}(t)$ can be considered constants.
Therefore, for any $\tau$→0

\begin{equation}
    \label{eq:prop_one_step_a}
    \begin{aligned}
        \mathbf{P}_a(t_s + \tau) =& \mathbf{P}_a(t_s) + \dot{\mathbf{P}}_a(t_s) \tau \\
        =& \mathbf{P}_a(t_s) + \big(\mathbf{F}_a(t_s) \mathbf{P}_a(t_s) + \mathbf{P}_a(t_s) \mathbf{F}_a^T(t_s) \\
        &+ \mathbf{G}_a(t_s) \mathbf{Q}(t_s) \mathbf{G}_a^T(t_s)\big) \tau,
    \end{aligned}
\end{equation}
\begin{equation}
            \label{eq:prop_all_step_a}
            \begin{aligned}
                \mathbf{P}_a(t_i) &= \mathbf{P}_a(t_s) + \lim_{n \rightarrow \infty} \sum_{i=0}^{n-1} \dot{\mathbf{P}}_a \left(t_s+\tau i\right) \tau \\
                                  &= \mathbf{P}_a(t_s+\tau) + \lim_{n \rightarrow \infty} \sum_{i=1}^{n-1} \dot{\mathbf{P}}_a \left(t_s+\tau i\right) \tau \\
                                  & \vdots \\
                                  &= \lim_{n \rightarrow \infty}(\mathbf{P}_a(t_s+(n-1)\tau) + \sum_{i=n-1}^{n-1} \dot{\mathbf{P}}_a \left(t_s+\tau i\right) \tau).
            \end{aligned}
        \end{equation}
        Similarly, for $\mathbf{P}_b$ and $\mathbf{A}(\hat{\mathbf{x}}(t))$, we have analogous expressions to \eqref{eq:prop_one_step_a} and \eqref{eq:prop_all_step_a}
\begin{equation}
    \mathbf{P}_b(t_i) = \mathbf{P}_b(t_s) + \lim_{n \to \infty} \sum_{i=0}^{n-1} \dot{\mathbf{P}}_b \left(t_s + \tau i \right) \tau,
\end{equation}
\begin{equation}
    \mathbf{A}(\hat{\mathbf{x}}(t_i)) = \mathbf{A}(\hat{\mathbf{x}}(t_s)) + \lim_{n \to \infty} \sum_{i=0}^{n-1} \dot{\mathbf{A}}(\hat{\mathbf{x}}(t_s + \tau i)) \tau.
\end{equation}
Similarly, considering only the propagation over a small time interval $\tau$, we have
\begin{equation}
            \label{eq:prop_one_step_b}
            \begin{aligned}
                \mathbf{P}_b(t_s+\tau) =& \mathbf{P}_b(t_s) + \dot{\mathbf{P}}_b \left(t_s \right) \tau \\
                                       =& \mathbf{P}_b(t_s) + (\mathbf{F}_b(t_s) \mathbf{P}_b(t_s) + \mathbf{P}_b(t_s) \mathbf{F}_b^T(t_s) \\
                                        & + \mathbf{G}_b(t_s) \mathbf{Q}(t_s) \mathbf{G}_b^T(t_s))\tau,
            \end{aligned}
        \end{equation}
        \begin{equation}
            \label{eq:prop_A_one_step}
            \mathbf{A}(\hat{\mathbf{x}}(t_s+\tau)) = \mathbf{A}(\hat{\mathbf{x}}(t_s)) + \dot{\mathbf{A}}(\hat{\mathbf{x}}(t_s)) \tau.
        \end{equation}
        Based on the transformation relationship between the system matrices $\mathbf{F}_a$ and $\mathbf{F}_b$ for different error states in \eqref{eq:F_relation} and the initial condition in \eqref{eq:switch_init}, we obtain
        \begin{equation}
            \label{eq:p_a_derived}
            \begin{aligned}
                &\mathbf{F}_a(t_s) \mathbf{P}_a(t_s) \tau \\
                &= \mathbf{F}_a(t_s) \, \mathbf{A}(\hat{\mathbf{x}}(t_s)) \, \mathbf{P}_b(t_s) \, \mathbf{A}^T(\hat{\mathbf{x}}(t_s)) \tau \\
                    &= (\mathbf{A}(\hat{\mathbf{x}}(t_s)) \, \mathbf{F}_b(t_s) + \dot{\mathbf{A}}(\hat{\mathbf{x}}(t_s))) \, \mathbf{P}_b(t_s) \, \mathbf{A}^T(\hat{\mathbf{x}}(t_s)) \tau \\
                    &= \mathbf{A}(\hat{\mathbf{x}}(t_s)) \, \mathbf{F}_b(t_s) \, \mathbf{P}_b(t_s) \, \mathbf{A}^T(\hat{\mathbf{x}}(t_s)) \tau \\
                    & \quad + \dot{\mathbf{A}}(\hat{\mathbf{x}}(t_s)) \, \mathbf{P}_b(t_s) \, \mathbf{A}^T(\hat{\mathbf{x}}(t_s)) \tau,
            \end{aligned}
        \end{equation}
        then,
        \begin{equation}
            \begin{aligned}
                &\mathbf{P}_a(t_s) \mathbf{F}_a^T(t_s) \tau \\
                &= (\mathbf{F}_a(t_s) \, \mathbf{P}_a(t_s))^T \tau\\
                &=  \mathbf{A}(\hat{\mathbf{x}}(t_s)) \,\mathbf{P}_b(t_s) \, \mathbf{F}_b^T(t_s) \, \mathbf{A}^T(\hat{\mathbf{x}}(t_s)) \tau \\
                & \quad + \mathbf{A}(\hat{\mathbf{x}}(t_s)) \, \mathbf{P}_b(t_s) \, \dot{\mathbf{A}}^T(\hat{\mathbf{x}}(t_s)) \tau.
            \end{aligned}
        \end{equation}
        Based on \eqref{eq:prop_A_one_step}, we have
\begin{equation}
    \label{eq:P1_relation}
    \begin{aligned}
        &\mathbf{A}(\hat{\mathbf{x}}(t_s+\tau)) \mathbf{P}_b(t_s) \mathbf{A}^T(\hat{\mathbf{x}}(t_s+\tau)) \\
        &= \big(\mathbf{A}(\hat{\mathbf{x}}(t_s)) + \dot{\mathbf{A}}(\hat{\mathbf{x}}(t_s)) \tau \big) \mathbf{P}_b(t_s) \big(\mathbf{A}(\hat{\mathbf{x}}(t_s)) + \dot{\mathbf{A}}(\hat{\mathbf{x}}(t_s)) \tau \big)^T \\
        &= \mathbf{A}(\hat{\mathbf{x}}(t_s)) \mathbf{P}_b(t_s) \mathbf{A}^T(\hat{\mathbf{x}}(t_s)) + \mathbf{A}(\hat{\mathbf{x}}(t_s)) \mathbf{P}_b(t_s) \dot{\mathbf{A}}^T(\hat{\mathbf{x}}(t_s)) \tau \\
        &\quad + \dot{\mathbf{A}}(\hat{\mathbf{x}}(t_s)) \mathbf{P}_b(t_s) \mathbf{A}^T(\hat{\mathbf{x}}(t_s)) \tau, \\
        &= \mathbf{P}_a(t_s) + \mathbf{A}(\hat{\mathbf{x}}(t_s)) \mathbf{P}_b(t_s) \dot{\mathbf{A}}^T(\hat{\mathbf{x}}(t_s)) \tau \\
        &\quad + \dot{\mathbf{A}}(\hat{\mathbf{x}}(t_s)) \mathbf{P}_b(t_s) \mathbf{A}^T(\hat{\mathbf{x}}(t_s)) \tau,
    \end{aligned}
\end{equation}
\begin{equation}
            \label{eq:P2_relation}
            \begin{aligned}
                &\mathbf{A}(\hat{\mathbf{x}}(t_s+\tau)) \, \mathbf{F}_b(t_s) \, \mathbf{P}_b(t_s) \, \tau \mathbf{A}^T(\hat{\mathbf{x}}(t_s+\tau)) \\
                & \quad = (\mathbf{A}(\hat{\mathbf{x}}(t_s)) + \dot{\mathbf{A}}(\hat{\mathbf{x}}(t_s)) \tau) \, \mathbf{F}_b(t_s) \, \mathbf{P}_b(t_s) \, \\
                & \quad \cdot (\mathbf{A}(\hat{\mathbf{x}}(t_s)) + \dot{\mathbf{A}}(\hat{\mathbf{x}}(t_s)) \tau)^T \tau \\
                & \quad = \mathbf{A}(\hat{\mathbf{x}}(t_s)) \, \mathbf{F}_b(t_s) \, \mathbf{P}_b(t_s) \, \mathbf{A}^T(\hat{\mathbf{x}}(t_s)) \tau,
            \end{aligned}
        \end{equation}
where higher-order terms in $\tau^2$ are neglected as $\tau \to 0$. Similarly,
\begin{equation}
    \label{eq:P3_relation}
    \begin{aligned}
        &\mathbf{A}(\hat{\mathbf{x}}(t_s+\tau)) \mathbf{P}_b(t_s) \mathbf{F}_b^T(t_s) \, \tau \mathbf{A}^T(\hat{\mathbf{x}}(t_s+\tau)) \\
        &\quad= \mathbf{A}(\hat{\mathbf{x}}(t_s)) \mathbf{P}_b(t_s) \mathbf{F}_b^T(t_s) \mathbf{A}^T(\hat{\mathbf{x}}(t_s)) \tau.
    \end{aligned}
\end{equation}
Combining \eqref{eq:p_a_derived}-\eqref{eq:P3_relation}, we obtain
\begin{equation}
\label{eq:p_part_relation}
\begin{aligned}
    &\mathbf{P}_a(t_s) + \mathbf{F}_a(t_s) \mathbf{P}_a(t_s) \tau+ \mathbf{P}_a(t_s) \mathbf{F}_a^T(t_s) \tau \\
    &=\mathbf{A}(\hat{\mathbf{x}}(t_s+\tau)) (\mathbf{P}_b(t_s) + \mathbf{F}_b(t_s) \, \mathbf{P}_b(t_s) \tau \\
    &\quad + \mathbf{P}_b(t_s) \mathbf{F}_b^T(t_s))\mathbf{A}^T(\hat{\mathbf{x}}(t_s+\tau)).
\end{aligned}
\end{equation}
Based on \eqref{eq:G_relation}, the relationship between the system noise terms is given by
\begin{equation}
    \label{eq:P4_relation}
    \begin{aligned}
        &\mathbf{A}(\hat{\mathbf{x}}(t_s+\tau)) \mathbf{G}_b(t_s) \mathbf{Q}(t_s) \mathbf{G}_b^T(t_s) \mathbf{A}^T(\hat{\mathbf{x}}(t_s+\tau)) \tau \\
        &\quad= \mathbf{A}(\hat{\mathbf{x}}(t_s)) \mathbf{G}_b(t_s) \mathbf{Q}(t_s) \mathbf{G}_b^T(t_s) \mathbf{A}^T(\hat{\mathbf{x}}(t_s)) \tau, \\
        &\quad= \mathbf{G}_a(t_s) \mathbf{Q}(t_s) \mathbf{G}_a^T(t_s) \tau.
    \end{aligned}
\end{equation}
Based on~\eqref{eq:p_part_relation} and \eqref{eq:P4_relation}, we obtain
\begin{equation}
    \begin{aligned}
        \mathbf{A}(\hat{\mathbf{x}}(t_s+\tau)) \mathbf{P}_b(t_s+\tau) \mathbf{A}^T(\hat{\mathbf{x}}(t_s+\tau)) = \mathbf{P}_a(t_s+\tau).
    \end{aligned}
\end{equation}
Repeating the above derivation yields
\begin{equation}
\begin{aligned}
    &\lim_{n \to \infty} \mathbf{P}_a(t_s + i \,\tau) \\
    &= \lim_{n \to \infty} \mathbf{A}(\hat{\mathbf{x}}(t_s + i \tau)) \mathbf{P}_b(t_s + i \,\tau) \mathbf{A}^T(\hat{\mathbf{x}}(t_s + i \tau)), i \in [0, n].
\end{aligned}
\end{equation}
Summing over the interval, we obtain
\begin{equation}
    \begin{aligned}
        \mathbf{P}_a(t_i) &= \lim_{n \to \infty} \mathbf{P}_a(t_s + n \,\tau) \\
        &= \lim_{n \to \infty} \mathbf{A}(\hat{\mathbf{x}}(t_s + n \tau)) \mathbf{P}_b(t_s + n \,\tau) \mathbf{A}^T(\hat{\mathbf{x}}(t_s + n \tau)) \\
        &= \mathbf{A}(\hat{\mathbf{x}}(t_i)) \mathbf{P}_b(t_i) \mathbf{A}^T(\hat{\mathbf{x}}(t_i)).
    \end{aligned}
\end{equation}
The error state propagation from time $t_s$ to $t_i$ can be computed as
\begin{equation}
    \begin{aligned}
        \boldsymbol{\xi}_a(t_i) &= \boldsymbol{\xi}_a(t_s) + \int_{t_s}^{t_i} \mathbf{F}_a(t') \boldsymbol{\xi}_a(t') \, dt' \\
        &= \boldsymbol{\xi}_a(t_s) + \lim_{n \to \infty} \sum_{i=0}^{n-1} \dot{\boldsymbol{\xi}}_a \left(t_s + \frac{t_i - t_s}{n} i \right) \frac{t_i - t_s}{n},
    \end{aligned}
\end{equation}
\begin{equation}
    \begin{aligned}
        \boldsymbol{\xi}_b(t_i) &= \boldsymbol{\xi}_b(t_s) + \int_{t_s}^{t_i} \mathbf{F}_b(t') \boldsymbol{\xi}_b(t') \, dt' \\
        &= \boldsymbol{\xi}_b(t_s) + \lim_{n \to \infty} \sum_{i=0}^{n-1} \dot{\boldsymbol{\xi}}_b \left(t_s + \frac{t_i - t_s}{n} i \right) \frac{t_i - t_s}{n}.
    \end{aligned}
\end{equation}
Similarly, considering only a single propagation step of duration $\tau$,
\begin{equation}
    \boldsymbol{\xi}_a(t_s + \tau) = \boldsymbol{\xi}_a(t_s) + \mathbf{F}_a(t_s) \boldsymbol{\xi}_a(t_s) \tau,
\end{equation}
\begin{equation}
    \boldsymbol{\xi}_b(t_s + \tau) = \boldsymbol{\xi}_b(t_s) + \mathbf{F}_b(t_s) \boldsymbol{\xi}_b(t_s) \tau.
\end{equation}
Based on \eqref{eq:F_relation}, and \eqref{eq:prop_A_one_step}, we have
\begin{equation}
    \begin{aligned}
        &\mathbf{A}(\hat{\mathbf{x}}(t_s+\tau)) \boldsymbol{\xi}_b(t_s + \tau) \\
        &\quad = \left(\mathbf{A}(\hat{\mathbf{x}}(t_s)) + \dot{\mathbf{A}}(\hat{\mathbf{x}}(t_s)) \tau \right) \left(\boldsymbol{\xi}_b(t_s) + \mathbf{F}_b(t_s) \boldsymbol{\xi}_b(t_s) \tau \right) \\
        &\quad = \left(\mathbf{A}(\hat{\mathbf{x}}(t_s)) + \dot{\mathbf{A}}(\hat{\mathbf{x}}(t_s)) \tau \right) \boldsymbol{\xi}_b(t_s) \\
        & \quad  \quad + \mathbf{A}(\hat{\mathbf{x}}(t_s)) \mathbf{F}_b(t_s) \boldsymbol{\xi}_b(t_s) \tau \\
        &\quad = \mathbf{A}(\hat{\mathbf{x}}(t_s)) \boldsymbol{\xi}_b(t_s) + \mathbf{F}_a(t_s) \mathbf{A}(\hat{\mathbf{x}}(t_s)) \boldsymbol{\xi}_b(t_s) \tau \\
        &\quad = \boldsymbol{\xi}_a(t_s) + \mathbf{F}_a(t_s) \boldsymbol{\xi}_a(t_s) \tau \\
        &\quad = \boldsymbol{\xi}_a(t_s + \tau).
    \end{aligned}
\end{equation}
Similarly, we have
\begin{equation}
    \begin{aligned}
        \boldsymbol{\xi}_a(t_i) &= \lim_{n \rightarrow \infty} \boldsymbol{\xi}_a(t_s + n \tau) \\
        &= \lim_{n \rightarrow \infty} \mathbf{A}(\hat{\mathbf{x}}(t_s + n \tau)) \boldsymbol{\xi}_b(t_s + n \tau) \\
        &= \mathbf{A}(\hat{\mathbf{x}}(t_i)) \boldsymbol{\xi}_b(t_i).
    \end{aligned}
\end{equation}
Thus, Theorem~\ref{thm:prop_equivalence} is proved.

\section{Proof of Theorem~\ref{thm:prop_equivalence_disc}}\label{appx:proof_of_theorem6}
Specifically, we consider the single-step propagation from time $t_k$ to $t_{k+1}$.

\begin{theorem}[Equivalence of Error State and Covariance Propagation in Discrete Time]
    \label{thm:prop_equivalence_disc}
    Suppose the transformation matrix $\mathbf{A}(\hat{\mathbf{x}})$ between $\boldsymbol{\xi}_a$ and $\boldsymbol{\xi}_b$, as well as between $\mathbf{P}_a$ and $\mathbf{P}_b$, satisfies the following approximations at the discrete time instants $k$ and $k+1$, with step size $\tau$
    \begin{equation}
        \label{eq:sim1}
        \mathbf{A}(k) + \dot{\mathbf{A}}(k) \tau \approx \mathbf{A}(k+1),
    \end{equation}
    \begin{equation}
        \label{eq:sim2}
        \mathbf{A}(k) \tau \approx (\mathbf{A}(k) + \dot{\mathbf{A}}(k) \tau) \tau \approx \mathbf{A}(k+1) \tau,
    \end{equation}
    then under discrete-time propagation, the relationships between $\boldsymbol{\xi}_a$ and $\boldsymbol{\xi}_b$, as well as between $\mathbf{P}_a$ and $\mathbf{P}_b$, approximately satisfy the equivalence conditions defined in Definitions~\ref{def:error_equ} and~\ref{def:covariance_equ}.
    
    \begin{proof}
    Consider the discrete-time propagation of the covariance matrices $\mathbf{P}_a$ and $\mathbf{P}_b$ over a single time step
    \begin{equation}
        \begin{aligned}
            &\mathbf{P}_a(k+1) \\
            &= (\mathbf{I} + \mathbf{F}_a(k) \tau) \mathbf{P}_a(k) (\mathbf{I} + \mathbf{F}_a(k) \tau)^T + \mathbf{D}_a(k) \mathbf{Q} \mathbf{D}_a(k)^T \tau, \\
            &\mathbf{P}_b(k+1) \\
            &= (\mathbf{I} + \mathbf{F}_b(k) \tau) \mathbf{P}_b(k) (\mathbf{I} + \mathbf{F}_b(k) \tau)^T + \mathbf{D}_b(k) \mathbf{Q} \mathbf{D}_b(k)^T \tau,
        \end{aligned}
    \end{equation}
    where the matrices $\mathbf{F}_a(k)$, $\mathbf{F}_b(k)$, $\mathbf{D}_a(k)$, and $\mathbf{D}_b(k)$ represent the system dynamics and noise input matrices evaluated at time step $k$.
    
    Based on \eqref{eq:sim1} and \eqref{eq:sim2}, the following relations hold
    \begin{equation}
    \begin{aligned}
        &(\mathbf{I} + \mathbf{F}_a(k) \tau) \mathbf{P}_a(k) (\mathbf{I} + \mathbf{F}_a(k) \tau)^T \\
        &= (\mathbf{I} + \mathbf{F}_a(k) \tau) \mathbf{A}(k) \mathbf{P}_b(k) \mathbf{A}(k)^T (\mathbf{I} + \mathbf{F}_a(k) \tau)^T \\
        &= (\mathbf{A}(k) + \mathbf{F}_a(k) \mathbf{A}(k) \tau) \mathbf{P}_b(k) \\
        & \quad \cdot (\mathbf{A}(k) + \mathbf{F}_a(k) \mathbf{A}(k) \tau)^T \\
        &= (\mathbf{A}(k) + (\mathbf{A}(k) \mathbf{F}_b(k) + \dot{\mathbf{A}}(k)) \tau) \mathbf{P}_b(k) \\
        & \quad \cdot(\mathbf{A}(k) + (\mathbf{A}(k) \mathbf{F}_b(k) + \dot{\mathbf{A}}(k)) \tau)^T \\
        &\approx (\mathbf{A}(k+1) + \mathbf{A}(k+1) \mathbf{F}_b(k) \tau) \mathbf{P}_b(k) \\
        & \quad \cdot (\mathbf{A}(k+1) + \mathbf{A}(k+1) \mathbf{F}_b(k) \tau)^T \\
        &\approx \mathbf{A}(k+1) \left[(\mathbf{I} + \mathbf{F}_b(k) \tau) \mathbf{P}_b(k) (\mathbf{I} + \mathbf{F}_b(k) \tau)^T \right] \mathbf{A}(k+1)^T,
    \end{aligned}
    \end{equation}
    \begin{equation}
        \begin{split}
            &\mathbf{D}_a(k) \mathbf{Q} \mathbf{D}_a(k)^T \tau \\
            &= \mathbf{A}(k) \mathbf{D}_b(k) \mathbf{Q} \mathbf{D}_b(k)^T \mathbf{A}(k)^T \tau \\
            &\approx (\mathbf{A}(k) + \dot{\mathbf{A}}(k) \tau) \mathbf{D}_b(k) \mathbf{Q} \mathbf{D}_b(k)^T (\mathbf{A}(k) + \dot{\mathbf{A}}(k) \tau)^T \tau \\
            &\approx \mathbf{A}(k+1) \mathbf{D}_b(k) \mathbf{Q} \mathbf{D}_b(k)^T \mathbf{A}(k+1)^T \tau.
        \end{split}
        \end{equation}
    In summary, the following relation hold
    \begin{equation}
    \mathbf{P}_a(k+1) \approx \mathbf{A}(k+1) \mathbf{P}_b(k+1) \mathbf{A}(k+1)^T.
    \end{equation}
    Similarly,
    \begin{equation}
    \boldsymbol{\xi}_a(k+1) = (\mathbf{I} + \mathbf{F}_a(k) \tau) \boldsymbol{\xi}_a(k),
    \end{equation}
    \begin{equation}
    \boldsymbol{\xi}_b(k+1) = (\mathbf{I} + \mathbf{F}_b(k) \tau) \boldsymbol{\xi}_b(k),
    \end{equation}
    \begin{equation}
    \begin{split}
        &\boldsymbol{\xi}_a(k+1) \\
        &= (\mathbf{I} + \mathbf{F}_a(k) \tau) \mathbf{A}(k) \boldsymbol{\xi}(k) \\
        &= (\mathbf{A}(k) + \mathbf{F}_a(k) \mathbf{A}(k) \tau) \boldsymbol{\xi}(k) \\
        &= (\mathbf{A}(k) + (\mathbf{A}(k) \mathbf{F}_b(k) + \dot{\mathbf{A}}(k)) \tau) \boldsymbol{\xi}(k) \\
        &\approx (\mathbf{A}(k+1) + \mathbf{A}(k) \mathbf{F}_b(k) \tau) \boldsymbol{\xi}(k) \\
        &\approx (\mathbf{A}(k+1) + (\mathbf{A}(k) + \dot{\mathbf{A}(k)} \tau) \mathbf{F}_b(k) \tau) \boldsymbol{\xi}(k) \\
        &\approx (\mathbf{A}(k+1) + \mathbf{A}(k+1) \mathbf{F}_b(k) \tau) \boldsymbol{\xi}(k) \\
        &\approx \mathbf{A}(k+1) (\mathbf{I} + \mathbf{F}_b(k) \tau) \boldsymbol{\xi}(k) \\
        &\approx \mathbf{A}(k+1) \boldsymbol{\xi}_b(k+1).
    \end{split}
    \end{equation}
\end{proof}
\end{theorem}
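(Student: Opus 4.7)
The plan is to carry out a single-step discrete analog of the continuous-time argument used in the proof of Theorem~\ref{thm:prop_equivalence}, replacing the infinitesimal limit with a fixed Euler step of length $\tau$. Both claims (equivalence of the error states and equivalence of the covariances) would then follow by induction on the discrete time index $k$; the base case is the assumed equivalence at $t_s$, so it suffices to transport the two relations across one step while incurring only $O(\tau^2)$ error.

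First, I would write the Joseph-form one-step recursion
$\mathbf{P}_a(k+1) = (\mathbf{I} + \mathbf{F}_a(k)\tau)\mathbf{P}_a(k)(\mathbf{I} + \mathbf{F}_a(k)\tau)^T + \mathbf{G}_a(k)\mathbf{Q}(k)\mathbf{G}_a^T(k)\tau$
and the analogous expression for $\mathbf{P}_b(k+1)$. Substituting the inductive hypothesis $\mathbf{P}_a(k) = \mathbf{A}(k)\mathbf{P}_b(k)\mathbf{A}^T(k)$ and invoking the system-matrix relation~\eqref{eq:F_relation} yields the key rewrite
$(\mathbf{I} + \mathbf{F}_a(k)\tau)\mathbf{A}(k) = \mathbf{A}(k) + \bigl(\mathbf{A}(k)\mathbf{F}_b(k) + \dot{\mathbf{A}}(k)\bigr)\tau.$
Now the first approximation~\eqref{eq:sim1} absorbs the $\dot{\mathbf{A}}(k)\tau$ term into $\mathbf{A}(k+1)$, while~\eqref{eq:sim2} is used to upgrade the $\mathbf{A}(k)$ attached to the $O(\tau)$ coefficient $\mathbf{F}_b(k)\tau$ to $\mathbf{A}(k+1)$ without introducing an $O(\tau)$ discrepancy. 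Factoring produces $\mathbf{A}(k+1)(\mathbf{I} + \mathbf{F}_b(k)\tau)$. Applying the symmetric manipulation on the right, and handling the process-noise contribution with the driving-matrix identity~\eqref{eq:G_relation} together with~\eqref{eq:sim2}, then gives $\mathbf{P}_a(k+1) \approx \mathbf{A}(k+1)\mathbf{P}_b(k+1)\mathbf{A}^T(k+1)$.

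The error-state relation is a shorter version of the same computation: starting from $\boldsymbol{\xi}_a(k+1) = (\mathbf{I} + \mathbf{F}_a(k)\tau)\mathbf{A}(k)\boldsymbol{\xi}_b(k)$, the identical rewrite via~\eqref{eq:F_relation}, \eqref{eq:sim1}, and~\eqref{eq:sim2} yields $\mathbf{A}(k+1)(\mathbf{I} + \mathbf{F}_b(k)\tau)\boldsymbol{\xi}_b(k) = \mathbf{A}(k+1)\boldsymbol{\xi}_b(k+1)$. Because no symmetric sandwich or noise term appears here, this case can be dispatched in a few lines.

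The main technical subtlety, and the principal obstacle, is bookkeeping the truncation error: two separate $O(\tau)$ quantities, namely $\dot{\mathbf{A}}(k)\tau$ and $\mathbf{F}_b(k)\tau$, are being composed, so that a naive combination generates cross terms of order $\tau^2$ which must be consistently discarded. The role of~\eqref{eq:sim2} is precisely to justify dropping such $\dot{\mathbf{A}}(k)\mathbf{F}_b(k)\tau^2$ and $\dot{\mathbf{A}}(k)\mathbf{G}_b(k)\tau^{3/2}$ terms while retaining every contribution that is linear in $\tau$. I would therefore make explicit, at each step, that the replacements $\mathbf{A}(k)\mapsto \mathbf{A}(k+1)$ are performed only inside factors that already carry a $\tau$, so that all dropped terms are at least $O(\tau^2)$. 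Once this accounting is in place, the propagation of the equivalence across any finite number of discrete steps follows by straightforward induction.
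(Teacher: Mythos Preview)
Your proposal is correct and follows essentially the same approach as the paper's proof: write the one-step Joseph-form discrete propagation for both filters, substitute the equivalence hypothesis, use~\eqref{eq:F_relation} to rewrite $(\mathbf{I}+\mathbf{F}_a\tau)\mathbf{A}(k)$, then apply~\eqref{eq:sim1}--\eqref{eq:sim2} to factor out $\mathbf{A}(k+1)$, with the noise term handled via~\eqref{eq:G_relation}. Your explicit framing in terms of induction and $O(\tau^2)$ bookkeeping is, if anything, slightly more careful than the paper's own presentation.
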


In practical applications, although discrete propagation introduces approximation errors, these do not significantly affect the equivalence of covariance propagation, and the above analysis is still valid in inertial-based navigation systems.
\section{Transformation Between EKF and InEKF Error States}\label{appx:transformation_relationship}
This appendix presents the relationship between the classical EKF error states and the error representations used in the L-InEKF and R-InEKF. The left-invariant Jacobian $\mathbf{J}_l$ is given by
\begin{equation}
    \mathbf{J}_l = \left[
    \begin{array}{ccc}
    -\hat{\mathbf{C}}_e^b & \mathbf{0}_3 & \mathbf{0}_3 \\
    \mathbf{0}_3 & -\hat{\mathbf{C}}_e^b & -\hat{\mathbf{C}}_e^b \boldsymbol{\Omega}_{ie}^e \\
    \mathbf{0}_3 & \mathbf{0}_3 & -\hat{\mathbf{C}}_e^b
    \end{array}
    \right],
\end{equation}
while the right-invariant Jacobian $\mathbf{J}_r$ is expressed as
\begin{equation}
    \mathbf{J}_r = \left[
    \begin{array}{ccc}
    -\mathbf{I}_3 & \mathbf{0}_3 & \mathbf{0}_3 \\
    -\hat{\overline{\mathbf{v}}}^e \times & -\mathbf{I}_3 & -\boldsymbol{\Omega}_{ie}^e \\
    -\hat{\mathbf{r}}^e \times & \mathbf{0}_3 & -\mathbf{I}_3
    \end{array}
    \right].
\end{equation}

The explicit form of the adjoint matrix $\mathbf{Ad}_{\hat{\boldsymbol{\chi}}}$ can be derived from the Jacobians $\mathbf{J}_l$ and $\mathbf{J}_r$ as follows
\begin{equation}
\begin{aligned}
    \boldsymbol{\xi}_r &= \mathbf{J}_r \, \delta \mathbf{x} \\
    &= \mathbf{J}_r \mathbf{J}_l^{-1} \boldsymbol{\xi}_l \\
    &= \left[
    \begin{array}{ccc}
        \hat{\mathbf{C}}_b^e & \mathbf{0}_3 & \mathbf{0}_3 \\
        \hat{\overline{\mathbf{v}}}^e \times \hat{\mathbf{C}}_b^e & \hat{\mathbf{C}}_b^e & \mathbf{0}_3 \\
        \hat{\mathbf{r}}^e \times \hat{\mathbf{C}}_b^e & \mathbf{0}_3 & \hat{\mathbf{C}}_b^e
    \end{array}
    \right] \boldsymbol{\xi}_l,
\end{aligned}
\end{equation}
where the adjoint representation is
\begin{equation}
    \mathbf{Ad}_{\hat{\boldsymbol{\chi}}} =
    \left[
    \begin{array}{ccc}
        \hat{\mathbf{C}}_b^e & \mathbf{0}_3 & \mathbf{0}_3 \\
        \hat{\overline{\mathbf{v}}}^e \times \hat{\mathbf{C}}_b^e & \hat{\mathbf{C}}_b^e & \mathbf{0}_3 \\
        \hat{\mathbf{r}}^e \times \hat{\mathbf{C}}_b^e & \mathbf{0}_3 & \hat{\mathbf{C}}_b^e
    \end{array}
    \right].
\end{equation}
According to Section~\ref{sec:error_state_relation}, the system matrices $\mathbf{F}_l$, $\mathbf{F}_r$, and $\mathbf{F}_{ekf}$, as well as the noise input matrices $\mathbf{G}_l$, $\mathbf{G}_r$, and $\mathbf{G}_{ekf}$, satisfy the following constraint relations constructed from $\mathbf{J}_l$, $\mathbf{J}_r$, and $\mathbf{Ad}_{\hat{\mathbf{x}}}$
\begin{equation}
    \begin{aligned}
        \mathbf{F}_l \mathbf{J}_l &= \dot{\mathbf{J}}_l + \mathbf{J}_l \mathbf{F}_{ekf}, \\
        \mathbf{F}_r \mathbf{J}_r &= \dot{\mathbf{J}}_r + \mathbf{J}_r \mathbf{F}_{ekf}, \\
        \mathbf{F}_r \mathbf{Ad}_{\hat{\mathbf{x}}} &= \dot{\mathbf{Ad}}_{\hat{\boldsymbol{\chi}}} + \mathbf{Ad}_{\hat{\mathbf{x}}} \mathbf{F}_l,
    \end{aligned}
\end{equation}
\begin{equation}
    \begin{aligned}
        \mathbf{G}_l &= \mathbf{J}_l \mathbf{G}_{ekf}, \\
        \mathbf{G}_r &= \mathbf{J}_r \mathbf{G}_{ekf}, \\
        \mathbf{G}_r &= \mathbf{Ad}_{\hat{\mathbf{x}}} \mathbf{G}_l.
    \end{aligned}
\end{equation}

\section{Covariance Transformation Matrices Among EKF, L-InEKF, and R-InEKF}\label{appx:trans_mat}
The explicit forms of the covariance transformation matrices $\mathbf{T}$ among the EKF, L-InEKF, and R-InEKF are given as follows
\begin{equation}
    \begin{aligned}
        &\mathbf{T}_{\mathrm{ekf} \to l} \\
        &= \mathbf{J}_l^{-1}(\hat{\mathbf{x}}^+) \mathbf{J}_l(\hat{\mathbf{x}}^-) \\
        &= \left[
        \begin{array}{ccc}
             \hat{\mathbf{C}}_b^{e+} \hat{\mathbf{C}}_e^{b-} & \mathbf{0}_3 & \mathbf{0}_3 \\
             \mathbf{0}_3 & \hat{\mathbf{C}}_b^{e+} \hat{\mathbf{C}}_e^{b-} & \hat{\mathbf{C}}_b^{e+} \hat{\mathbf{C}}_e^{b-} \boldsymbol{\Omega}_{ie}^e - \boldsymbol{\Omega}_{ie}^e \hat{\mathbf{C}}_b^{e+} \hat{\mathbf{C}}_e^{b-} \\
             \mathbf{0}_3 & \mathbf{0}_3 & \hat{\mathbf{C}}_b^{e+} \hat{\mathbf{C}}_e^{b-}
        \end{array}
        \right],
    \end{aligned}
\end{equation}

\begin{equation}
    \begin{aligned}
        \mathbf{T}_{l \to \mathrm{ekf}} &= \mathbf{J}_l(\hat{\mathbf{x}}^+) \mathbf{J}_l^{-1}(\hat{\mathbf{x}}^-) \\
        &= \left[
        \begin{array}{ccc}
             \hat{\mathbf{C}}_e^{b+} \hat{\mathbf{C}}_b^{e-} & \mathbf{0}_3 & \mathbf{0}_3 \\
             \mathbf{0}_3 & \hat{\mathbf{C}}_e^{b+} \hat{\mathbf{C}}_b^{e-} & \mathbf{0}_3 \\
             \mathbf{0}_3 & \mathbf{0}_3 & \hat{\mathbf{C}}_e^{b+} \hat{\mathbf{C}}_b^{e-}
        \end{array}
        \right],
    \end{aligned}
\end{equation}
\begin{equation}
    \begin{aligned}
        &\mathbf{T}_{ekf \to r} \\
        &= \mathbf{J}_r^{-1}(\hat{\mathbf{x}}^+) \mathbf{J}_r(\hat{\mathbf{x}}^-) \\
        &= \left [ 
        \begin{array}{ccc}
             \mathbf{I}_3 & \mathbf{0}_3 & \mathbf{0}_3 \\
             -\hat{\overline{\mathbf{v}}}^{e+} \times + \boldsymbol{\Omega}_{ie}^e \hat{\mathbf{r}}^{e+} \times + \hat{\overline{\mathbf{v}}}^{e-} \times - \boldsymbol{\Omega}_{ie}^e \hat{\mathbf{r}}^{e-} \times& \mathbf{I}_3 & \mathbf{0}_3 \\
             -\hat{\mathbf{r}}^{e+} \times + \hat{\mathbf{r}}^{e-} \times & \mathbf{0}_3 & \mathbf{I}_3
        \end{array}
        \right],
    \end{aligned}
\end{equation}
\begin{equation}
    \begin{aligned}
        \mathbf{T}_{r \to ekf} &= \mathbf{J}_r(\hat{\mathbf{x}}^+) \mathbf{J}_r^{-1}(\hat{\mathbf{x}}^-) \\
        &= \left [ 
        \begin{array}{ccc}
             \mathbf{I}_3 & \mathbf{0}_3 & \mathbf{0}_3 \\
             \hat{\overline{\mathbf{v}}}^{e+} \times - \hat{\overline{\mathbf{v}}}^{e-} \times & \mathbf{I}_3 & \mathbf{0}_3 \\
             \hat{\mathbf{r}}^{e+} \times - \hat{\mathbf{r}}^{e-} \times & \mathbf{0}_3 & \mathbf{I}_3
        \end{array}
        \right],
    \end{aligned}
\end{equation}

\begin{equation}
    \begin{aligned}
        &\mathbf{T}_{l \to r} \\
        &= \mathbf{Ad}_{\hat{\boldsymbol{\chi}}^+}^{-1} \mathbf{Ad}_{\hat{\boldsymbol{\chi}}^-} \\
        &= \left [ 
        \begin{array}{ccc}
             \hat{\mathbf{C}}_e^{b+} \hat{\mathbf{C}}_b^{e-} & \mathbf{0}_3 & \mathbf{0}_3 \\
             (-\hat{\mathbf{C}}_e^{b+} \hat{\overline{\mathbf{v}}}^{e+} \times \hat{\mathbf{C}}_b^{e-} \\+ \hat{\mathbf{C}}_e^{b+} \hat{\overline{\mathbf{v}}}^{e-} \times \hat{\mathbf{C}}_b^{e-})& \hat{\mathbf{C}}_e^{b+} \hat{\mathbf{C}}_b^{e-} & \mathbf{0}_3 \\
             (-\hat{\mathbf{C}}_e^{b+} \hat{\mathbf{r}}^{e+} \times \hat{\mathbf{C}}_b^{e-} \\+ \hat{\mathbf{C}}_e^{b+} \hat{\mathbf{r}}^{e-} \times \hat{\mathbf{C}}_b^{e-}) & \mathbf{0}_3 & \hat{\mathbf{C}}_e^{b+} \hat{\mathbf{C}}_b^{e-}
        \end{array}
        \right],
    \end{aligned}
\end{equation}

\begin{equation}
    \begin{aligned}
        &\mathbf{T}_{r \to l} \\
        &= \mathbf{Ad}_{\hat{\boldsymbol{\chi}}^+} \mathbf{Ad}_{\hat{\boldsymbol{\chi}}^-}^{-1} \\
        &= \left [ 
        \begin{array}{ccc}
             \hat{\mathbf{C}}_b^{e+} \hat{\mathbf{C}}_e^{b-} & \mathbf{0}_3 & \mathbf{0}_3 \\
             (\hat{\overline{\mathbf{v}}}^{e+} \times \hat{\mathbf{C}}_b^{e+} \hat{\mathbf{C}}_e^{b-} \\- \hat{\mathbf{C}}_b^{e+} \hat{\mathbf{C}}_e^{b-} \hat{\overline{\mathbf{v}}}^{e-} \times) & \hat{\mathbf{C}}_b^{e+} \hat{\mathbf{C}}_e^{b-} & \mathbf{0}_3 \\
             (\hat{\mathbf{r}}^{e+} \times \hat{\mathbf{C}}_b^{e+} \hat{\mathbf{C}}_e^{b-} \\- \hat{\mathbf{C}}_b^{e+} \hat{\mathbf{C}}_e^{b-} \hat{\mathbf{r}}^{e-} \times) & \mathbf{0}_3 & \hat{\mathbf{C}}_b^{e+} \hat{\mathbf{C}}_e^{b-}
        \end{array}
        \right].
    \end{aligned}
\end{equation}
Further derivation of the above equations reveals that they are all related to the estimated error states.



\bibliographystyle{IEEEtran}
\bibliography{ref/refs}

\end{document}